\documentclass[12pt]{article}

\usepackage[utf8]{inputenc}

\usepackage[
letterpaper,
margin=1in,
includehead,
includefoot
]{geometry}

\usepackage{setspace}
\onehalfspacing
\usepackage{amsmath}
\usepackage{amssymb}
\usepackage{mathtools}
\usepackage[title]{appendix}
\usepackage{graphicx,psfrag,epsf}
\usepackage{enumerate}
\usepackage{natbib}
\usepackage{bm}
\usepackage{amsthm}
\usepackage{xcolor}
\usepackage{url}
\usepackage[pdfencoding=auto]{hyperref}
\usepackage{comment}

\newcommand{\blind}{1}

\newcommand{\tr}{\mathrm{tr}}
\allowdisplaybreaks[2]
\newcommand{\E}{\mathbb E}

\newtheorem{theorem}{Theorem}
\newtheorem{claim}{Claim}
\newtheorem{lemma}{Lemma}
\newtheorem{definition}{Definition}
\newtheorem{corollary}{Corollary}
\newtheorem{proposition}{Proposition}
\newtheorem*{theorem*}{Theorem}
\newtheorem{case*}{Case}
\newtheorem{case**}{Case}
\newtheorem{assumption}{Assumption}[section]

\newtheorem{assumptionprime}{Assumption}[section]

\newtheorem{assumptionprimeprime}{Assumption}[section]

\newtheorem{caseprime}{Case}
\newtheorem{caseprimeprime}{Case}

\newtheorem{remark}{Remark}
\begin{document}

	\if1\blind
	{
		\title{\bf Global identification of dynamic panel models with interactive effects}
		\author{Jushan Bai and Pablo Mones\thanks{
				Bai: Department of Economics,
				Columbia University, jb3064@columbia.edu;\\ \indent
				\hspace{0.3em}Mones: Department of Economics,
				Columbia University, pm3257@columbia.edu.\\
				We are grateful to conference participants at the NBER-NSF Time Series Conference, the International Panel Data Conference, the Annual Conference of the IAAE, the RCEA International Conference on Economics, Econometrics and Finance, and the New York Camp Econometrics, as well as seminar participants at Columbia University, Universidad de Montevideo, and Universidad de la Rep\'ublica for their insightful comments and suggestions. All remaining errors are our own. Pablo Mones thanks the Program for Economic Research (PER) at Columbia University for financial support.}
		}
		\maketitle
	} \fi
	
	\if0\blind
	{
		\bigskip
		\bigskip
		\bigskip
		\begin{center}
			{\LARGE\bf Global identification of dynamic panel models with interactive effects}
		\end{center}
		\medskip
	} \fi

	\bigskip
	\begin{abstract}
We investigate the problem of global identification in dynamic panel models with interactive effects, in the large-$N$, fixed-$T$ setting. While local identification, typically established via the Jacobian matrix, is well understood, global identification has remained a more elusive and challenging issue. It is commonly believed to be unachievable in this context. However, we demonstrate that the model is, in fact, globally identified for almost all configurations of the factors. Our analysis also covers models with additive fixed effects, including unit-root cases in which previous studies have reported non-identification from differenced moments. We show that, even in these settings, the level covariance structure delivers global identification.
	\end{abstract}
	
	\noindent%
	{\it Keywords: }
	Dynamic panel models, Global identification, Factor models, Panel data, Interactive effects
	
	\noindent%
	{\it JEL Codes: } C18, C23, C33.
	
	\vfill
	
	\newpage
	\onehalfspacing
	\section{Introduction}\label{sec1}

Consider the dynamic panel data model with interactive effects
\begin{equation}\label{eq1:dgp}
    y_{it} = \alpha y_{it-1}+\delta_t+\lambda_i'f_t+\varepsilon_{it},
\end{equation}
where $i=1,2,\ldots,N$ and $t=1,2,\ldots,T$. Here, $y_{it}$ is the outcome
variable, $\alpha$ is the autoregressive parameter, and $\delta_t$ captures
time-specific fixed effects. The vector $f_t$ is $\bar r\times 1$ and contains
the values of the $\bar r$ latent factors at time $t$, while $\lambda_i$ is the
corresponding $\bar r\times 1$ vector of factor loadings for individual $i$.
The term $\varepsilon_{it}$ is an idiosyncratic error term. In this model,
$y_{it}$ is the only observable variable.

This framework provides a flexible way to model unobserved common trends and heterogeneous responses. The model allows the common shocks (modeled by $f_t$) to have a heterogeneous effect across individuals (modeled by $\lambda_i$) on the outcome variable $y_{it}$. A notable special case is the standard additive fixed-effects model, where $y_{it} = \delta_t + \gamma_i + \varepsilon_{it}$, which corresponds to $\alpha=0$ and a single constant factor $f_t=1$. Thus, this model generalizes the additive framework by allowing for richer dynamics and heterogeneity in the factor structure. Parameter estimation in this model can be conducted via quasi-maximum-likelihood based methods as in \cite{hayakawa2023short} and \cite{bai2024likelihood}.

This paper studies the problem of global identification of structural parameters when the number of cross-sectional units $(N)$ is large, but the time dimension $(T)$ is fixed. This case is particularly relevant in applied economics because many panel datasets, such as those involving countries, firms, or households, span a large number of cross-sectional units but only cover a limited time horizon. For instance, macroeconomic panels often include many countries but only a few decades of annual data, making the short-$T$, large-$N$ framework empirically realistic.

Following \cite{rothenberg1971identification}, we say that a parameter point $\theta^0$ in the parameter space $\Theta$ is globally identifiable if there is no other $\tilde{\theta}\in\Theta$ that is observationally equivalent. Establishing identifiability is crucial, as it is a prerequisite for estimation procedures to have a meaningful population target. Despite the apparent simplicity of the model, demonstrating parameter identifiability is a nontrivial task. The seminal work of \cite{anderson1956statistical} established sufficient conditions for identification in the absence of the autoregressive term ($\alpha=0$). In this same setting, \cite{bekker1997generic} showed that the parameters of the static factor model are generically globally identified (i.e. globally identified except for a subset of the parameter space of Lebesgue measure zero) when $\bar r$ is strictly below the \cite{ledermann1937rank} bound, up to the usual orthogonal $\bar r\times \bar r$ rotation matrix. This generic formulation is the standard benchmark in factor analysis and,
by extension, in interactive-effects models, where identification results
typically rule out lower-dimensional exceptional configurations.

To the best of our knowledge, no existing study has established the global identifiability of the parameters in the model described by \eqref{eq1:dgp}. \cite{bai2024likelihood} demonstrates that the parameters of this model are locally identified when $\bar{r}$ does not exceed the modified Ledermann bound. Relatedly, \cite{hayakawa2023short}, working in a transformed likelihood framework, argues that global identification cannot be guaranteed in general in dynamic panel data models with interactive effects.
	
We show that, under standard assumptions from the factor-analysis literature, the parameters of the model are globally identifiable when $T\geq 2(\bar r+1)$ for Lebesgue-almost every configuration of the factor realizations $(f_1,\ldots,f_T)$. The genericity requirement is only with respect to the factors: fixing any admissible value of the autoregressive coefficient and of the remaining parameters, global identification can fail only for a Lebesgue-null set of factor configurations.\footnote{This is a stronger result than generic identification in the full parameter space, which requires the set of unidentified parameter values to have Lebesgue measure zero jointly in all parameters. Here, by contrast, the non-factor parameters are arbitrary: for each admissible value of those parameters, identification may fail only on a Lebesgue-null set of factor configurations. The null set need not be the same for all values of the non-factor parameters.} Consequently, if the factors $f_t$ are viewed as realizations from an absolutely continuous stochastic process, then, conditional on any admissible values of the remaining parameters, the probability that global identification fails is zero. In this auxiliary probabilistic interpretation, we say that the parameters are almost surely globally identified.

Identification is achieved from the first two population moments of the outcome. Since the Gaussian quasi-likelihood depends on the model only through these moments, injectivity of the moment map implies uniqueness of the population Gaussian quasi-likelihood criterion, and hence almost sure Gaussian QML identification. This result is of particular importance, as it establishes a one-to-one correspondence between the population moments and the model parameters, despite the high degree of nonlinearity of the problem.

This paper contributes to the literature on global identification and panel data models. The question of global identification in parametric models has been the focus of extensive research (see, for example, \cite{shapiro1985identifiability}, \cite{komunjer2011dynamic}, \cite{komunjer2012global}, \cite{gorgens2019moment}, and \cite{kociecki2023solution}). A comprehensive review of the literature on identification is provided in \cite{lewbel2019identification}.

Panel data models have been a core focus in econometrics for many years, with extensive study and application over the past few decades. Notable monographs and textbooks on panel data include works by \cite{arellano2003panel}, \cite{baltagi2008econometric}, \cite{hsiao2022analysis}, and \cite{wooldridge2010econometric}, among others. Moreover, the study of panel data models with interactive fixed effects has gained notable attention during recent years.
 While there is an extensive literature in this area, some of the most noteworthy papers include \cite{holtz1988estimating}, \cite{ahn2001gmm}, \cite{pesaran2006estimation}, \cite{bai2009panel}, and \cite{ahn2013panel}. More recently, there has been significant research on dynamic panel models with interactive effects, including studies by \cite{moon2017dynamic}, \cite{hayakawa2023short}, and \cite{bai2024likelihood}, to name a few. However, these studies have not established the global identifiability of the model parameters for the short-$T$ case, a gap this paper aims to fill.

We also argue that the techniques developed can be used to prove identification in a broad class of dynamic models featuring time, individual, and interactive effects. For example, we show that in a dynamic panel with individual effects (\cite{arellano1991some}), the autoregressive coefficient can be globally identified (in the sense of \cite{rothenberg1971identification}) when $T\geq 5$ even when $\alpha=1$, a case previously reported to be non-identifiable (see \cite{blundell1998initial}, \cite{alvarez2022robust}, and \cite{sentana2024finite}). The comparison highlights that the unit-root identification problem is moment-set specific. While conventional moment conditions based on differenced data may fail to identify $\alpha$, the covariance restrictions embedded in the level representation contain additional information sufficient for global identification.

The remainder of the paper is organized as follows: Section \ref{sec2} formalizes the identification problem in our context. Section \ref{sec3} derives intermediate results and provides the main proof of identification under the stated assumptions. Section \ref{sec4} discusses identification in the presence of individual fixed effects. Finally, Section \ref{sec5} provides some concluding remarks and discusses potential extensions of the paper.
	
\section{Framework}\label{sec2}

\subsection{The Model in Matrix Notation}\label{sec2.1}

Before formally defining global identification in this context, it is useful to express the model in matrix notation. Suppose we observe $y_i = [y_{i1},y_{i2},\dots, y_{iT}]'$. We begin by projecting the initial observation $y_{i1}$ onto $[1,\lambda_{i}]$, which gives
\[y_{i1} = \delta_1^*+\lambda_{i}'f_1^*+\varepsilon_{i1}^*\]	
where $\delta_1^*$ and $f_1^*$ are the projection coefficients, and $\varepsilon_{i1}^*$ is the projection residual. This step should be understood as a reduced-form treatment of the initial condition. Let
\[
\varepsilon_i^*
=
(\varepsilon_{i1}^*,\varepsilon_{i2},\ldots,\varepsilon_{iT})'.
\]
A key requirement is that $\varepsilon_i^*$ has diagonal covariance matrix. The following martingale differences assumption on the structural error $\varepsilon_{it}$ will guarantee
this diagonality.
Let $
\mathbf F=(f_1,\ldots,f_T)'$. 
 Define the initial filtration ($\sigma$-field)\footnote{We could also include $\delta=(\delta_1,\dots,\delta_T)'$ in the filtration. Since $\delta$ is modeled as a vector of fixed constants, and more importantly, the sample covariance is invariant to $\delta$, it is omitted from the filtration.}
\[
\mathcal F_{i0}
=
\sigma(\lambda_i,y_{i0},\mathbf F).
\]
For $t\geq 1$, define
\[
\mathcal F_{it}
=
\sigma(\lambda_i,y_{i0},\mathbf F,\varepsilon_{i1},\ldots,\varepsilon_{it}).
\]
Assume that the structural innovations satisfy the martingale-difference
condition (MDC)
\begin{equation}\label{eq:mdc}
\E(\varepsilon_{it}\mid \mathcal F_{i,t-1})=0,
\qquad
t=1,\ldots,T.
\end{equation}
Then 
\[
\E(\varepsilon_i^*\varepsilon_i^{*'})
=
\operatorname{diag}
\left(
\E[(\varepsilon_{i1}^*)^2],
\E(\varepsilon_{i2}^2),
\ldots,
\E(\varepsilon_{iT}^2)
\right).
\]
To see this, we first argue $\varepsilon_{i1}^*$ is in $\mathcal F_{i1}$. 
The first observed period satisfies
\[
y_{i1}
=
\delta_1+\alpha y_{i0}+\lambda_i'f_1+\varepsilon_{i1}.
\]
Thus $y_{i1}\in \mathcal F_{i1}$. Next from $\varepsilon_{i1}^*= y_{i1} -\delta_1^*-\lambda_{i}'f_1^*$, 
since $y_{i1}$ and $\lambda_i$ are in $\mathcal F_{i1}$, and $\delta_1^*$ and $f_1^*$ are projection coefficients,
we have $\varepsilon_{i1}^*\in \mathcal F_{i1}$. 
Since $\mathcal F_{i1}\subseteq \mathcal F_{i,t-1}$ for $t\ge2$, the MDC implies
\[
\E(\varepsilon_{i1}^*\varepsilon_{it})
=
\E\left[
\varepsilon_{i1}^*\E(\varepsilon_{it}\mid \mathcal F_{i,1})
\right]
=
0.
\]
For $2\leq s<t\leq T$, $\varepsilon_{is}$ is
$\mathcal F_{i,t-1}$-measurable. Therefore
\[
\E(\varepsilon_{is}\varepsilon_{it})
=
\E\left[
\varepsilon_{is}\E(\varepsilon_{it}\mid \mathcal F_{i,t-1})
\right]
=
0.
\]
This proves that $\varepsilon_i^*$ has a diagonal covariance matrix. 

Since $\delta_1^*$ and $f_1^*$ are free parameters, and $\varepsilon_{it}$ in the model are not required to have identical distributions (especially, no homoskedasticity assumption), we can simply relabel $\delta_1^*, f_1^*$, and $\varepsilon_{i1}^*$ as $\delta_1$, $f_1$, and $\varepsilon_{i1}$, respectively, to simplify notation.
Thus we can rewrite
$y_{i1} = \delta_1+\lambda_{i}'f_1 +\varepsilon_{i1}$.

Next, stacking observations over $t$, we obtain
\[
    \mathbf{B} y_i = \delta + \mathbf{F}\lambda_i + \varepsilon_i,
\]
where $y_i=(y_{i1},y_{i2},\ldots,y_{iT})'$ is a $T\times 1$ vector of
individual observations, $\delta=(\delta_1,\ldots,\delta_T)'$ is a $T\times 1$
vector of time fixed effects, $\mathbf{F}=(f_1,\ldots,f_T)'$ is a
$T\times \bar r$ matrix containing the $\bar r$ factors, and
$\varepsilon_i=(\varepsilon_{i1},\varepsilon_{i2},\ldots,\varepsilon_{iT})'$
is a $T\times 1$ vector of idiosyncratic errors. The loading vector
$\lambda_i=(\lambda_{i1},\lambda_{i2},\ldots,\lambda_{i\bar r})'$ is assumed
to satisfy $\E(\lambda_i)=0$ when it is random, without loss of generality. If
$\lambda_i$ is random and $\E(\lambda_i)\neq 0$, its mean can be absorbed into
the time fixed effects by redefining
$\delta_t^\star=\delta_t+\E(\lambda_i)'f_t$ and
$\lambda_i^\star=\lambda_i-\E(\lambda_i)$. One can then work with the
observationally equivalent pair $(\delta_t^\star,\lambda_i^\star)$ instead of
$(\delta_t,\lambda_i)$. The case of deterministic loadings can also be
accommodated; see Remark~1 of \cite{bai2024likelihood}.

Furthermore, $\mathbf{B}$ is a nonsingular lower-triangular $T\times T$ matrix, with inverse given by $\mathbf{\Gamma}\equiv\mathbf{B}^{-1}$.
\begin{small}\[\mathbf{B}= \begin{bmatrix}
		1 & 0 & 0 & \dots  & 0 \\
		-\alpha & 1 & 0 & \dots  & 0 \\
		0 & -\alpha & 1& \dots  & 0 \\
		\vdots & \vdots &\ddots& \ddots & \vdots \\
		0 & \dots &0& -\alpha  & 1
	\end{bmatrix}, \qquad  \mathbf{\Gamma} = \begin{bmatrix}
		1 & 0 & 0&\dots  & 0 \\
		\alpha & 1 & 0&\dots  & 0 \\
		\alpha^2 & \alpha & 1 & \dots  & 0 \\
		\vdots & \vdots & \ddots & \ddots& \vdots \\
		\alpha^{T-1} & \alpha^{T-2} & \dots & \alpha  & 1
	\end{bmatrix}\] \end{small}
Pre-multiplying by $\mathbf{\Gamma}$ allows us to rewrite the model as
\begin{equation}\label{eq2:dgp_mat}
	y_{i} = \mathbf{\Gamma} \delta + \mathbf{\Gamma} \mathbf{F} \lambda_i + \mathbf{\Gamma}\varepsilon_i
\end{equation}
The error vector $\varepsilon_i$ has a diagonal covariance matrix denoted by $\mathbf{D} =\text{diag}(d_1,d_2,\dots,d_T)$. After transformation, the idiosyncratic error $\mathbf{\Gamma}\varepsilon_i$ has covariance matrix $\mathbf{\Gamma}\mathbf{D}\mathbf{\Gamma}'$, which is no longer diagonal. Let
\[
\mathbf{\Psi}_{N} = \frac{1}{N-1}\sum_{i=1}^{N}(\lambda_{i}-\bar{\lambda})(\lambda_{i}-\bar{\lambda})'
\]
denote the $\bar r\times \bar r$ sample covariance matrix of the loadings, where
$\bar{\lambda}=N^{-1}\sum_{i=1}^{N}\lambda_i$.
Introducing $\mathbf\Psi_N$ allows us to accommodate both random and
nonrandom factor loadings in a unified way.
We assume that
$\mathbf{\Psi}_{N}\to\mathbf{\Psi}$ as $N\to\infty$, with convergence
in probability when the loadings are random and deterministic convergence when
they are treated as fixed constants. Thus, $\mathbf{\Psi}$ is the limiting
cross-sectional covariance matrix of the factor loadings. Similarly, define
\[
\mathbf{S}_N=\frac{1}{N-1}\sum_{i=1}^{N}(y_i-\bar{y})(y_i-\bar{y})',
\]
where $\bar{y}=N^{-1}\sum_{i=1}^{N}y_i$. Following \cite{bai2024likelihood}, the likelihood is written in terms of the realized cross-sectional covariance of the loadings, $\mathbf{\Psi}_{N}$, rather than the individual loadings themselves (which are not identifiable in a short $T$ panel). Let
\[ \theta_N^0 \equiv \left(\alpha,\mathbf F,\mathbf\Psi_N,d_1,\ldots,d_T\right) \quad\text{and}\quad \theta^0 \equiv \left(\alpha,\mathbf F,\mathbf\Psi,d_1,\ldots,d_T\right).\]
Thus, $\theta_N^0$ denotes the finite-$N$ second-moment parameter vector, which uses the realized cross-sectional covariance matrix $\mathbf\Psi_N$, whereas $\theta^0$ denotes its population limit.
The time fixed effects $\delta$ are not included in these parameter vectors because they do not enter the cross-sectional covariance matrix.

Expectations below are taken with respect to the idiosyncratic errors, treating the realized factors and loadings as given. When the loadings are random, the same expressions are understood conditionally on their realization.
\[
\E[\mathbf{S}_N]
=
\mathbf{\Gamma}(\mathbf F\mathbf{\Psi}_{N}\mathbf F' + \mathbf D)\mathbf{\Gamma}'
\equiv \mathbf{\Sigma}(\theta^{0}_{N}).
\]
Under $\mathbf{\Psi}_N\to\mathbf{\Psi}$,
\begin{equation}\label{eq3:Sigma}
\mathbf{\Sigma}(\theta^{0}_{N})\to
	\mathbf{\Sigma}(\theta^0)
	\equiv
	\mathbf{\Gamma}(\mathbf F\mathbf{\Psi}\mathbf F' + \mathbf D)\mathbf{\Gamma}',
\end{equation}
where $\mathbf{\Sigma}(\theta^0)$ is a $T\times T$ symmetric positive definite (PD) matrix, $\mathbf{\Psi}$ is $\bar r\times\bar r$ symmetric PD, $\mathbf F\mathbf{\Psi}\mathbf F'$ is $T\times T$ symmetric
positive semidefinite (PSD) with rank $\bar r$, and $\mathbf D$ is a $T\times T$
diagonal PD matrix. Also, define:
\[
\mathbf{\Omega}=\mathbf{F}\mathbf{\Psi} \mathbf{F}' + \mathbf{D}
\]
as a $T\times T$ symmetric PD matrix corresponding to the expected value of the sample covariance matrix of $y_i$ in a model without an autoregressive
term $\alpha=0$ (\cite{anderson1956statistical}).

\subsection{Global identification of the model}\label{sec2.2}

This paper studies global identification through the Gaussian quasi-log-likelihood. Since the population criterion depends on the structural parameters only through the expected sample
mean and the limiting expected sample covariance of the outcome, identification reduces to showing that these two population objects uniquely determine the parameters. Without loss of generality, we center the factor loadings by absorbing their cross-sectional mean into the time effects. Under this normalization, define $\delta^\dag\equiv\mathbf{\Gamma}\delta$.\footnote{Once $\alpha$ is identified, we can use $\E[\bar{y}]$ to identify the correspondingly normalized time effects $\delta$.} The Gaussian quasi-log-likelihood function, evaluated at a candidate parameter $\theta$, is
\[-\frac N 2 \ln |\mathbf{\Sigma}(\theta)| -\frac 1 2 \sum_{i=1}^N (y_i -\delta^\dag)' \mathbf{\Sigma}(\theta)^{-1} (y_i-\delta^\dag)\]
Concentrating out $\delta^\dag$, which is estimated by $\bar y$, and dividing by $N$, we obtain\footnote{In the MLE, the definition of $\mathbf{S}_N $ needs a slight modification, namely, $1/(N-1)$ is replaced by $1/N$.}
\[\frac 1 N \ell_N(\theta) =-\frac 1 2 \ln |\mathbf{\Sigma}(\theta)| -\frac 1 2 \tr[ \mathbf{\Sigma}(\theta)^{-1} \mathbf{S}_N]\]
The concentrated criterion is equivalent to Stein's loss between the sample covariance matrix $\mathbf{S}_N$ and the model-implied covariance matrix $\mathbf{\Sigma}(\theta)$. This loss-function interpretation does not require normality, nor does it require the factor loadings to be random (see \cite{bai2024likelihood} for a discussion). That is, one can directly work with this loss function.

For a fixed $T$ and as $N\to\infty$, by the law of large numbers, $ \mathbf S_N-\mathbf\Sigma(\theta_N^0)\overset{p}{\to}0$.
Since $\mathbf\Sigma(\theta_N^0)\to\mathbf\Sigma(\theta^0)$, and hence $\mathbf S_N\overset{p}{\to}\mathbf\Sigma(\theta^0)$.
Therefore,
\begin{equation}\label{eq4:QML}
	\frac{2}{N}\ell_N(\theta) \overset{p}{\to} -\ln\vert \mathbf{\Sigma}(\theta)\vert - \tr\left[\mathbf{\Sigma}(\theta^0)\mathbf{\Sigma}(\theta)^{-1}\right].
\end{equation}
Furthermore, it is known (e.g. Example 3.23 of \cite{boyd2004convex}) that this function has a unique maximizer at $\mathbf{\Sigma}(\theta)=\mathbf{\Sigma}(\theta^0)$. While this result ensures that $\theta^0$ is a global maximizer of the objective function, it does not immediately establish uniqueness---hence, global identification---because it does not preclude the possibility that other parameter values $\tilde{\theta}\neq\theta^0$ satisfy $\mathbf{\Sigma}(\tilde{\theta})=\mathbf{\Sigma}(\theta^0)$.
In the following sections, we investigate whether any alternative parameter set $\tilde{\theta}\neq \theta^0$ satisfies $\mathbf{\Sigma}(\tilde{\theta})=\mathbf{\Sigma}(\theta^0)$. We demonstrate that, almost surely, no such alternative parameterization exists. Consequently, the Gaussian quasi-log-likelihood population function \eqref{eq4:QML} is (almost surely) uniquely maximized at the true parameter value $\theta^0$, establishing identification.

To prove this claim, consider an alternative parameterization: $\tilde{\mathbf{F}} = \left[\tilde{f}_{1},\tilde{f}_{2}, \dots, \tilde{f}_{T}\right]'$ a $T\times \bar{r}$ matrix, $\tilde{\mathbf{\Psi}}$ a symmetric PD $\bar{r}\times\bar{r}$ matrix, $\tilde{\mathbf{D}} = \text{diag}(\tilde{d}_1,\dots \tilde{d}_{T})$ a diagonal PD $T\times T$ matrix, and $\tilde{\mathbf{\Gamma}}$ the analogue of $\mathbf{\Gamma}$ under an alternative parameter $\tilde{\alpha}$. Note that $\tilde{\mathbf{F}} \tilde{\mathbf{\Psi}} \tilde{\mathbf{F}}'$ must be a symmetric and PSD $T\times T$ matrix of rank $\bar{r}$.

We say the model is globally identified at the true parameter value $\theta^0$ if every admissible parameterization satisfying $\mathbf{\Sigma}(\tilde{\theta})=\mathbf{\Sigma}(\theta^0)$ also satisfies $\tilde{\theta}=\theta^0$. Equivalently, establishing global identification requires showing that any admissible solution to
\begin{equation}\label{eq5:objective}
	\mathbf{\Gamma}(\mathbf{F}\mathbf{\Psi}\mathbf{F}'+\mathbf{D})\mathbf{\Gamma}' = \tilde{\mathbf{\Gamma}}(\tilde{\mathbf{F}} \tilde{\mathbf{\Psi}} \tilde{\mathbf{F}}'+\tilde{\mathbf{D}})\tilde{\mathbf{\Gamma}}'
\end{equation}
must satisfy $\tilde{\alpha}=\alpha$, $\tilde{\mathbf D}=\mathbf D$, and $\tilde{\mathbf F}\tilde{\mathbf\Psi}\tilde{\mathbf F}'=\mathbf F\mathbf\Psi\mathbf F'$.
For ease of exposition, we write this as
$\tilde{\theta}=\theta^0$. It is important to keep in mind that the matrices $\mathbf{F}$ and $\mathbf{\Psi}$ are not separately identified without restrictions. In particular, it is well-known (see Section 5 of \cite{anderson1956statistical}) that factor models (and thus interactive effects) are identifiable up to a $\bar{r}\times\bar{r}$ matrix rotation, so $\bar{r}^2$ restrictions must be imposed for identification. Once these restrictions are imposed, both $\mathbf{F}$ and $\mathbf{\Psi}$ are separately identified.

\subsection{Assumptions, definitions, and useful results}

The following assumptions are imposed throughout the paper:

\begin{assumption}\label{as:1}
	The number of distinct factors is known and equal to $\bar{r}\geq 1$.
\end{assumption}

\begin{assumption}\label{as:2}
	$\mathbf{\Psi}$ is an unrestricted symmetric positive definite matrix, and
	$\mathbf{F}=(\mathbf{I}_{\bar r},\mathbf{F}_2')'$, where $\mathbf{F}_2$
	is an unrestricted $(T-\bar r)\times\bar r$ matrix.
\end{assumption}

\begin{assumption}\label{as:3}
	
	$T \geq 2(\bar{r}+1)$.
	
\end{assumption}

\begin{assumption}\label{as:4}

For each $t$, the errors $\varepsilon_{it}$ in model \eqref{eq1:dgp} are
i.i.d.\ across $i$. For each $i$, they satisfy the MDC over $t$ with respect to
the filtration $\mathcal F_{it}$ defined above, and
$\mathbb V(\varepsilon_{it})=d_t>0$.

\end{assumption}

\begin{assumption}\label{as:5}
	The factor loadings $\lambda_{i}$ are either:
	\begin{enumerate}
		\item Random coefficients satisfying $\E[\lambda_i]=0$ and $\mathbf{\Psi}_{N} \xrightarrow{p} \mathbf{\Psi}$ as $N\to\infty$.
		\item Deterministic coefficients satisfying $\mathbf{\Psi}_{N} \to \mathbf{\Psi}$ as $N \to \infty$.
	\end{enumerate}
\end{assumption}

\begin{assumption}\label{as:6} The unrestricted factor coordinates $\operatorname{vec}(\mathbf F_2)$ are generated from an absolutely continuous distribution with respect to Lebesgue measure on $\mathbb R^{(T-\bar r)\bar r}$.
\end{assumption}

\noindent Given Assumptions \ref{as:1}-\ref{as:5}, we introduce the identification concept used in Section \ref{sec3}. Formally, we say that the parameters of the model are globally identified for almost all factor configurations if for any fixed admissible value of $(\alpha,\delta,\mathbf{D},\mathbf{\Psi})$ global identification (as defined in Subsection \ref{sec2.2}) holds except for a Lebesgue-zero subset of the unrestricted factors $\mathbf{F}_{2}$.

If Assumption \ref{as:6} is additionally imposed, this statement translates into almost sure identification, where the almost sure qualifier is with respect to the distribution of $\mathbf{F}_{2}$.\footnote{This convention in language will be maintained throughout the paper.} This means that if one is willing to interpret the unrestricted factors as the realization of some arbitrary stochastic process drawn from an absolutely continuous distribution with respect to the Lebesgue measure in $\mathbb{R}^{(T-\bar{r})\bar{r}}$, the probability that the factors lie in the potentially underidentified region is zero. The plausible practical relevance of this probability-one statement depends on the context (see Section \ref{sec4}).

Strictly speaking, the paper shows global identification for almost all factor configurations for any fixed admissible value of $(\alpha,\delta,\mathbf{D},\mathbf{\Psi})$, which is stronger than generic identification (which requires a genericity argument over the whole parameter space $\Theta$) and almost sure identification (which additionally requires factors to be drawn from an absolutely continuous distribution). However, we present the results as almost sure identification, because this concept conveys the economic intuition of the result in a clearer way.

Remark \ref{rem:1} clarifies the role of Assumption \ref{as:6} and the scope of the identification results established in this paper.

\begin{remark}\label{rem:1}
For identification purposes, the factors $\mathbf{F}$ are treated throughout as fixed population parameters. Assumption \ref{as:6} does not give the factors a structural random interpretation. Rather, it specifies the measure with respect to which the genericity statement in the identification theorem is made.

More precisely, the argument in Section \ref{sec3} shows that, fixing any admissible values of the non-factor parameters, global identification is guaranteed for Lebesgue-almost every admissible configuration of the unrestricted factors $\mathbf{F}_{2}$. Hence, any failure of the identification argument can occur only on a Lebesgue-null subset of the factor-configuration space.

The probabilistic wording is only an equivalent formulation of this deterministic genericity statement. If the unrestricted factor coordinates are viewed as realizations from an absolutely continuous distribution, then the exceptional set has probability zero. In this sense, the parameters are almost surely globally identified. Without this probabilistic interpretation (i.e., after removing Assumption \ref{as:6}), the result should be read as follows: for each fixed admissible configuration of the remaining parameters, global identification holds for Lebesgue-almost every configuration of the unrestricted factors. Thus the genericity statement is imposed on the factor configuration, rather than on the full parameter vector, although the exceptional null set of factor configurations may depend on the fixed values of the remaining parameters.
\end{remark}

Additionally, throughout the paper, we will reference some established results from linear algebra, probability theory and algebraic geometry. To ensure they are easily accessible to the reader, we present them as lemmas in this section.	

\begin{definition}
	Let $\mathbf{A}$ be a $T \times T$ matrix, and let $
	R = \{r_1, r_2, \dots, r_k\}, \;
	C = \{c_1, c_2, \dots, c_k\}$
	be index sets for rows and columns, respectively, with $|R| = |C| = k$.
	We write $\mathbf{M}_{R,C}^{\mathbf{A},k}$ for the $k \times k$ submatrix of $\mathbf{A}$ obtained by restricting $\mathbf{A}$ to the rows in $R$ and the columns in $C$.
	Its determinant, $\det\left(\mathbf{M}_{R,C}^{\mathbf{A},k}\right)$, is called the minor associated with $(R,C)$.
\end{definition}

Throughout the paper, the following convention applies:

\begin{remark}\label{rem:2}
	Whenever a submatrix $\mathbf{M}_{R,C}^{\mathbf{A},k}$ is formed, the rows and columns are listed in increasing order of their indices so that $r_1<\dots<r_k$ and $c_1<\dots<c_k$. This convention fixes the sign of $\det\left(\mathbf{M}_{R,C}^{\mathbf{A},k}\right)$. A different ordering of the same row or column set may change the determinant by a sign, but it does not affect whether the determinant is zero. Whenever an element is removed from $R$ or $C$, the remaining set is also listed in increasing order.
\end{remark}

Now consider the following results.

\begin{lemma}\label{lemma:1}
	Let $\mathbf{A}$ be a $T \times T$ matrix with rank $\bar r < T$. Then the determinant of every $k \times k$ submatrix of $\mathbf{A}$, with $k>\bar{r}$, is equal to zero.
\end{lemma}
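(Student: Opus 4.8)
The plan is to reduce the statement to the elementary fact that a square matrix whose order exceeds its rank must have vanishing determinant. The claim is, in essence, the standard characterization of rank as the size of the largest nonvanishing minor, specialized to the first order just above the rank; I would prove it directly rather than invoke that characterization wholesale, so as to keep the argument self-contained.

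First I would fix an arbitrary $(\bar{r}+1)\times(\bar{r}+1)$ submatrix of $\mathbf{A}$, say $\mathbf{A}_{I,J}$, obtained by selecting row indices $I=\{i_1,\dots,i_{\bar{r}+1}\}$ and column indices $J=\{j_1,\dots,j_{\bar{r}+1}\}$. The associated minor is precisely $\det(\mathbf{A}_{I,J})$, so it suffices to show $\det(\mathbf{A}_{I,J})=0$ for every such choice of $I$ and $J$.

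The key step is to argue that passing to a submatrix cannot increase the rank, so that $\mathrm{rank}(\mathbf{A}_{I,J})\le \mathrm{rank}(\mathbf{A})=\bar{r}$. I would establish this in two moves: deleting columns can only shrink the column space, and deleting rows is the same operation applied to the transpose (rank being invariant under transposition), so neither operation raises the rank. Since $\mathbf{A}_{I,J}$ is square of order $\bar{r}+1$ but has rank at most $\bar{r}<\bar{r}+1$, it is singular, whence $\det(\mathbf{A}_{I,J})=0$. As $I$ and $J$ were arbitrary, every $(\bar{r}+1)\times(\bar{r}+1)$ minor vanishes.

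An equivalent route, which I might prefer for its transparency, uses column dependence directly: because the column space of $\mathbf{A}$ has dimension $\bar{r}$, any $\bar{r}+1$ columns of $\mathbf{A}$ are linearly dependent, say $\sum_{k} a_k \mathbf{A}_{\cdot,j_k}=0$ with the scalars $a_k$ not all zero. Restricting each of these column vectors to the coordinates indexed by $I$ preserves the same linear relation, now among the columns of $\mathbf{A}_{I,J}$, so $\mathbf{A}_{I,J}$ has linearly dependent columns and hence zero determinant. I do not anticipate any genuine obstacle here; the only point requiring a careful line is the justification that selecting rows and columns does not increase rank (equivalently, that a linear dependence among the full columns survives restriction to a subset of coordinates), which both formulations dispatch immediately.
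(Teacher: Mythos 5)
Your proof is correct. The paper states Lemma \ref{lemma:1} without proof, as a standard fact from linear algebra, so there is no argument in the paper to compare against; your submatrix-rank argument (and the equivalent column-dependence formulation) is the standard, complete justification, and both of your routes dispatch the claim correctly.
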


\begin{lemma}\label{lemma:2}
	Let $\mathbf{X}$ be an absolutely continuous random vector in $\mathbb{R}^q$, and let $\Upsilon \subset \mathbb{R}^q$ be a set of Lebesgue measure zero. Then $P(\mathbf{X} \in \Upsilon)=0$.
\end{lemma}

\begin{lemma}\label{lemma:3}
	Let $p:\mathbb{R}^q \to \mathbb{R}$ be a nonzero polynomial. Then the zero set $
	\{x \in \mathbb{R}^q : p(x)=0\}$ has Lebesgue measure zero in $\mathbb{R}^q$.
\end{lemma}

See Theorem 4.5 of \cite{ersel2011linear}, Definition 4.1.1 of \cite{athreya2006measure}, and \cite{caron2005zero} for the respective references.

\section{Global identification}\label{sec3}

\subsection{Overview of the proof}

In this section, we demonstrate that, under Assumptions \ref{as:1}-\ref{as:6}, the parameters of the dynamic panel data model with interactive effects described in \eqref{eq1:dgp} are almost surely globally identified. The proof we propose draws from Theorem 5.1 of \cite{anderson1956statistical}. In that theorem, the authors establish a sufficient condition for identification of a factor model that did not include an autoregressive component. In their simpler model, global identification is obtained by showing that any admissible decomposition satisfying
\begin{equation}\label{eq6:AR}
	\mathbf{F}\mathbf{\Psi} \mathbf{F}' + \mathbf{D}
	=
	\tilde{\mathbf{F}} \tilde{\mathbf{\Psi}} \tilde{\mathbf{F}}'
	+\tilde{\mathbf{D}}
\end{equation}
must have $\tilde{\mathbf D}=\mathbf D$ and $\tilde{\mathbf F}\tilde{\mathbf\Psi}\tilde{\mathbf F}'=\mathbf F\mathbf\Psi\mathbf F'$. Their proof relies on the fact that the off-diagonal elements of $\tilde{\mathbf{F}} \tilde{\mathbf{\Psi}} \tilde{\mathbf{F}}'$ must match those of $\mathbf{F}\mathbf{\Psi} \mathbf{F}'$ for \eqref{eq6:AR} to hold, because both $\mathbf{D}\ \text{and} \ \tilde{\mathbf{D}}$ are diagonal matrices. Thus, the problem reduces to showing that the diagonal elements of these matrices are also equal. To establish this, they leverage the fact that all $(\bar r+1)\times(\bar r+1)$ minors of $\tilde{\mathbf F}\tilde{\mathbf\Psi}\tilde{\mathbf F}'$ must vanish for \eqref{eq6:AR} to hold. This follows because any admissible alternative decomposition must satisfy that
$\tilde{\mathbf F}\tilde{\mathbf\Psi}\tilde{\mathbf F}'$ also has rank
$\bar r$. Lemma \ref{lemma:1} then translates these rank restrictions into
restrictions on the relevant minors. Thus, since
\[\tilde{\mathbf{F}} \tilde{\mathbf{\Psi}} \tilde{\mathbf{F}}' = \mathbf{F}\mathbf{\Psi} \mathbf{F}' + \mathbf{D}-\tilde{\mathbf{D}}\]
this permits them to derive conditions that allow them to pin down the diagonal terms of $\tilde{\mathbf{F}} \tilde{\mathbf{\Psi}} \tilde{\mathbf{F}}'$, thereby establishing that $\tilde{\mathbf{F}} \tilde{\mathbf{\Psi}} \tilde{\mathbf{F}}'=\mathbf{F}\mathbf{\Psi} \mathbf{F}'$. Once this is established, it follows immediately that $\tilde{\mathbf{D}}=\mathbf{D}$. Also, note that once the rotation is fixed (see Assumption \ref{as:2}) $\tilde{\mathbf{F}} \tilde{\mathbf{\Psi}} \tilde{\mathbf{F}}'=\mathbf{F}\mathbf{\Psi} \mathbf{F}'$ implies $\tilde{\mathbf{F}}= \mathbf{F}$, and $\tilde{\mathbf{\Psi}}=\mathbf{\Psi}$.

In their approach, \cite{anderson1956statistical} focus exclusively on minors involving diagonal elements of the matrix. In contrast, our proof takes a different approach: we rearrange the terms of our original problem—Equation \eqref{eq5:objective}—in a way that allows us to exploit the information contained in the \textit{diagonal exclusion minors} (minors that do not include diagonal terms of the original matrix) of this transformed problem. This novel approach enables us to show that any admissible solution to \eqref{eq5:objective} must satisfy $\tilde{\alpha}=\alpha$ for almost all configurations of the unrestricted factors, thereby establishing almost sure global identification of $\alpha$.

Once we establish that $\alpha$ is almost surely globally identified, it follows that our identification condition can be rewritten in the form proposed by \cite{anderson1956statistical}, ensuring that all other parameters can also be almost surely globally identified. Thus, the primary focus of our analysis is to show that any admissible parameterization satisfying \eqref{eq5:objective} must have $\tilde{\alpha}=\alpha$. This task requires some intermediate results, which we articulate and prove throughout the paper. The main contribution of this work is to formally establish the following Theorem.

\begin{theorem}\label{thm:1} Let $y_{it}$ be generated from \eqref{eq1:dgp}. Then, under Assumptions \ref{as:1}-\ref{as:6}, $\alpha$ is almost surely globally identified.
\end{theorem}

Once this is shown, it is immediate to conclude that all the parameters of the model are almost surely globally identified as well, which is formally stated in the following theorem.

\begin{theorem}\label{thm:2} Let $y_{it}$ be generated by \eqref{eq1:dgp}. Then, under Assumptions \ref{as:1}--\ref{as:6}, $\{d_{t}\}_{t=1}^{T},\mathbf{\Psi}$, and $\mathbf{F}$ are almost surely globally identified.
\end{theorem}

\begin{remark}\label{rem:3}
	The qualifier ``almost surely'' is with respect to the distribution of $\mathbf{F}_2$. For any fixed $(\alpha,\delta,\mathbf{D},\mathbf{\Psi})$, identification holds for almost all realizations of $\mathbf{F}_2$. The practical implication is that as long as there is sufficient variation in $f_t$, both $\alpha$ and other parameters are globally identifiable. If Assumption \ref{as:6} is removed, the results can be interpreted as holding almost-everywhere in the unrestricted factor space for each fixed value of the remaining parameters.
\end{remark}

\subsection{Intermediate results}\label{sbs:32}

In order to prove Theorem \ref{thm:1}, we rearrange the problem in a way that resembles the proof of \cite{anderson1956statistical} and then establish some properties of this modified problem. Start from \eqref{eq5:objective}. As $\tilde{\mathbf{\Gamma}}$ is a unit-diagonal lower-triangular matrix, it is invertible so we can pre-multiply and post-multiply by $\tilde{\mathbf{\Gamma}}^{-1}$ and $\tilde{\mathbf{\Gamma}}^{-1'}$, and subtract $\tilde{\mathbf{D}}$ on both sides to rewrite
\begin{equation}\label{eq7:mod_objective}
	\left(\tilde{\mathbf{\Gamma}}^{-1}\mathbf{\Gamma}\right)\Bigl(\mathbf{F}\mathbf{\Psi}\mathbf{F}'+\mathbf{D}\Bigr)\left(\tilde{\mathbf{\Gamma}}^{-1}\mathbf{\Gamma}\right)'-\tilde{\mathbf{D}} = \tilde{\mathbf{F}} \tilde{\mathbf{\Psi}} \tilde{\mathbf{F}}'
\end{equation}
The product $\tilde{\mathbf{\Gamma}}^{-1}\mathbf{\Gamma}$ is a lower-triangular matrix that can be decomposed into
\[\tilde{\mathbf{\Gamma}}^{-1}\mathbf{\Gamma} = \mathbf{I}_T + (\alpha-\tilde{\alpha})\mathbf{L},\qquad \mathbf{L}=\begin{bmatrix}
	0 & 0 & 0 & \dots  & 0 \\
	1& 0 & 0 & \dots  & 0 \\
	\alpha & 1 & 0 & \dots &0\\
	\vdots & \vdots & \ddots & \ddots & \vdots \\
	\alpha^{T-2} &\alpha^{T-3}& \dots & 1  & 0
\end{bmatrix}\]
where $\mathbf{L}$ is a strictly lower-triangular Toeplitz matrix. To simplify notation, denote the left-hand side of \eqref{eq7:mod_objective} by
\begin{equation}\label{eq8:O}
	\mathbf{O} \equiv \left(\tilde{\mathbf{\Gamma}}^{-1}\mathbf{\Gamma}\right)\Bigl(\mathbf{F}\mathbf{\Psi}\mathbf{F}'+\mathbf{D}\Bigr)\left(\tilde{\mathbf{\Gamma}}^{-1}\mathbf{\Gamma}\right)'-\tilde{\mathbf{D}}
\end{equation}
We next establish some preliminary results.
\begin{proposition} \label{prop:1}
	Let $\mathbf{M}_{R,C}^{\mathbf{O},k}$ be any square submatrix of $\mathbf{O}$ of dimension $k\times k$, with $k>\bar{r}$. Then $\det\left(\mathbf{M}_{R,C}^{\mathbf{O},k}\right)=0$ is a necessary condition for \eqref{eq5:objective} to hold.
\end{proposition}
\begin{proof}
	We have shown that \eqref{eq5:objective} is equivalent to $\mathbf{O} = \tilde{\mathbf{F}} \tilde{\mathbf{\Psi}} \tilde{\mathbf{F}}'$. Furthermore, $\tilde{\mathbf{F}} \tilde{\mathbf{\Psi}} \tilde{\mathbf{F}}'$ is a matrix of rank $\bar{r}$ by assumption. Then, it must be the case that $\mathbf{O}$ is also a symmetric PSD matrix of rank $\bar{r}$ in order for \eqref{eq5:objective} to hold. By Lemma \ref{lemma:1} we know that if $\mathbf{O}$ is of rank $\bar{r}$ then all minors of $\mathbf{O}$ associated with submatrices of dimension $k>\bar{r}$ must be equal to zero.
\end{proof}
Furthermore, we can show that $\mathbf{O}$ can be decomposed in the following way.
\begin{claim} \label{claim:1}
	$\mathbf{O}$ can be decomposed into $\mathbf{O} = \mathbf{\Omega}-\tilde{\mathbf{D}}+(\alpha-\tilde{\alpha})\mathbf{J}(\tilde{\alpha}, \theta^0)$, where $\mathbf{J}(\tilde{\alpha}, \theta^0)$ is given by:
	\begin{equation}\label{eq9:J}
		\mathbf{J}(\tilde{\alpha},\theta^0) = \mathbf{L}\mathbf{\Omega}+\mathbf{\Omega}\mathbf{L}'+(\alpha-\tilde{\alpha})\mathbf{L}\mathbf{\Omega}\mathbf{L}'.
	\end{equation}
\end{claim}

The proof of this result consists only of standard matrix operations, and is available in the Online Appendix. It follows that the matrix $\mathbf{O}$ is such that its off-diagonal elements depend only on the true structural parameters $\theta^0$ and $\tilde{\alpha}$ (since $\tilde{\mathbf{D}}$ is diagonal). In particular, the off-diagonal elements of $\mathbf{O}$ can be written as follows:
\begin{corollary} \label{coro:1}
	Every off-diagonal element of $\mathbf{O}$ can be written as
	\[
	O_{r,c} = \Omega_{r,c} + (\alpha - \tilde{\alpha}) J_{r,c}, \qquad r \neq c,
	\]
	where $J_{r,c}$ denotes the $(r,c)$-th element of $\mathbf{J}(\tilde{\alpha}, \theta^0)$.
\end{corollary}
This result follows directly from the decomposition of $\mathbf{O}$ in Claim~\ref{claim:1} and the fact that $\tilde{\mathbf{D}}$ is diagonal, so $\tilde{D}_{r,c} = 0$ for $r \neq c$. Therefore, by looking at the off-diagonal elements of $\mathbf{O}$ we can derive conditions that impose restrictions on $\tilde{\alpha}$, which permits us to isolate the identification problem of $\alpha$ from the identification problem of the rest of the parameters. To do so, we use information contained in the \textit{diagonal exclusion minors} of $\mathbf{O}$, which are formally defined as follows:

\begin{definition}
	Let $\mathbf{A}$ be a $T \times T$ matrix. For index sets $R=\{r_1,\dots,r_k\}$ and $C=\{c_1,\dots,c_k\}$ with $|R|=|C|=k$ and $R \cap C=\emptyset$, let $\mathbf{M}_{R,C}^{\mathbf{A},k}$ denote the $k \times k$ submatrix of $\mathbf{A}$ formed by the rows in $R$ and the columns in $C$. We call $\mathbf{M}_{R,C}^{\mathbf{A},k}$ a \textit{diagonal exclusion submatrix}, and its determinant $\det\left(\mathbf{M}_{R,C}^{\mathbf{A},k}\right)$ a \textit{diagonal exclusion minor}.
\end{definition}

Since $|R|=|C|=k$ and $R \cap C=\emptyset$ then $2k \leq T$, or equivalently, $k \leq T/2$. We now derive properties of these minors that are helpful for identification of $\alpha$. The following propositions characterize some properties of diagonal exclusion minors in this problem. Proofs are available in Appendix \ref{Ap:A}.

\begin{proposition}\label{prop:2} Every diagonal exclusion minor of $\mathbf{O}$ associated with a $k\times k$ diagonal exclusion submatrix can be written as the sum of the corresponding diagonal exclusion minor for $\mathbf{\Omega}$
	and a term
	$(\alpha-\tilde{\alpha})\tilde{J}_{R,C}(\tilde{\alpha},\theta^0)$:
	\begin{equation}\label{eq10:theo3}
		\det\left(\mathbf{M}^{\mathbf{O},k}_{R,C}\right)
		=
		\det\left(\mathbf{M}^{\mathbf{\Omega},k}_{R,C}\right)
		+
		(\alpha-\tilde{\alpha})
		\tilde{J}_{R,C}\left(\tilde{\alpha},\theta^0\right)
	\end{equation}
	where $\tilde{J}_{R,C}(\tilde{\alpha},\theta^0)$ is a function of the true
	parameters $\theta^0$ and $\tilde{\alpha}$, defined for $k=1$ by
	\[
	\tilde J_{\{r\},\{c\}}(\tilde\alpha,\theta^0)
	=
	J_{r,c}(\tilde\alpha,\theta^0),
	\qquad r\neq c,
	\]
	and, for $k\geq2$, defined recursively as follows:
	\begin{equation}\label{eq11:Jtilde}
		\begin{split}
			\tilde{J}_{R,C}(\tilde{\alpha}, \theta^0)
			&=
			\sum_{j=1}^{k}(-1)^{k+j}
			\Bigl[
			J_{r_k,c_j}
			\det\left(\mathbf{M}^{\mathbf{\Omega},k-1}_{R-r_k,C-c_j}\right) \\
			&\qquad
			+
			\Omega_{r_k,c_j}
			\tilde{J}_{R-r_k,C-c_j}(\tilde{\alpha},\theta^0)
			+
			(\alpha-\tilde{\alpha})
			J_{r_k,c_j}
			\tilde{J}_{R-r_k,C-c_j}(\tilde{\alpha},\theta^0)
			\Bigr].
		\end{split}
	\end{equation}
\end{proposition}
	In \eqref{eq11:Jtilde}, $\mathbf{M}^{\mathbf{\Omega},k}_{R,C}$ is the diagonal exclusion submatrix of $\mathbf{\Omega}$ associated with the same rows and columns. $J_{r_k,c_j}$ is the $(r_k,c_j)$-th element of the matrix $\mathbf{J}(\tilde{\alpha}, \theta^0)$, and $\Omega_{r_k,c_j}$ is the $(r_k,c_j)$-th element of the matrix $\mathbf{\Omega}$.\footnote{$R-r_{k}$ means removal of $r_{k}$ from the set $R$, and $C-c_j$ means removal of $c_j$ from the set $C$. They correspond to the usual notation $R\setminus \{r_{k}\}$ and $C\setminus \{c_j\}$.} Note that the value of $\tilde{J}_{R,C}\left(\tilde{\alpha},\theta^0\right)$ depends on the selected rows and columns from the original matrix so in general, $\tilde{J}_{R,C}\left(\tilde{\alpha},\theta^0\right) \neq \tilde{J}_{R',C'}\left(\tilde{\alpha},\theta^0\right)$ for any other valid set of rows and columns $R'$ and $C'$.

\begin{proposition} \label{prop:3}
	Let $\mathbf{M}^{\mathbf{\Omega},k}_{R,C}$ be a diagonal exclusion submatrix of $\mathbf{\Omega}$ of dimension $k>\bar{r}$. Then $\det\left(\mathbf{M}^{\mathbf{\Omega},k}_{R,C}\right)=0$.
\end{proposition}

This result follows from the fact that $\mathbf{D}$ is diagonal and that $\mathbf{F\Psi F}'$ is of rank $\bar{r}<k$. It implies that when considering diagonal exclusion minors associated with submatrices of dimension $k>\bar r$ we only need to consider the term $(\alpha-\tilde\alpha)\tilde{J}_{R,C}\left(\tilde{\alpha},\theta^0\right)$. Moreover, we can show that $\tilde{J}_{R,C}\left(\tilde{\alpha},\theta^0\right)$ is a polynomial in $\tilde{\alpha}$. In particular, under the assumptions of the model:

\begin{proposition}\label{prop:4} Consider any diagonal exclusion minor of $\mathbf{O}$ associated with a submatrix of dimension $k$. Then, under Assumptions \ref{as:1}-\ref{as:6}, $\tilde{J}_{R,C}\left(\tilde{\alpha},\theta^0\right)$ is a polynomial in $\tilde{\alpha}$ whose degree is at most $2k-1$.
\end{proposition}

Moreover, when $k=\bar r+1$ we can also show that $\tilde{J}_{R,C}\left(\tilde{\alpha},\theta^0\right)$ is a nonzero polynomial in $\tilde\alpha$ except for a knife-edge case:

\begin{proposition}\label{prop:5} Consider any diagonal exclusion minor of $\mathbf{O}$ associated with a submatrix of dimension $\bar{r}+1$. Then, under Assumptions \ref{as:1}-\ref{as:6}, the following statements hold:
	
	\begin{enumerate}
		\item If either $
		\alpha\neq 0$ or $
		\min_{r\in R,\ c\in C}|r-c|=1$
		then
		$\tilde J_{R,C}(\tilde\alpha,\theta^0)$
		is a nonzero polynomial in $\tilde\alpha$ almost surely.
		
		\item If $
		\alpha=0 \;\text{and}\;
		\min_{r\in R,\ c\in C}|r-c|>1$
		then $\tilde J_{R,C}(\tilde\alpha,\theta^0)\equiv0$
		as a polynomial in $\tilde\alpha$.
	\end{enumerate}
	
\end{proposition}

Combining Propositions \ref{prop:1}-\ref{prop:5} we can write the determinant of every diagonal exclusion submatrix of $\mathbf{O}$ of dimension $\bar{r}+1$ as:
\[\det\left(\mathbf{M}^{\mathbf{O},\bar{r}+1}_{R,C}\right) = (\alpha-\tilde{\alpha})\tilde{J}_{R,C}\left(\tilde{\alpha},\theta^0\right)=0\]
where under Case 1 of Proposition \ref{prop:5} $\tilde{J}_{R,C}\left(\tilde{\alpha},\theta^0\right)$ is a nonzero polynomial in $\tilde{\alpha}$ of degree at most $2\bar{r}+1$. This creates a system of equations (the identity holds for all valid combinations of $R$ and $C$) that any $\tilde{\alpha}$ must satisfy in order to be a candidate to solve \eqref{eq5:objective}. These equations permit us to isolate the identification problem of $\alpha$ from the rest of the parameters, as we are able to derive conditions that only depend on $\tilde{\alpha}$ and $\theta^0$.

\subsection{Proof of Theorem \ref{thm:1}}\label{sbs:33}

In this Subsection, we build on the intermediate results derived in Subsection \ref{sbs:32} to prove Theorem \ref{thm:1}. Specifically, we demonstrate that any alternative $\tilde\theta$ solving \eqref{eq5:objective} must satisfy $\tilde{\alpha}=\alpha$, except for a set of factor realizations that occur with probability zero. We provide a formal proof of this result and then proceed with a detailed argument for the case of a single factor to illustrate the reasoning.

\begin{proof}
Assume $\alpha\neq0$.\footnote{The case of $\alpha=0$ will be discussed at the end of the proof.} Let $\mathcal T=\{1,\ldots,T\}$ denote the set of time indices. Define
\[
\mathcal I
=
\left\{
(R,C)\in 2^{\mathcal T}\times 2^{\mathcal T}
:
R\cap C=\emptyset,\quad
|R|=|C|=\bar r+1
\right\}
\]
as the set of row and column indices $(R,C)$ forming a diagonal exclusion submatrix of dimension $\bar{r}+1$. For the minors associated with these submatrices, Proposition \ref{prop:5} (given that $\alpha\neq0$) implies that $\tilde{J}_{R,C}(\tilde{\alpha},\theta^0)$ is almost surely a nonzero polynomial in $\tilde{\alpha}$ whose coefficients depend only on $\theta^0$. Now, pick any diagonal exclusion minor of $\mathbf{O}$ associated with a $(\bar r+1)\times(\bar r+1)$ submatrix such that $(R,C)\in\mathcal I$. By Proposition \ref{prop:1} it must be the case that:
	\[\det\left(\mathbf{M}^{\mathbf{O},\bar{r}+1}_{R,C}\right)=0\]
	By Proposition \ref{prop:2} we know that
	\[\det\left(\mathbf{M}^{\mathbf{O},\bar{r}+1}_{R,C}\right) =\det\left(\mathbf{M}^{\mathbf{\Omega},\bar{r}+1}_{R,C}\right) +(\alpha-\tilde{\alpha})\tilde{J}_{R,C}\left(\tilde{\alpha},\theta^0\right)=0\]
	Moreover, by Proposition \ref{prop:3} we have that $\det\left(\mathbf{M}^{\mathbf{\Omega},\bar{r}+1}_{R,C}\right)=0$. Thus,	
	\begin{equation}\label{eq12:proof}
		\det\left(\mathbf{M}^{\mathbf{O},\bar{r}+1}_{R,C}\right)= (\alpha-\tilde{\alpha})\tilde{J}_{R,C}\left(\tilde{\alpha},\theta^0\right)=0
	\end{equation}
	Assumptions \ref{as:1} and \ref{as:3} ensure that there are more than two diagonal exclusion
	minors. Thus, the same argument applies to any such minor. Therefore, for every $(R',C')\in\mathcal I$,
	\begin{equation}\label{eq13:alt}
		\det\left(\mathbf{M}^{\mathbf{O},\bar{r}+1}_{R',C'}\right)= (\alpha-\tilde{\alpha})\tilde{J}_{R',C'}\left(\tilde{\alpha},\theta^0\right)=0,\qquad \forall (R',C')\in\mathcal{I}
	\end{equation}
	Equations \eqref{eq12:proof}-\eqref{eq13:alt} define a system of polynomials that every $\tilde{\alpha}$ must satisfy in order for $\tilde{\theta}$ to be a candidate parameterization solving \eqref{eq5:objective}. Thus, either $\tilde{\alpha}=\alpha$ (meaning that $\alpha$ is identified) or $\tilde{J}_{R,C}\left(\tilde{\alpha},\theta^0\right)=\tilde{J}_{R',C'}\left(\tilde{\alpha},\theta^0\right)=0$ for every $(R',C')\in\mathcal{I}$. Global identification of $\alpha$ can fail only in the latter case. Thus, to show almost sure global identification of $\alpha$ it is sufficient to prove that there exist two different combinations of row and column indices $(R_0,C_0),(R_1,C_1) \in \mathcal{I}$ such that $\tilde{J}_{R_0,C_0}(\tilde\alpha,\theta^0)$ and $\tilde{J}_{R_1,C_1}(\tilde\alpha,\theta^0)$ do not share any root almost surely. The proposition below shows that this is the case when $\alpha\neq0$:
\begin{proposition}\label{prop:6}
	Define
	\[
	R_0=\{1,\ldots,\bar r+1\},\quad C_0=\{\bar r+2,\ldots,2\bar r+2\},\quad
	R_1=\{2,\ldots,\bar r+2\},\quad C_1=\{1,\bar r+3,\ldots,2\bar r+2\}.
	\]
	Fix any admissible $(\alpha,\mathbf D,\mathbf\Psi)$ with $\alpha\neq0$. Under Assumptions \ref{as:1}-\ref{as:5} the resultant
	\[
	\mathcal R_{\alpha,\mathbf D,\mathbf\Psi}(\mathbf F_2)
	=
	\operatorname{Res}_{\tilde\alpha}\{\tilde{J}_{R_0,C_0}(\tilde\alpha,\theta^0), \tilde{J}_{R_1,C_1}(\tilde\alpha,\theta^0)\}
	\]
	is a nonzero polynomial in the unrestricted factors $\mathbf F_2$. Then, if Assumption \ref{as:6} holds,
	\[
	\Pr\left(
	\exists\,\tilde\alpha\in\mathbb R:
	\tilde J_{R_0,C_0}(\tilde\alpha,\theta^0)=0
	\quad\text{and}\quad
	\tilde J_{R_1,C_1}(\tilde\alpha,\theta^0)=0
	\right)=0.
	\]
\end{proposition}

A proof of this Proposition is provided in the Online Appendix. This result is sufficient to complete the proof of Theorem \ref{thm:1} for $\alpha\neq 0$. To see this, note that Proposition \ref{prop:6} guarantees that we can construct indices $(R_0,C_0),(R_1,C_1)\in\mathcal I$ such that the resultant between $\tilde{J}_{R_0,C_0}(\tilde\alpha,\theta^0)$ and $\tilde{J}_{R_1,C_1}(\tilde\alpha,\theta^0)$ is a nonzero polynomial in the unrestricted entries of the factors $\mathbf F_2$ for any fixed $(\alpha,\mathbf D,\mathbf\Psi)$ with $\alpha\neq0$.\footnote{A common root implies a resultant equal to zero.} Therefore, the event where $\tilde{J}_{R_0,C_0}(\tilde\alpha,\theta^0)$ and $\tilde{J}_{R_1,C_1}(\tilde\alpha,\theta^0)$ share a root is contained in the zero set of a nonzero polynomial in $\mathbf F_2$ (defined by $\mathcal R_{\alpha,\mathbf D,\mathbf\Psi}(\mathbf F_2)=0$). By Lemma \ref{lemma:3}, this event has Lebesgue measure zero in the space of unrestricted factor coordinates $\mathbf F_2$. Additionally, under Assumption \ref{as:6} Lemma \ref{lemma:2} implies that this event occurs with probability zero. Hence $\alpha$ is almost surely globally identified when $\alpha\neq0$.
	
Showing almost sure global identification when $\alpha=0$ is much simpler and can be done without the use of diagonal exclusion minors. A direct proof is provided in Appendix \ref{Ap:A}.

Combining both cases implies that $\alpha$ is almost surely globally identified.
	\end{proof}
	
\begin{remark}
	Proposition~\ref{prop:6} uses only two of the polynomial restrictions implied by
	\eqref{eq12:proof}--\eqref{eq13:alt}. If an alternative value
	$\tilde\alpha\neq\alpha$ were to solve \eqref{eq5:objective}, then the same
	$\tilde\alpha$ would have to satisfy
	\[
	\tilde J_{R,C}(\tilde\alpha,\theta^0)=0
	\qquad\text{for every }(R,C)\in\mathcal I.
	\]
Thus, the zero set of the resultant in Proposition~\ref{prop:6} provides only
an outer bound on the exceptional set of factor realizations relevant for the
identification argument. The actual exceptional set when $\alpha\neq0$ is contained in a smaller,
more restrictive set: the set of factor realizations for which all polynomial
restrictions associated with $\mathcal I$ vanish at the same
$\tilde\alpha$.
\end{remark}

Having proved Theorem \ref{thm:1} for the general case with $\bar r$ factors,
we now illustrate the procedure in the simpler case $\bar r=1$ to provide
intuition for the argument.
	
	
	\textit{The case of a single factor $\bar{r}=1$:} Consider the diagonal exclusion minor of $\mathbf{O}$ associated with rows $R=(1,2)$ and columns $C=(3,4)$. A direct computation yields:
	\begin{equation}\label{eq13:detr1}
		\det\left(\mathbf{M}_{(1,2),(3,4)}^{\mathbf{O},2}\right) =\left(\alpha-\tilde{\alpha}\right)\Psi(d_2-\alpha d_1 f_2)\left(f_3\tilde{\alpha}-f_4\right)=0
	\end{equation}
	where in this case $\tilde{J}_{(1,2),(3,4)}\left(\tilde{\alpha},\theta^0\right)$ is a polynomial in $\tilde{\alpha}$ of degree at most 1. Note that one of the following four alternatives must be true in order for \eqref{eq13:detr1} to be satisfied:
	
	\begin{case*}\label{case:1}
		$\Psi=0$
	\end{case*}
	
	\begin{case*}\label{case:2}
		$f_2 d_1 \alpha=d_2$
	\end{case*}
	
	\begin{case*}\label{case:3}
		$\tilde{\alpha}=\frac{f_4}{f_3}$
	\end{case*}
	
	\begin{case*}\label{case:4}
		$\tilde{\alpha}=\alpha$
	\end{case*}
	
	Cases \ref{case:1} and \ref{case:2} correspond to the case in which $\tilde{J}_{(1,2),(3,4)}\left(\tilde{\alpha},\theta^0\right)$ is the zero polynomial, so that all the coefficients are trivially equal to zero. Case \ref{case:3} corresponds to the case in which we can find a root $\tilde{\alpha}^{\star}$ that solves $\tilde{J}_{(1,2),(3,4)}(\tilde{\alpha}^{\star},\theta^0)=0$. However, once $\tilde{\alpha}$ is pinned-down in this way, we have no other free parameter, so the same root must satisfy all the zero-determinant conditions (Proposition \ref{prop:1}) associated with the remaining diagonal exclusion minors of $\mathbf{O}$. Case \ref{case:4} corresponds to the case where identification of $\alpha$ is achieved.
	
	We show that Cases \ref{case:1}-\ref{case:3} are only possible in a set of factor realizations that occur with probability zero. Thus, with probability one, every admissible solution to \eqref{eq5:objective} must fall into Case~\ref{case:4}, and therefore must satisfy $\tilde{\alpha}=\alpha$. Hence $\alpha$ is almost surely globally identified.
	
	\textbf{Case \ref{case:1}}: $\Psi=0$
	
	This case is obvious. We assumed $\Psi>0$, so we can immediately rule this out.
	
	\textbf{Case \ref{case:2}}: $f_2 d_1 \alpha=d_2$
	
	If $\alpha=0$, the term $d_2-\alpha d_1 f_2$ equals $d_2>0$, so this case cannot arise. Then, assume $\alpha\neq 0$. By Assumption \ref{as:6}, $f_2$ is drawn from a continuous probability distribution. Thus, we can invoke Lemma \ref{lemma:2} and ensure that
	$P\left(f_2 = \frac{d_2}{d_1 \alpha}\right) = 0$. Then, Case \ref{case:2} can only occur in a probability-zero set.
	
	\textbf{Case \ref{case:3}}: $\tilde{\alpha}=\frac{f_4}{f_3}$
	
Let $\tilde{\alpha}^{\star}=f_4/f_3$ when $f_3\neq0$.\footnote{The event $f_3=0$ is an event of probability zero under Assumption \ref{as:6}. Moreover, when $f_3=0$, then the factor $f_3\tilde\alpha-f_4$ reduces to $-f_4$, which is also nonzero almost surely under Assumption~\ref{as:6}.} This value makes the restriction in \eqref{eq13:detr1} hold, but it does not by itself imply observational equivalence. For Case \ref{case:3} to generate a failure of global identification, the same value $\tilde{\alpha}^{\star}$ must also make every other admissible diagonal-exclusion minor vanish. We therefore evaluate another diagonal-exclusion restriction at $\tilde{\alpha}^{\star}$. Consider the minor associated with rows $R=(2,3)$ and columns $C=(1,4)$. Proceeding as in \eqref{eq12:proof} and evaluating at $\tilde{\alpha}=f_4/f_3$, we obtain
	\begin{align*}\det\left(\mathbf{M}_{(2,3),(1,4)}^{\mathbf{O},2}\right)&=\left(\alpha-\frac{f_4}{f_3}\right)\Bigg(\alpha d_1 d_3 +d_{1}f_{2}f_{3}\Psi\alpha^2-d_{2}\Psi\alpha f_{3} +\alpha d_{3} \Psi + d_{3}\Psi f_{2} +\alpha d_{1}f_{3}^{2} \Psi-2\alpha d_1 f_2 f_4 \Psi   \\
	&-\frac{f_{4}}{f_{3}}\left(d_{1}d_{3}+d_{3}\Psi+\alpha^{2}d_{1}d_{2}+\alpha^2 d_{2} \Psi + \alpha^{2} d_{1} f_{2}^{2}\Psi\right)+\frac{f_{4}^{2}}{f_{3}^{2}}\alpha\left(d_1\Psi f_{2}^{2}+d_{1}d_{2}+d_{2}\Psi\right)\Bigg)=0
	\end{align*}	
If the first factor is zero, then $\tilde{\alpha}^{\star}=\alpha$, so
there is no failure of global identification of $\alpha$. Hence, under the
alternative $\tilde{\alpha}^{\star}\neq\alpha$, Case \ref{case:3} requires
the second factor to be zero.

Since $f_3\neq 0$ in Case \ref{case:3}, we may multiply the second factor by
$f_3^2$. The resulting expression is a polynomial in
$(f_2,f_3,f_4)$. This polynomial is not identically zero. Indeed, the
monomial $f_2 f_3^2$ appears with coefficient $d_3\Psi$, which is
strictly positive because $d_3>0$ and $\Psi>0$. Therefore, the condition
that the second factor equals zero defines the zero set of a nonzero
polynomial in $(f_2,f_3,f_4)$. By Lemma \ref{lemma:3} this event has Lebesgue measure zero in the unrestricted factor space. Then, by Assumption \ref{as:6} and Lemma
\ref{lemma:2}, this event has probability zero. Consequently, Case \ref{case:3} is ruled out except on a probability-zero set.
	
Having ruled out Cases \ref{case:1}-\ref{case:3} outside probability-zero sets, any admissible solution to \eqref{eq5:objective} must fall into Case~\ref{case:4} almost surely. Hence any admissible observationally equivalent parameterization must satisfy $\tilde{\alpha}=\alpha$ almost surely.\hfill $\square$

\subsection{Global identification of $\mathbf{F}$, $\mathbf{\Psi}$, and $\mathbf{D}$}\label{sbs:34}

We have shown that $\alpha$ is almost surely globally identified: any admissible
$\tilde\theta$ satisfying \eqref{eq5:objective} must satisfy
$\tilde\alpha=\alpha$, except on a set of factor realizations with probability zero. We can then conclude that all the parameters in our model are almost surely globally identified. This is stated in Theorem \ref{thm:2}. We next provide a proof.

\begin{proof}
	By Theorem~\ref{thm:1}, $\alpha$ is almost surely globally identified. Hence
	$\tilde{\alpha}=\alpha$ with probability one and
	$\tilde{\mathbf{\Gamma}}^{-1}\mathbf{\Gamma}=\mathbf{I}_{T}$. The observational
	equivalence condition \eqref{eq5:objective} therefore reduces to
	\[
	\mathbf{F}\mathbf{\Psi}\mathbf{F}'+\mathbf{D}
	=
	\tilde{\mathbf F}\tilde{\mathbf \Psi}\tilde{\mathbf F}'
	+\tilde{\mathbf D}.
	\]
	This is the standard factor-analytic covariance decomposition considered in
	\cite{anderson1956statistical}. In their notation, the factor loading matrix is
	$\mathbf{\Lambda}=\mathbf{F}\mathbf{\Psi}^{1/2}$, so that
	$\mathbf{F}\mathbf{\Psi}\mathbf{F}'=\mathbf{\Lambda}\mathbf{\Lambda}'$.
	Since $\mathbf{\Psi}$ is positive definite, multiplication by
	$\mathbf{\Psi}^{1/2}$ preserves the ranks of all row submatrices. Then, it is enough to verify \cite{anderson1956statistical}'s row-deletion
	rank condition for $\mathbf F$ (Theorem 5.1), namely that after deleting
	any row of $\mathbf F$, the remaining matrix contains two disjoint
	$\bar r\times \bar r$ full-rank submatrices.
	
	Under the normalization $\mathbf F=(\mathbf I_{\bar r},\mathbf F_2')'$ and
Assumption~\ref{as:3}, after deleting any row of $\mathbf F$ there remain enough
rows to select two disjoint $\bar r\times \bar r$ submatrices. If no row of the identity block is deleted, one such submatrix is $\mathbf I_{\bar r}$ and the other is an unrestricted
$\bar r\times\bar r$ block from $\mathbf F_2$. If one row of the identity block is deleted, one may combine the remaining identity rows with one unrestricted row and form a second disjoint unrestricted $\bar r\times\bar r$ block using the remaining unrestricted entries of $\mathbf{F}_{2}$. In each case, the relevant determinants are nonzero
polynomials in $\operatorname{vec}(\mathbf F_2)$. Hence \cite{anderson1956statistical}'s row-deletion rank condition fails only on a proper algebraic subset of the factor space, holding
for almost all configurations of the factors, given the remaining parameters. By
Assumption~\ref{as:6}, $\operatorname{vec}(\mathbf F_2)$ is absolutely continuous; therefore, the row-deletion rank condition holds almost surely.

Therefore, by Theorem 5.1 of \cite{anderson1956statistical}, $\mathbf D$ and the common component $\mathbf{F}\mathbf{\Psi}\mathbf{F}'$ are globally identified almost surely. The normalization $\mathbf F=(\mathbf I_{\bar r},\mathbf F_2')'$ fixes the rotation, meaning that $\mathbf{F}$ and $\mathbf{\Psi}$ are separately identified. Thus all covariance parameters are almost surely globally identified.

\end{proof}

\section{Individual fixed-effects}\label{sec4}

Consider a panel data model with time, individual and interactive effects:
\begin{equation}\label{eq16}
	y_{it} = \alpha y_{it-1}+\delta_t + \gamma_i +\lambda_{i}' f_t + \varepsilon_{it}
\end{equation}
where $\gamma_i$ denotes the individual fixed-effect for observation $i$, and the rest of the parameters are defined as in \eqref{eq1:dgp}. It is a well-known result that individual fixed-effects can be absorbed into the factor structure by treating $\gamma_i$ as the factor loadings of a special factor, which takes the value of 1 for all $t$. As a result, the factor matrix does not satisfy the conditions of Assumption \ref{as:6} because the factor associated with the individual fixed-effects follows a degenerate distribution.

Therefore, the almost sure interpretation of the results derived in Theorem \ref{thm:1} does not apply directly in this case and characterizing the subset of the parameter space where identification may fail is more challenging. We can overcome this issue by showing that our proof can be adapted to allow for one factor with a degenerate distribution. In practice, almost sure global identification requires that there is enough variation in some of the factors, but some factors may have degenerate distributions. It is important to note that in this case we do not estimate $\gamma_i$ (doing so would cause incidental parameter problems), but we absorb it into the factor structure and identify the variance of $\gamma_i$ and the covariance with the rest of the factor loadings through $\mathbf{\Psi}$.

We also show that the diagonal-exclusion approach is useful beyond the interactive-effects setting. In particular, we show that the autoregressive coefficient in a dynamic two-way fixed-effects model can be globally identified exploiting diagonal exclusion minors, even in the unit root case. In this case, no genericity argument is needed.

\subsection{Identification with individual and interactive fixed effects}\label{sbs:41}

Start from model \eqref{eq16} and absorb the individual fixed effects into the factor structure:
\begin{equation}\label{eq17}
	y_{it} = \alpha y_{it-1}+\delta_t + \eta_{i}' F_t + \varepsilon_{it}
\end{equation}
where $\eta_i=[\gamma_i,\lambda_i']'$, $F_t=[1,f_t']'$, and the remaining terms are defined as in \eqref{eq1:dgp}. 
Using the same procedure as in Section \ref{sec2.1}, we project the first observation
$ y_{i1} $ onto the vector $[1, \gamma_i, \lambda_i]$, yielding the following expression:
\[ y_{i1} = \delta_1^{*} + \gamma_i f_{\gamma}^{*} + \lambda_i' f_{1}^{*} + \varepsilon_{i1}^{*} \]
where $ (\delta_1^{*}, f_{\gamma}^{*}, f_1^{*}) $ are the projection coefficients and
$ \varepsilon_{i1}^{*} $ is the projection residual. For notational simplicity, we drop the asterisks from the coefficients and from the residual, as these are free parameters, and we do not require the error terms $ \varepsilon_{it} $ to have identical distributions across $ t $.
Therefore, for $ t = 1 $, we can rewrite the equation as $y_{i1} = \delta_1 + \gamma_i f_{\gamma} + \lambda_i' f_1 + \varepsilon_{i1}$.

Combining this with equation \eqref{eq17} and stacking the observations
over $ t $, we obtain:
\[\mathbf{B}y_{i} = \delta + \mathbf{F}\eta_{i} + \varepsilon_{i}\]
where $\mathbf{B}$, $y_i$, $\delta$, and $\varepsilon_{i}$ are all defined earlier, but
\[\mathbf{F}' = \begin{bmatrix}
	f_\gamma & 1 & \cdots & 1 \\
	f_{1}^{'} & f_{2}^{'} & \cdots & f_{T}^{'}
\end{bmatrix}, \quad \text{ and }  \eta_{i} = \begin{bmatrix}
	\gamma_i\\
	\lambda_{i}
\end{bmatrix}\]
The first element in the first column of $ \mathbf{F} $ is $ f_\gamma $, not 1, so it is a free parameter arising from the projection of the initial observation.
Also, the first column of $\mathbf{F}$ contains many fixed
entries equal to $1$.
Consequently, the factor matrix does not satisfy the
conditions of Assumption~\ref{as:6}. The argument below shows that, despite the degenerate fixed-effect factor, $\alpha$ remains almost surely globally identified outside a lower-dimensional exceptional subset of the parameter space.

For simplicity, we provide a proof for the almost sure global identification of $\alpha$ in the presence of individual fixed-effects and a single unrestricted factor ($\dim(f_t)=1$), with $f_t$ drawn from a continuous distribution. In this case, $\bar{r}=2$. However, the proof can be extended to more general cases encompassing multiple factors.

Observe that the presence of individual fixed-effects imposes structure on $\mathbf{F}$, so we require fewer restrictions to eliminate rotational indeterminacy. More specifically, we only need to impose two restrictions on $\mathbf{F}$ to prevent indeterminacy. We shall impose the normalization in the following way, referred to as tail normalization:
\begin{equation}\label{eq18}
	\mathbf{F}' = \begin{bmatrix}
		f_\gamma & 1 & \cdots & 1\ & 1 \; & 1 \\
		f_1 & f_2 & \cdots & f_{T-2}\ & 0 \; & 1
	\end{bmatrix}
\end{equation}
This normalization restricts the last two realizations of the factor $f$ to be 0 and 1, respectively. This is purely for convenience in terms of exposition. Normalizing other entries (except the first row) has the same effect.
Under this normalization, $\mathbf F \mathbf{A}= \mathbf F $, if and only if $\mathbf{A}=\mathbf{I}_2$, thus eliminating rotational indeterminacy.		

Before showing that $\alpha$ can be almost surely identified in this context, we shall explicitly state the identification assumptions:

\begin{assumptionprime}\label{asp:1}
	There is one factor associated with the individual fixed-effects and an unrestricted factor ($\bar{r}=2$).
\end{assumptionprime}

\begin{assumptionprime}\label{asp:2}
	$\mathbf{\Psi}$ is an unrestricted symmetric and PD matrix, and $\mathbf{F}$ is normalized as shown in \eqref{eq18}.
\end{assumptionprime}

\begin{assumptionprime}\label{asp:3}
	$T \geq 6$.	
\end{assumptionprime}

\begin{assumptionprime}\label{asp:4}
	
	For each $t$, the $\varepsilon_{it}$ in \eqref{eq17} are i.i.d.\ over $i$. For each $i$, they satisfy the MDC with respect to the filtration ${\cal F}_{it}$ defined earlier (replacing $\lambda_i$ with $\eta_i)$, and $\mathbb{V}(\varepsilon_{it})=d_t>0$.
	
\end{assumptionprime}

\begin{assumptionprime}\label{asp:5}
	The factor loadings $\eta_{i}$ are either:
	\begin{enumerate}
		\item Random coefficients satisfying $\E[\eta_i]=0$ and $\mathbf{\Psi}_{N} \xrightarrow{p} \mathbf{\Psi}$ as $N\to\infty$.
		\item Deterministic coefficients satisfying $\mathbf{\Psi}_{N} \to \mathbf{\Psi}$ as $N \to \infty$.
	\end{enumerate}
\end{assumptionprime}

\begin{assumptionprime}\label{asp:6} The unrestricted factor coordinates $f_{1},\dots,f_{T-2}$ are drawn as an absolutely continuous random vector with respect to Lebesgue measure on $\mathbb{R}^{(T-2)}$.
\end{assumptionprime}

Assumptions \ref{asp:1}-\ref{asp:6} are an adaptation of the ones outlined in Section \ref{sec3} for the particular case we are studying. As in Section~\ref{sec3}, Assumption~\ref{asp:6} is used only to express a genericity condition in probabilistic language. The factors are still treated as fixed population parameters; the absolutely continuous distribution is a device for stating that the exceptional factor realizations under which almost sure global identification may not hold are contained in a Lebesgue-measure-zero set of the unrestricted factor space. Under these assumptions, we can show that $\alpha$ is almost surely globally identified for all parameter values outside a Lebesgue-null subset of the parameter space:

\begin{theorem}\label{thm:3}
	Let $y_{it}$ be generated from \eqref{eq16}. Under Assumptions
	\ref{asp:1}--\ref{asp:6}, $\alpha$ is almost surely globally identified
	for all parameter values outside a Lebesgue-null subset
	$\Theta^{*}_{(1)}\subset\Theta$.
\end{theorem}

The proof of Theorem~\ref{thm:3} follows a similar strategy to that of Theorem~\ref{thm:1} in the case of a single factor. The full argument and the characterization of $\Theta^{*}_{(1)}$ are provided in the Online Appendix. Theorem \ref{thm:3} does not state that the region $\Theta^{*}_{(1)}\subset\Theta$ necessarily corresponds to an underidentification region; it simply states that our proof strategy cannot guarantee identification in that region, but it might still be possible to establish identification using a different strategy. Moreover, note that Theorem \ref{thm:3} implies generic identification with respect to the full parametric space $\Theta$.

Once identification of $\alpha$ is established, it follows that the remaining parameters are also almost surely globally identified outside the same exceptional set. This result is formalized in the following theorem.

\begin{theorem}\label{thm:4}
	Let $y_{it}$ be generated from \eqref{eq16}. Under Assumptions
	\ref{asp:1}--\ref{asp:6}, the parameters $\{d_t\}_{t=1}^{T}$,
	$\mathbf{\Psi}$, and $\mathbf F$ are almost surely globally
	identified for all parameter values outside a Lebesgue-null subset
	$\Theta^{*}_{(1)}\subset\Theta$.
\end{theorem}

\begin{proof}
	The proof is identical to that of Theorem \ref{thm:2}, replacing
	Theorem \ref{thm:1} by Theorem \ref{thm:3} and using the tail
	normalization \eqref{eq18}, under which \cite{anderson1956statistical}'s row-deletion rank
	condition holds almost surely.
\end{proof}

\subsection{Dynamic panel with additive fixed effects}

Consider a dynamic panel with additive fixed effects:
\begin{equation}\label{eq23}
	y_{it} \;=\; \alpha\, y_{it-1} \;+\; \gamma_i \;+\;\delta_t\;+\; \varepsilon_{it}, \qquad t \ge 2,
\end{equation}
where $\varepsilon_{it}$ are i.i.d.\ across $i$, and satisfy the MDC with respect to the filtration:
\[
\mathcal F_{i0}
=
\sigma(\gamma_i,y_{i0}), \quad
\mathcal F_{it}
=
\sigma(\gamma_i,y_{i0},\varepsilon_{i1},\ldots,\varepsilon_{it}), \text{ for } t\ge 1,
\]
and $\mathbb{V}(\varepsilon_{it})=d_t$. The rest of the variables are defined as in \eqref{eq1:dgp}.

This canonical model nests some of the most influential papers in the dynamic panel data literature. For instance, \cite{arellano1991some} consider a simplified version of this model with no time fixed effects. However, the restriction $|\alpha|<1$ is imposed. Subsequent work argues that $\alpha$ is not identified in the unit-root case under conventional differenced/GMM moment conditions (see \cite{ahn1995efficient}, \cite{blundell1998initial}, \cite{alvarez2022robust}, and \cite{sentana2024finite}). In this subsection, we show that $\alpha$ is globally identified for any value, including $\alpha=1$, given the moment conditions considered in this paper. The difference with respect to the previous literature comes from the moment conditions employed, as working with the data in levels contains additional information that permits identifying the model in this particular case.

As in previous sections, we treat the first observation separately and project $y_{i1}$ onto the linear span of $[1,\gamma_i]$:
\[y_{i1} \;=\; \delta_1^{*} \;+\; f_\gamma\, \gamma_i \;+\; \varepsilon^{*}_{i1}\]
where $(\delta_1^{*},f_\gamma)$ are projection coefficients and $\varepsilon^{*}_{i1}$ is the projection residual. Since we place no restrictions on the distribution of $\varepsilon_{i1}$, we simply relabel $(\delta_1^{*},\varepsilon_{i1}^{*})$ as $(\delta_1,\varepsilon_{i1})$ without loss of generality.

We include time effects $\delta_t$ for generality, but our method does not rely on time effects to identify the model. If the true DGP does not feature time fixed effects (as in \cite{arellano1991some}), one may impose $\delta_t=0$ for $t\ge2$ and retain a free $\delta_1$ arising from the projection at $t=1$. In any case, $\delta_t$ can be concentrated out as in Section \ref{sec2.2}, and our identification arguments are based on second-moment conditions that are invariant to this step.

As we have previously outlined, we can write a dynamic panel model with individual fixed effects as an interactive fixed-effects model with a unique factor given by $\mathbf{F}=[f_\gamma,1,\dots,1]'$ and factor loadings given by $\gamma_i$. Then, we can stack observations over $t$ as before and write:
\[y_{i} = \mathbf{\Gamma}\delta + \mathbf{\Gamma}\mathbf{F}\gamma_i+\mathbf{\Gamma}\varepsilon_{i}\]
where everything is defined as before except that now $\mathbf{F}=[f_\gamma,1,\dots,1]'$ and $\gamma_i$ is a scalar containing the individual fixed effect of observation $i$. Also note that the matrix $\mathbf{\Psi}$ is now a scalar $\Psi>0$ corresponding to the population variance of the individual fixed effects $\mathbb{V}(\gamma_i)$.\footnote{The variance characterization is appropriate when individual fixed effects (loadings) are random. If they are deterministic a similar consideration as in Section \ref{sec2} follows.} The assumptions required for identification are:

\begin{assumptionprimeprime}\label{asppp:1}
	
	The errors	$\varepsilon_{it}$ in \eqref{eq23} are i.i.d.\ over $i$ and satisfy the MDC with respect to the filtration ${\cal F}_{it}$ defined earlier, and $\mathbb{V}(\varepsilon_{it})=d_t>0$.
	
\end{assumptionprimeprime}

\begin{assumptionprimeprime}\label{asppp:2}
	The individual fixed effects $\gamma_{i}$ are either:
	\begin{enumerate}
		\item Random coefficients satisfying $\E[\gamma_i]=0$ and $\Psi_{N} \xrightarrow{p} \Psi>0$ as $N\to\infty$.
		\item Deterministic coefficients satisfying $\Psi_{N} \to \Psi>0$ as $N \to \infty$.
	\end{enumerate}
\end{assumptionprimeprime}

Under these assumptions the autoregressive coefficient $\alpha$ can be globally identified provided $T$ is large enough. The proof strategy changes depending on whether $\alpha=1$ or not, so we separate these cases into two different Theorems. In the main text, we show the proof for the unit-root case as it is the one where previous literature has reported underidentification. The proof does not rely on distributional assumptions on the factor $\mathbf{F}$.

\begin{theorem}\label{thm:5}
	Let $y_{it}$ be generated from \eqref{eq23} with $\alpha=1$. Then, under Assumptions \ref{asppp:1}-\ref{asppp:2}:
	\begin{enumerate}
		\item If $T=4$, then $\alpha$ is globally identified for all parameter values outside a Lebesgue-null subset $\Theta^{*}_{(2)}\subset\Theta$.
		\item If $T\geq 5$, then $\alpha$ is globally identified.
	\end{enumerate}
\end{theorem}
\begin{proof}
	
	In a dynamic panel model with individual effects the limiting expected value of the sample covariance matrix of $y_i$ can be written as in Section \ref{sec2}
	\begin{equation*}
		\mathbf{\Sigma}(\theta^0) = \mathbf{\Gamma}( \mathbf{F}\Psi \mathbf{F}' + \mathbf{D} )\mathbf{\Gamma}'
	\end{equation*}
	where now $\mathbf{F}=[f_{\gamma},1,\dots,1]'$, $\Psi$ is a positive scalar corresponding to the limiting cross-sectional variance of the individual effects, and $\mathbf{F}\Psi \mathbf{F}'$ is a $T\times T$ symmetric (PSD) matrix of rank $1$. Everything else is defined as before.
	
	As in previous sections, to establish global identification of $\alpha$ it is enough to show that any admissible parameterization satisfying $\mathbf{\Sigma}(\theta^0)=\mathbf{\Sigma}(\tilde{\theta})$ must have $\tilde{\alpha}=\alpha$. We show that, when $T=4$, this holds for all parameter values outside a Lebesgue-null subset $\Theta^{*}_{(2)}\subset\Theta$. We then show that, when $T\geq 5$, the implication holds for all admissible parameter values, meaning that $\alpha$ is globally identified.
	
	Note that most of the results derived in Section \ref{sec3} can be applied without modifications to the case in which we have a single factor given by $\mathbf{F}=[f_\gamma,1,\dots,1]$, as they did not require any distributional assumption on $\mathbf{F}$. In particular, we can define $\mathbf{O}$ in the same way as before and Proposition \ref{prop:1} still implies that the determinant of every diagonal exclusion submatrix of $\mathbf{O}$ of dimension $2\times 2$ must be equal to zero (we only have $\bar{r}=1$ factor given by the individual effects). Moreover, Propositions \ref{prop:2}-\ref{prop:3} still apply so we can write the determinant of every $2\times2$ diagonal exclusion submatrix of $\mathbf{O}$ as:
	\begin{equation*}
		\det\left(\mathbf{M}_{R,C}^{\mathbf{O},2}\right) = (\alpha-\tilde{\alpha})\tilde{J}_{R,C}(\tilde{\alpha},\theta^0) =0
	\end{equation*}
	Consider the unit root case. Start with the first part of the Theorem and assume $T=4$. Consider the diagonal exclusion minor of $\mathbf{O}$ associated with rows $R=(1,2)$ and columns $C=(3,4)$. We can show that:
	\begin{equation}\label{eq24}
		\det\left(\mathbf{M}_{(1,2),(3,4)}^{\mathbf{O},2}\right) =\Psi(\tilde{\alpha}-1)^{2}(d_1-d_2 f_\gamma)=0
	\end{equation}
	\noindent Thus, in order for \eqref{eq24} to be satisfied it must be that either:
	
	\begin{caseprime}\label{casepp:1}
		$\Psi=0$
	\end{caseprime}
	
	\begin{caseprime}\label{casepp:2}
		$d_1=f_\gamma d_{2}$
	\end{caseprime}
	
	\begin{caseprime}\label{casepp:3}
		$\tilde{\alpha}=\alpha=1$
	\end{caseprime}
	
	Case \ref{casepp:1} is ruled out by assumption, as $\Psi>0$. The case in which we allow $\Psi=0$ (i.e. $\gamma_i=\gamma \ \forall \ i$) can also be identified, but we omit the discussion here for simplicity.
	
	Case \ref{casepp:2} is the only parameter restriction left by the selected minor when $T=4$.\footnote{Trying other diagonal exclusion minors does not yield promising results in this case.} This restriction is lower-dimensional, since it is contained in the zero set of the nonzero polynomial $d_2f_\gamma-d_1$, and therefore has Lebesgue measure zero in the parameter space. Hence, defining $\Theta^{*}_{(2)}=\{\theta\in\Theta:d_{1} =d_{2} f_\gamma\}$, \eqref{eq24} implies that, when $T=4$ and $\alpha=1$, the autoregressive parameter is globally identified for all parameter values outside $\Theta^{*}_{(2)}$. When $T\geq5$, the additional period provides further diagonal-exclusion restrictions that also rule out this remaining case, as shown below.
	
	Now, consider the case where $T\geq 5$. We can construct the same minor associated with rows $R=(1,2)$ and columns $C=(3,4)$ considered in Equation \eqref{eq24} and repeat the same argument, so we are only left to show that identification also holds when $d_{1}=f_{\gamma}d_{2}$. Thus, from now on we work out the case where $\alpha=1$ and $d_{1}=f_{\gamma}d_{2}$.
	
	By Proposition \ref{prop:1} the determinant of every diagonal exclusion submatrix of dimension $2\times 2$ must be equal to zero. We consider the minor of $\mathbf{O}$ associated with rows $R=(1,3)$ and columns $C=(4,5)$ under the proposed restrictions:	
	\begin{equation}\label{eq25}
		\det\left(\mathbf{M}_{(1,3),(4,5)}^{\mathbf{O},2}\right) =\Psi \frac{d_{1}}{d_{2}} \left(d_{2}-d_{3}\right)(\tilde{\alpha}-1)^{2}=0
	\end{equation}
	Thus, since both $\Psi>0$ and $d_1>0$, unless $d_{1}=f_{\gamma}d_{2}$ and $d_{3}=d_{2}$ simultaneously global identification of $\alpha$ would be guaranteed as $\tilde{\alpha}=\alpha=1$ would be the only way of making this minor equal to zero. Therefore, we consider the case where $d_{1}=f_{\gamma}d_{2}$ and $d_{3}=d_{2}$. Now, pick the minor associated with rows $R=(1,3)$ and columns $C=(2,4)$. We can show that
	\begin{equation}\label{eq26}
		\det\left(\mathbf{M}_{(1,3),(2,4)}^{\mathbf{O},2}\right) =\frac{d_{1}}{d_{2}}(\tilde{\alpha}-1)^{2}\left(d_{1}\Psi + 3d_{2}\Psi-\tilde{\alpha}d_{2}^{2}+d_{2}^{2}-\tilde{\alpha} d_{1} \Psi - 2\tilde{\alpha} d_{2}\Psi\right)=0
	\end{equation}
	As $d_{1}>0$, the only alternative in which $\alpha$ is not globally identified under these restrictions is when we construct an alternative parameterization in which $\tilde{\alpha}$ makes the last term equal to zero. This corresponds to setting $\tilde{\alpha}^{\star} = \frac{d_{2}^{2}+3\Psi d_{2}+d_{1}\Psi}{d_{2}^{2}+2\Psi d_{2}+d_{1}\Psi}$.\footnote{The denominator is strictly greater than zero so such a solution can always be constructed.}
	
	In order for this alternative parameterization with $\tilde{\alpha}^{\star} = \frac{d_{2}^{2}+3\Psi d_{2}+d_{1}\Psi}{d_{2}^{2}+2\Psi d_{2}+d_{1}\Psi}$ to be valid in the case where $\alpha=1$, $d_{1}=f_{\gamma}d_{2}$ and $d_{3}=d_{2}$ it must make the determinant of all $2\times 2$ diagonal exclusion submatrices equal to zero. However, this is not the case. To see this, pick the minor associated with rows $R=(2,3)$ and columns $C=(1,5)$. Under these restrictions (i.e. $\alpha=1$, $d_{1}=f_{\gamma}d_{2}$, $d_3=d_2$, and $\tilde{\alpha}=\frac{d_{2}^{2}+3\Psi d_{2}+d_{1}\Psi}{d_{2}^{2}+2\Psi d_{2}+d_{1}\Psi}$) the minor can be shown to be equal to:
	\begin{equation}\label{eq27}
		\det\left(\mathbf{M}_{(2,3),(1,5)}^{\mathbf{O},2}\right) =-\frac{d_{1} d_{2}^{3} \Psi^{4}}{\left(d_{2}^{2}+2\Psi d_{2}+d_{1}\Psi\right)^{3}}
	\end{equation}
	and note that because $d_{1}>0$, $d_{2}>0$, and $\Psi>0$, then the determinant is strictly negative. This means that in the unit root case when $d_{1}=f_{\gamma}d_{2}$, $d_{3}=d_{2}$, and $\tilde{\alpha}^{\star} = \frac{d_{2}^{2}+3\Psi d_{2}+d_{1}\Psi}{d_{2}^{2}+2\Psi d_{2}+d_{1}\Psi}$ (which was necessary to make the minor associated with rows $R=(1,3)$ and columns $C=(2,4)$ equal to zero) there is a $2\times 2$ minor different from zero. This violates the necessary condition in Proposition \ref{prop:1}. Therefore, no admissible parameterization satisfying \eqref{eq5:objective} can have $\tilde{\alpha}\neq\alpha$ in the remaining case $d_1=f_\gamma d_2$ when $T\geq5$.
	
	Thus, when $T\geq5$ and $\alpha=1$, any admissible parameterization satisfying \eqref{eq5:objective} must have $\tilde{\alpha}=\alpha=1$. Hence $\alpha$ is globally identified in the unit-root case.
\end{proof}

Moreover, it is also the case that $\alpha$ can be identified for any arbitrary value different from $1$ as well, which is formalized below:

\begin{theorem}\label{thm:6} Let $y_{it}$ be generated from \eqref{eq23} with $\alpha \in \mathbb{R}-\{1\}$. Then, under Assumptions \ref{asppp:1}-\ref{asppp:2}:
	\begin{enumerate}
		\item If $T=4$, then $\alpha$ is globally identified for all parameter values outside a Lebesgue-null subset $\Theta^{*}_{(3)}\subset\Theta$.
		\item If $T\geq 5$, then $\alpha$ is globally identified.
	\end{enumerate}
\end{theorem}

The proof is slightly more involved and it is available in the Online Appendix. Moreover, the exceptional region $\Theta^{*}_{(3)}$ reduces to $\Theta^{*}_{(2)}$ when $\alpha=1$, so $\Theta^{*}_{(2)} = \Theta^{*}_{(3)} \big\vert_{\alpha=1}$. Once $\alpha$ has been identified for every possible value, the rest of the parameters are identified as well:

\begin{corollary}\label{coro:2} Let $y_{it}$ be generated from \eqref{eq23}. Then, under Assumptions \ref{asppp:1}-\ref{asppp:2}:
	\begin{enumerate}
	\item If $T=4$, then $\{d_t\}_{t=1}^{T}$, $\Psi$, and $f_\gamma$ are globally identified for all parameter values outside a Lebesgue-null subset $\Theta^{*}_{(3)}\subset\Theta$.
		\item If $T\geq 5$ then $\{d_{t}\}_{t=1}^{T},\Psi$, and $f_{\gamma}$ are globally identified.
	\end{enumerate}
\end{corollary}

\begin{proof}
	By Theorems~\ref{thm:5} and~\ref{thm:6}, $\alpha$ is globally identified
	outside, at most, $\Theta^{*}_{(3)}$ when $T=4$, and globally identified
	for all admissible parameter values when $T\ge5$. The remaining argument is identical to that in the proof of
	Theorem~\ref{thm:2}. The only condition to check is \cite{anderson1956statistical}'s
	row-deletion rank condition. Here $\mathbf F=[f_\gamma,1,\ldots,1]'$, so
	the condition holds for $T\ge4$. Thus $\mathbf D$ and
	$\mathbf F\Psi\mathbf F'$, and hence $\{d_{t}\}_{t=1}^{T}$, $\Psi$, and $f_\gamma$ are
	identified.
\end{proof}

\section{Concluding remarks}\label{sec5}

In this paper, we show that the parameters of a
dynamic panel data model with interactive effects
are almost surely globally identified. Our result extends the
seminal work of \cite{anderson1956statistical} and \cite{bekker1997generic} to dynamic panels with interactive effects, a class of models previously believed to lack global identification.

This finding has important practical implications. Despite the model’s inherent nonlinearity, there
exists (almost surely) a one-to-one mapping between the first two population moments and the model parameters. This guarantees that researchers can uniquely recover the true parameters, avoiding ambiguity and ensuring the validity of estimation, inference, and policy analysis
in empirical applications.

We believe the results derived in this paper can be applied to establish identification in a wide range of dynamic
models featuring time, individual, or interactive
effects. The proof techniques we develop extend
naturally to settings with more general error
structures, providing a flexible framework for
addressing identification in such cases. For instance, in the baseline model \eqref{eq1:dgp}, if the error term
$\varepsilon_{it}$ follows an MA($q$) process, identification of
$\alpha$ can be obtained using minors of $\mathbf O$ whose entries lie
outside the band formed by the main diagonal and the first $q$ super- and
subdiagonals, provided that $T$ is sufficiently large.

More generally, we believe that some modifications of our original proof can be used to show identification in dynamic models where the covariance between the error terms features some level of sparsity (i.e., the matrix $\mathbf{D}$ has ``enough'' zeros), permitting us to construct minors excluding those terms in $\tilde{\mathbf{D}}$.

Our proof shows that almost sure global identification is achievable when $ T \geq 2(\bar{r} + 1) $. An interesting open question is whether this result can be attained under weaker conditions on $T$, such as when $T$ satisfies the Ledermann bound plus one. The requirement for an additional period, beyond the Ledermann bound, arises from the dynamic nature of the model, which includes an extra parameter compared to the static case.

\newpage
\bibliographystyle{apalike}
\bibliography{bibl}

@article{alvarez2022robust,
	title={Robust likelihood estimation of dynamic panel data models},
	author={Alvarez, Javier and Arellano, Manuel},
	journal={Journal of Econometrics},
	volume={226},
	number={1},
	pages={21--61},
	year={2022},
	publisher={Elsevier}
}

@article{bekker1997generic,
	title={Generic global indentification in factor analysis},
	author={Bekker, Paul A and Ten Berge, Jos MF},
	journal={Linear Algebra and its Applications},
	volume={264},
	pages={255--263},
	year={1997},
	publisher={Elsevier}
}

@inproceedings{anderson1956statistical,
	title={Statistical inference in factor analysis},
	author={Anderson, TW and Rubin, Herman},
	booktitle={Proceedings of the Berkeley Symposium on Mathematical Statistics and Probability},
	pages={111--150},
	year={1956},
	organization={University of California Press}
}

@article{arellano1991some,
	title={Some tests of specification for panel data: Monte {C}arlo evidence and an application to employment equations},
	author={Arellano, Manuel and Bond, Stephen},
	journal={The Review of Economic Studies},
	volume={58},
	number={2},
	pages={277--297},
	year={1991},
	publisher={Wiley-Blackwell}
}

@article{bai2009panel,
	title={Panel data models with interactive fixed effects},
	author={Bai, Jushan},
	journal={Econometrica},
	volume={77},
	number={4},
	pages={1229--1279},
	year={2009},
	publisher={Wiley Online Library}
}

@article{bai2024likelihood,
	title = {Likelihood approach to dynamic panel models with interactive effects},
	journal = {Journal of Econometrics},
	volume = {240},
	number = {1},
	pages = {105636},
	year = {2024},
	author = {Jushan Bai}
	}

@article{hayakawa2023short,
	title={Short {T} dynamic panel data models with individual, time and interactive effects},
	author={Hayakawa, Kazuhiko and Pesaran, M Hashem and Smith, L Vanessa},
	journal={Journal of Applied Econometrics},
	volume={38},
	number={6},
	pages={940--967},
	year={2023},
	publisher={Wiley Online Library}
}

@article{kociecki2023solution,
	title   = {A solution to the global identification problem in {DSGE} models},
	author  = {Kociecki, Andrzej and Ko{\l}asa, Marcin},
	journal = {Journal of Econometrics},
	volume  = {236},
	number  = {2},
	pages   = {105477},
	year    = {2023},
	publisher = {Elsevier}
}

@article{komunjer2011dynamic,
	title={Dynamic {I}dentification of {D}ynamic {S}tochastic {G}eneral {E}quilibrium {M}odels},
	author={Komunjer, Ivana and Ng, Serena},
	journal={Econometrica},
	volume={79},
	number={6},
	pages={1995--2032},
	year={2011},
	publisher={Wiley Online Library}
}

@article{komunjer2012global,
	title={Global identification in nonlinear models with moment restrictions},
	author={Komunjer, Ivana},
	journal={Econometric Theory},
	volume={28},
	number={4},
	pages={719--729},
	year={2012},
	publisher={Cambridge University Press}
}

@article{ledermann1937rank,
	title={On the rank of the reduced correlational matrix in multiple-factor analysis},
	author={Ledermann, Walter},
	journal={Psychometrika},
	volume={2},
	number={2},
	pages={85--93},
	year={1937},
	publisher={Springer}
}

@article{lewbel2019identification,
	title={The {I}dentification {Z}oo: {M}eanings of {I}dentification in {E}conometrics},
	author={Lewbel, Arthur},
	journal={Journal of Economic Literature},
	volume={57},
	number={4},
	pages={835--903},
	year={2019},
	publisher={American Economic Association 2014 Broadway, Suite 305, Nashville, TN 37203-2425}
}

@article{moon2017dynamic,
	title={Dynamic linear panel regression models with interactive fixed effects},
	author={Moon, Hyungsik Roger and Weidner, Martin},
	journal={Econometric Theory},
	volume={33},
	number={1},
	pages={158--195},
	year={2017},
	publisher={Cambridge University Press}
}

@article{pesaran2006estimation,
	title={Estimation and {I}nference in {L}arge {H}eterogeneous {P}anels with a {M}ultifactor {E}rror {S}tructure},
	author={Pesaran, M Hashem},
	journal={Econometrica},
	volume={74},
	number={4},
	pages={967--1012},
	year={2006},
	publisher={Wiley Online Library}
}

@article{rothenberg1971identification,
	author = {Thomas J. Rothenberg},
	title = {Identification in Parametric Models},
	journal = {Econometrica},
	volume = {39},
	number = {3},
	pages = {577-591},
	year = {1971}
}

@article{shapiro1985identifiability,
	title={Identifiability of factor analysis: Some results and open problems},
	author={Shapiro, Alexander},
	journal={Linear Algebra and its Applications},
	volume={70},
	pages={1--7},
	year={1985},
	publisher={Elsevier}
}

@book{arellano2003panel,
	title={Panel data econometrics},
	author={Arellano, Manuel},
	year={2003},
	publisher={OUP Oxford}
}

@book{baltagi2008econometric,
	title={Econometric analysis of panel data},
	author={Baltagi, Badi Hani},
	volume={4},
	year={2008},
	publisher={Springer}
}

@book{hsiao2022analysis,
	title={Analysis of panel data},
	author={Hsiao, Cheng},
	year={2022},
	publisher={Cambridge University Press}
}

@book{wooldridge2010econometric,
	title={Econometric analysis of cross section and panel data},
	author={Wooldridge, Jeffrey M},
	year={2010},
	publisher={MIT press}
}

@article{holtz1988estimating,
	author = {Douglas Holtz-Eakin AND Harvey S. Rosen AND Whitney Newey},
	title = {Estimating Vector Autoregressions with Panel Data},
	journal = {Econometrica},
	volume = {56},
	number = {6},
	pages = {1371-1395},
	year = {1988}
}

@article{ahn2013panel,
	title={Panel data models with multiple time-varying individual effects},
	author={Ahn, Seung C and Lee, Young H and Schmidt, Peter},
	journal={Journal of Econometrics},
	volume={174},
	number={1},
	pages={1--14},
	year={2013},
	publisher={Elsevier}
}

@article{ahn2001gmm,
	title={G{M}{M} estimation of linear panel data models with time-varying individual effects},
	author={Ahn, Seung Chan and Lee, Young Hoon and Schmidt, Peter},
	journal={Journal of Econometrics},
	volume={101},
	number={2},
	pages={219--255},
	year={2001},
	publisher={Elsevier}
}

@article{sentana2024finite,
	title={Finite underidentification},
	author={Sentana, Enrique},
	journal={Journal of Econometrics},
	volume={240},
	number={1},
	pages={105692},
	year={2024},
	publisher={Elsevier}
}

@article{blundell1998initial,
	title={Initial conditions and moment restrictions in dynamic panel data models},
	author={Blundell, Richard and Bond, Stephen},
	journal={Journal of Econometrics},
	volume={87},
	number={1},
	pages={115--143},
	year={1998},
	publisher={Elsevier}
}

@book{ersel2011linear,
	title={Linear algebra for economists},
	author={Ersel, Hasan and Piontkovski, Dmitri},
	year={2011},
	publisher={Springer Science \& Business Media}
}

@book{athreya2006measure,
	title={Measure theory and probability theory},
	author={Athreya, Krishna B and Lahiri, Soumendra N},
	year={2006},
	publisher={Springer}
}

@article{caron2005zero,
	title={The zero set of a polynomial},
	author={Caron, Richard and Traynor, Tim},
	journal={WSMR report},
	pages={05--02},
	year={2005}
}

@article{gorgens2019moment,
	title={Moment restrictions and identification in linear dynamic panel data models},
	author={G{\o}rgens, Tue and Han, Chirok and Xue, Sen},
	journal={Annals of Economics and Statistics},
	number={134},
	pages={149--176},
	year={2019},
	publisher={JSTOR}
}

@book{boyd2004convex,
	title={Convex optimization},
	author={Boyd, Stephen and Vandenberghe, Lieven},
	year={2004},
	publisher={Cambridge university press}
}

@article{ahn1995efficient,
	title={Efficient estimation of models for dynamic panel data},
	author={Ahn, Seung C and Schmidt, Peter},
	journal={Journal of econometrics},
	volume={68},
	number={1},
	pages={5--27},
	year={1995},
	publisher={Elsevier}
}

\clearpage

\appendix

\onehalfspacing
\begin{appendix}
	
	\section{Proofs of main results}\label{Ap:A}
		
	\subsection*{Proof of Proposition \ref{prop:2}:}
	
	\begin{proof} The proof is done by induction. Recall that the sets $R$ and $C$ are assumed to be ordered so that $r_1<\dots<r_k$ and $c_1<\dots<c_k$ (see Remark \ref{rem:2}). We sometimes denote $\tilde{J}_{R,C}$ instead of $\tilde{J}_{R,C}(\tilde{\alpha},\theta^0)$ to keep the notation as simple as possible.
	
	\textit{1. Base case $k=1$}: Let $R=\{r\}$ and $C=\{c\}$, with $r\neq c$. Then, a diagonal exclusion submatrix of $\mathbf{O}$ of dimension 1, say $\mathbf{M}^{\mathbf{O},1}_{r,c}$, is simply the term $O_{r,c}$ (for $r\neq c$) and its determinant is trivially given by
	\[\det\left(\mathbf{M}^{\mathbf{O},1}_{r,c}\right) = O_{r,c}\]
	From Corollary \ref{coro:1}, we know that
	\[O_{r,c} = \Omega_{r,c}+(\alpha-\tilde{\alpha})J_{r,c} \quad r\neq c\]
	Since $\Omega_{r,c}$ is scalar we have that $\Omega_{r,c}=\det\left(\mathbf{M}^{\mathbf{\Omega},1}_{r,c}\right)$. We can rewrite the preceding equation as:
	\begin{equation}\label{eq36}
		\det\left(\mathbf{M}^{\mathbf{O},1}_{r,c}\right) = \det\left(\mathbf{M}^{\mathbf{\Omega},1}_{r,c}\right) + (\alpha-\tilde{\alpha})\tilde{J}_{r,c}
	\end{equation}
	where $\tilde J_{r,c}= J_{r,c}$ when $k=1$. Therefore, the base step in the induction holds.
	
	\textit{2. Inductive hypothesis:} Assume our proposition holds for diagonal exclusion minors associated with submatrices of dimension $(k-1)\times(k-1)$. This implies that for any set of valid $R$ and $C$ we have
	\begin{equation}\label{eq37}
		\det\left(\mathbf{M}^{\mathbf{O},k-1}_{R,C}\right) = \det\left(\mathbf{M}^{\mathbf{\Omega},k-1}_{R,C}\right) + (\alpha-\tilde{\alpha})\tilde{J}_{R,C}\left(\tilde{\alpha},\theta^0\right)
	\end{equation}
	
	\textit{3. Inductive step for $k$}: We prove the desired result using both Corollary \ref{coro:1}, and invoking the Laplace expansion for determinants.
	
	Let $\mathbf{M}^{\mathbf{O},k}_{R,C}$ be a diagonal exclusion submatrix of $\mathbf{O}$ of dimension $k$ with rows $R=\{r_1,r_2,\dots,r_k\}$ and columns $C=\{c_1,c_2,\dots,c_k\}$ satisfying the usual conditions. Before proceeding, we need to establish a correspondence between the indices of rows $R$ and columns $C$ in the original matrices (either $\mathbf{O}$ or $\mathbf{\Omega}$) and their new indices (i.e. $\{1,2,\dots,k\}$) in the $k\times k$ submatrices $\mathbf{M}^{\mathbf{O},k}_{R,C}$ and $\mathbf{M}^{\mathbf{\Omega},k}_{R,C}$. Without loss of generality, let $(i,j)$ denote the position of the $i$-th row in $R$ and the $j$-th column in $C$ within the submatrices $\mathbf{M}^{\mathbf{O},k}_{R,C}$ and $\mathbf{M}^{\mathbf{\Omega},k}_{R,C}$. This establishes that the $(i,j)$-th elements of the submatrices $\mathbf{M}^{\mathbf{O},k}_{R,C}$ and $\mathbf{M}^{\mathbf{\Omega},k}_{R,C}$ are respectively given by $O_{r_i,c_j}$ and $\Omega_{r_i,c_j}$.
	
	With this established, we can use the Laplace expansion (fixing the row $r_k\in R$ in the original matrix) to write the determinant of $\mathbf{M}^{\mathbf{O},k}_{R,C}$ as
	\begin{equation}\label{eq38}
		\det\left(\mathbf{M}^{\mathbf{O},k}_{R,C}\right) = \sum_{j=1}^{k} (-1)^{k+j}O_{r_k,c_j}\det\left(\mathbf{M}^{\mathbf{O},k-1}_{R-r_k,C-c_j}\right)
	\end{equation}
	where $k$ and $j$ represent the indices in the submatrix $\mathbf{M}^{\mathbf{O},k}_{R,C}$ of row $r_k$ and columns $c_j$ of the matrix $\mathbf{O}$. The scalar $O_{r_k,c_j}$ is the $(k,j)$-th element of the submatrix $\mathbf{M}^{\mathbf{O},k}_{R,C}$; the matrix $\mathbf{M}^{\mathbf{O},k-1}_{R-r_k,C-c_j}$ is the submatrix of matrix $\mathbf{O}$ with dimension $k-1$ that includes rows $R-r_k$ and columns $C-c_j$.
	
	Because $r_k\neq c_j$ for all $j$ (recall $R \cap C =\emptyset$), we can use Corollary \ref{coro:1} to write
	\[O_{r_k,c_j} = \Omega_{r_k,c_j}+(\alpha-\tilde{\alpha})J_{r_k,c_j}\]
	Furthermore, by the inductive hypothesis
	\[\det\left(\mathbf{M}^{\mathbf{O},k-1}_{R-r_k,C-c_j}\right) = \det\left(\mathbf{M}^{\mathbf{\Omega},k-1}_{R-r_k,C-c_j}\right) + (\alpha-\tilde{\alpha})\tilde{J}_{R-r_k,C-c_j}\]
	Plugging these results into \eqref{eq38} we have that
	\begin{small}
		\[\det\left(\mathbf{M}^{\mathbf{O},k}_{R,C}\right) = \sum_{j=1}^{k} (-1)^{k+j}\left(\Omega_{r_k,c_j}+(\alpha-\tilde{\alpha})J_{r_k,c_j}\right)\left[\det\left(\mathbf{M}^{\mathbf{\Omega},k-1}_{R-r_k,C-c_j}\right) + (\alpha-\tilde{\alpha})\tilde{J}_{R-r_k,C-c_j}\right]\]
	\end{small}
	Distributing terms we obtain
	\begin{small}
		\begin{equation*}
			\begin{split}
				&\det\left(\mathbf{M}^{\mathbf{O},k}_{R,C}\right) = \sum_{j=1}^{k} (-1)^{k + j} \Omega_{r_k,c_j} \det\left( \mathbf{M}^{\mathbf{\Omega},k-1}_{R-r_k,C-c_j} \right)+ \\
				& (\alpha - \tilde{\alpha})\sum_{j=1}^{k} (-1)^{k + j}  \left[ J_{r_k,c_j} \det\left( \mathbf{M}^{\mathbf{\Omega},k-1}_{R-r_k,C-c_j} \right) + \Omega_{r_k,c_j} \tilde{J}_{R-r_k,C-c_j} + (\alpha - \tilde{\alpha}) J_{r_k,c_j} \tilde{J}_{R-r_k,C-c_j} \right]
			\end{split}
		\end{equation*}
	\end{small}
	Then, using the Laplace expansion we can rewrite the first term of the expression above as
	\begin{equation*}
		\sum_{j=1}^{k} (-1)^{k+j}\Omega_{r_k,c_j}\det\left(\mathbf{M}^{\mathbf{\Omega},k-1}_{R-r_k,C-c_j}\right)= \det\left(\mathbf{M}^{\mathbf{\Omega},k}_{R,C}\right).
	\end{equation*}
	Next, define $\tilde{J}_{R,C}(\tilde{\alpha}, \theta^0)$ as
	\begin{equation*}
		\begin{split}
			\tilde{J}_{R,C}(\tilde{\alpha}, \theta^0) = \sum_{j=1}^{k}(-1)^{k+j}\Bigl[J_{r_k,c_j}\det\left(\mathbf{M}^{\mathbf{\Omega},k-1}_{R-r_k,C-c_j}\right)+\\
			\Omega_{r_k,c_j}\tilde{J}_{R-r_k,C-c_j}(\tilde{\alpha},\theta^0) +(\alpha-\tilde{\alpha})J_{r_k,c_j}\tilde{J}_{R-r_k,C-c_j}(\tilde{\alpha},\theta^0)\Bigr].
		\end{split}
	\end{equation*}
	This permits us to conclude that
	\[\det\left(\mathbf{M}^{\mathbf{O},k}_{R,C}\right) =\det\left(\mathbf{M}^{\mathbf{\Omega},k}_{R,C}\right) +(\alpha-\tilde{\alpha})\tilde{J}_{R,C}(\tilde{\alpha}, \theta^0).\]
	The two preceding equations correspond to (\ref{eq11:Jtilde}) and (\ref{eq10:theo3}), respectively.
	\end{proof}
	
	\subsection*{Proof of Proposition \ref{prop:3}:}
	\begin{proof} Recall that we defined $\mathbf{\Omega}=\mathbf{F\Psi F}' + \mathbf{D}$. Then, as $\mathbf{D}$ is diagonal, one can see that all the off-diagonal entries of $\mathbf{\Omega}$ are equal to the off-diagonal terms of $\mathbf{F\Psi F}'$. Thus, as all the terms are equal it follows that for every diagonal exclusion $k \times k$ submatrix we have that
	\[\mathbf{M}^{\mathbf{\Omega},k}_{R,C}=\mathbf{M}^{\mathbf{F\Psi F}',k}_{R,C}\]
	Then, the determinant of every diagonal exclusion submatrix of $\mathbf{\Omega}$ is equal to the determinant of the corresponding diagonal exclusion submatrix of $\mathbf{F\Psi F}'$. Now, let $k>\bar{r}$. Then, as $\mathbf{F\Psi F}'$ is of rank $\bar{r}$, Lemma \ref{lemma:1} implies that $\det\left(\mathbf{M}^{\mathbf{\Omega},k}_{R,C}\right)=0$, proving the proposition. \end{proof}
	
	\subsection*{Proof of Proposition \ref{prop:4}:}
	
	\begin{proof} First, note from \eqref{eq11:Jtilde} that $\tilde{J}_{R,C}(\tilde{\alpha}, \theta^0)$ is recursively defined, so once we know $\tilde{J}_{R,C}(\tilde{\alpha}, \theta^0)$ for $|R|=|C|=1$ we can compute $\tilde{J}_{R,C}(\tilde{\alpha}, \theta^0)$ for any arbitrary set of rows and columns. When $k=1$, we have shown that $\tilde{J}_{R,C} = J_{r,c}$. Now, recall the definition of $\mathbf{J}(\tilde{\alpha}, \theta^0)$ in Claim \ref{claim:1}. $\mathbf{\Omega}$ and $\mathbf{L}$ do not depend on $\tilde{\alpha}$, so $J_{r,c}$ is a polynomial of degree at most 1 in $\tilde{\alpha}$. Given the recursive structure of $\tilde{J}_{R,C}(\tilde{\alpha}, \theta^0)$ we can conclude that for any value of $k$, $\tilde{J}_{R,C}\left(\tilde{\alpha},\theta^0\right)$ is a polynomial in $\tilde{\alpha}$.
	
	Conditional on the realized value of $\mathbf{F}_2$, the coefficients of $\tilde J_{R,C}(\tilde\alpha,\theta^0)$ are fixed functions of $\theta^0=(\alpha,\mathbf{F},\mathbf{\Psi},\mathbf{D})$. Now, we show that the proposed upper bound holds. The proof is done by induction. We show that $\tilde{J}_{R,C}\left(\tilde{\alpha},\theta^0\right)$ is a polynomial in $\tilde\alpha$ of degree at most $2k-1$ when $\vert R \vert = \vert C \vert = k$.
	
	\textit{Base case $k=1$}: We have shown that when $k=1$ $\tilde{J}_{R,C}\left(\tilde{\alpha},\theta^0\right)$ is a polynomial in $\tilde{\alpha}$ of at most degree $1=2k-1$, so the base step holds.
	
	\textit{Inductive hypothesis:} Assume our proposition holds for diagonal exclusion minors associated with submatrices of dimension $(k-1)\times(k-1)$. This implies that for any set of valid $R$ and $C$ we have that $\tilde{J}_{R,C}\left(\tilde{\alpha},\theta^0\right)$ is a polynomial in $\tilde{\alpha}$ of at most degree $2(k-1)-1$.
	
	\textit{Inductive step:} Recall the recursive definition of $\tilde{J}_{R,C}(\tilde{\alpha}, \theta^0)$ derived in \eqref{eq11:Jtilde}. By the inductive hypothesis $\tilde{J}_{R-r_k,C-c_j}$ is a polynomial in $\tilde{\alpha}$ of at most degree $2(k-1)-1$, the term $(\alpha-\tilde{\alpha})$ is a polynomial of degree 1 in $\tilde{\alpha}$, and $J_{r_k,c_j}$ is also a polynomial in $\tilde{\alpha}$ of at most degree one (see base step). Therefore, the term $(\alpha-\tilde{\alpha})J_{r_k,c_j}\tilde{J}_{R-r_k,C-c_j}$ is a polynomial in $\tilde{\alpha}$ of at most degree $2k-1$. Since the first two terms are of lower degree ($\mathbf{\Omega}$ does not depend on $\tilde{\alpha}$), we conclude that $\tilde{J}_{R,C}\left(\tilde{\alpha},\theta^0\right)$ is a polynomial in $\tilde{\alpha}$ of at most degree $2k-1$. \end{proof}

\subsection*{Proof of Proposition \ref{prop:5}:}

\begin{proof} Let $k=\bar{r}+1$ and recall that the sets $R$ and $C$ are ordered so that $r_1<\dots<r_{\bar{r}+1}$ and $c_1<\dots<c_{\bar{r}+1}$. We first prove the first claim.
It is enough to show that $
\tilde J_{R,C}(\alpha,\theta^0)\neq0$
almost surely, because then
$\tilde J_{R,C}(\tilde\alpha,\theta^0)$ cannot be the zero polynomial in
$\tilde\alpha$. By Lemma~\ref{lem:cofactor-representation-alpha} in Appendix \ref{Ap:B},
\begin{equation}\label{eq:cof}
	\tilde J_{R,C}(\alpha,\theta^0)
	=
	\sum_{i=1}^{k}\sum_{j=1}^{k}
	(-1)^{i+j}
	J^\alpha_{r_i,c_j}
	\det \left(\mathbf{M}^{\mathbf{\Omega},k-1}_{R-r_i,C-c_j}\right),
\end{equation}
where $
\mathbf{J}^\alpha\equiv \mathbf{J}(\alpha,\theta^0)=\mathbf{L}\mathbf{\Omega}+\mathbf{\Omega}\mathbf{L}'$.	

Because $R\cap C=\emptyset$, the diagonal matrix $\mathbf{D}$ does not enter the
block $\mathbf{M}^{\mathbf{\Omega},k}_{R,C}$. Hence
\[	\mathbf{M}^{\mathbf{\Omega},k}_{R,C}
=
\mathbf{\Omega}_{R,C}
=
\mathbf{F}_R\mathbf{\Psi}\mathbf{F}_C'. \]
Now decompose
\[
\mathbf{J}^\alpha
=
\mathbf{L}\mathbf{\Omega}+\mathbf{\Omega}\mathbf{L}'
=
\mathbf{L}(\mathbf{F}\mathbf{\Psi}\mathbf{F}')+(\mathbf{F}\mathbf{\Psi}\mathbf{F}')\mathbf{L}'
+
\mathbf{L}\mathbf{D}+\mathbf{D}\mathbf{L}'.
\]
We first show that the factor part contributes zero to the preceding cofactor
sum \eqref{eq:cof}. Consider
\[
\mathbf{M}(\varepsilon)
=
\bigl[(\mathbf{I}+\varepsilon \mathbf{L})\mathbf{F}\bigr]\mathbf{\Psi}
\bigl[(\mathbf{I}+\varepsilon \mathbf{L})\mathbf{F}\bigr]'.
\]
For every $\varepsilon$, the matrix $\mathbf{M}(\varepsilon)$ has rank at most
$\bar{r}$. Since $k=\bar{r}+1$,
\[
\det\bigl(\mathbf{M}(\varepsilon)_{R,C}\bigr)\equiv0.
\]
Moreover,
\[
\mathbf{M}(\varepsilon)_{R,C}
=
(\mathbf{F}\mathbf{\Psi}\mathbf{F}')_{R,C}
+
\varepsilon
\bigl[\mathbf{L}(\mathbf{F}\mathbf{\Psi}\mathbf{F}')+(\mathbf{F}\mathbf{\Psi}\mathbf{F}')\mathbf{L}'\bigr]_{R,C}
+
O(\varepsilon^2).
\]
Taking the coefficient of $\varepsilon$ in
$\det\left(\mathbf{M}(\varepsilon)_{R,C}\right)\equiv0$, by multilinearity of the determinant,
gives
\[
\sum_{i=1}^{k}\sum_{j=1}^{k}
(-1)^{i+j}
\bigl[\mathbf{L}(\mathbf{F}\mathbf{\Psi}\mathbf{F}')+(\mathbf{F}\mathbf{\Psi}\mathbf{F}')\mathbf{L}'\bigr]_{r_i,c_j}
\det \left( (\mathbf{F}\mathbf{\Psi}\mathbf{F}')_{R-r_i,C-c_j} \right)
=
0.
\]
Since $R\cap C=\emptyset$, also
\[
(R-r_i)\cap(C-c_j)=\emptyset
\]
for all $i,j$. Therefore the diagonal matrix $\mathbf{D}$ does not enter these
smaller blocks, and
\[
(\mathbf{F}\mathbf{\Psi}\mathbf{F}')_{R-r_i,C-c_j}
=
\mathbf{M}^{\mathbf{\Omega},k-1}_{R-r_i,C-c_j}.
\]
Hence
\[
\sum_{i=1}^{k}\sum_{j=1}^{k}
(-1)^{i+j}
\bigl[\mathbf{L}(\mathbf{F}\mathbf{\Psi}\mathbf{F}')+(\mathbf{F}\mathbf{\Psi}\mathbf{F}')\mathbf{L}'\bigr]_{r_i,c_j}
\det \left( \mathbf{M}^{\mathbf{\Omega},k-1}_{R-r_i,C-c_j} \right)
=
0.
\]
Thus only the diagonal-error part $\mathbf{L}\mathbf{D}+\mathbf{D}\mathbf{L}'$ remains:
\[
\tilde J_{R,C}(\alpha,\theta^0)
=
\sum_{i=1}^{k}\sum_{j=1}^{k}
(-1)^{i+j}
(\mathbf{L}\mathbf{D}+\mathbf{D}\mathbf{L}')_{r_i,c_j}
\det \left( \mathbf{M}^{\mathbf{\Omega},k-1}_{R-r_i,C-c_j} \right).
\]
Again using $(R-r_i)\cap(C-c_j)=\emptyset$, we have
\[
\mathbf{M}^{\mathbf{\Omega},k-1}_{R-r_i,C-c_j}
=
\mathbf{F}_{R-r_i}\mathbf{\Psi}\mathbf{F}_{C-c_j}'.
\]
Since $k-1=\bar{r}$, the matrices $\mathbf{F}_{R-r_i}$ and $\mathbf{F}_{C-c_j}$ are
$\bar{r}\times \bar{r}$. Therefore
\[
\det \left(\mathbf{M}^{\mathbf{\Omega},k-1}_{R-r_i,C-c_j}\right)
=
\det(\mathbf{\Psi})
\det(\mathbf{F}_{R-r_i})
\det(\mathbf{F}_{C-c_j}).
\]
Thus
\[
\tilde J_{R,C}(\alpha,\theta^0)
=
\det(\mathbf{\Psi})
\sum_{i=1}^{k}\sum_{j=1}^{k}
(-1)^{i+j}
a_{ij}
\det(\mathbf{F}_{R-r_i})
\det(\mathbf{F}_{C-c_j}),
\]
where
\[
a_{ij}=(\mathbf{L}\mathbf{D}+\mathbf{D}\mathbf{L}')_{r_i,c_j}
=
\begin{cases}
	\alpha^{r_i-c_j-1}d_{c_j}, & r_i>c_j,\\[0.4em]
	\alpha^{c_j-r_i-1}d_{r_i}, & c_j>r_i.
\end{cases}
\]
Since $R\cap C=\emptyset$, these are the only cases. Let
\[
\mathbf{A}=(a_{ij})_{i,j=1}^{k}.
\]
Because $d_t>0$ for every $t$, the matrix $\mathbf{A}$ is nonzero whenever
$\alpha\neq0$. If $\alpha=0$ but
\[
\min_{r\in R,\ c\in C}|r-c|=1,
\]
then there exist $r_\ell\in R$ and $c_h\in C$ with
$r_\ell=c_h+1$ or $c_h=r_\ell+1$. In the first case,
\[
a_{\ell h}=d_{c_h}>0,
\]
and in the second case,
\[
a_{\ell h}=d_{r_\ell}>0.
\]
Hence $\mathbf{A}\neq0$ under the conditions of the first claim. Define
\[
u_R(\mathbf{F})
=
\left(
(-1)^i\det(\mathbf{F}_{R-r_i})
\right)_{i=1}^{k},
\qquad
u_C(\mathbf{F})
=
\left(
(-1)^j\det(\mathbf{F}_{C-c_j})
\right)_{j=1}^{k}.
\]
Then
\[
\tilde J_{R,C}(\alpha,\theta^0)
=
\det(\mathbf{\Psi})\,u_R(\mathbf{F})'\mathbf{A}\,u_C(\mathbf{F}).
\]
Since $\mathbf{A}\neq0$, it follows by Lemma~\ref{lem:bilinear-nonzero-polynomial} (see Appendix \ref{Ap:B}) that $
u_R(\mathbf{F})'\mathbf{A}\,u_C(\mathbf{F})$ is a nonzero polynomial in the unrestricted entries of $\mathbf{F}_2$. Since
$\det(\mathbf{\Psi})\neq0$,
\[
\tilde J_{R,C}(\alpha,\theta^0)
=
\det(\mathbf{\Psi})\,u_R(\mathbf{F})'\mathbf{A}\,u_C(\mathbf{F})
\]
is also a nonzero polynomial in $\operatorname{vec}(\mathbf{F}_2)$. Equivalently, for
fixed $\alpha$, $\mathbf{D}$, and $\mathbf{\Psi}$, the set
\[
\mathcal N_{R,C}
=
\left\{
\operatorname{vec}(\mathbf{F}_2)\in\mathbb R^{(T-\bar r)\bar r}:
\tilde J_{R,C}(\alpha,\theta^0)=0
\right\}
\]
is the zero set of a nonzero polynomial in $\operatorname{vec}(\mathbf{F}_2)$. Hence,
by Lemma~\ref{lemma:3}, $\mathcal N_{R,C}$ has Lebesgue measure zero
in the unrestricted-factor parameter space.

We now invoke the probabilistic assumption. By Assumption \ref{as:6},
$\operatorname{vec}(\mathbf{F}_2)$ is absolutely continuous with respect to Lebesgue
measure. Since $\mathcal N_{R,C}$ has Lebesgue measure zero, it follows that
\[
\Pr\left(\operatorname{vec}(\mathbf{F}_2)\in\mathcal N_{R,C}\right)=0.
\]
Therefore
\[
\tilde J_{R,C}(\alpha,\theta^0)\neq0
\]
almost surely, and so
$\tilde J_{R,C}(\tilde\alpha,\theta^0)$ cannot be the zero polynomial in
$\tilde\alpha$ almost surely. This proves the first claim.\vspace{0.4cm}

Now suppose
\[
\alpha=0
\qquad\text{and}\qquad
\min_{r\in R,\ c\in C}|r-c|>1.
\]
When $\alpha=0$, the matrix $\mathbf{L}$ is the one-lag shift matrix and
\[
\mathbf{I}+(\alpha-\tilde\alpha)\mathbf{L}
=
\mathbf{I}-\tilde\alpha \mathbf{L}.
\]
On the off-diagonal block $R,C$, the diagonal matrix $\tilde{\mathbf{D}}$ does not
contribute because $R\cap C=\emptyset$. Hence
\[
\mathbf{M}^{\mathbf{O},k}_{R,C}
=
\left[
(\mathbf{I}-\tilde\alpha \mathbf{L})\mathbf{\Omega}(\mathbf{I}-\tilde\alpha \mathbf{L})'
\right]_{R,C}.
\]
Writing $\mathbf{\Omega}=\mathbf{F}\mathbf{\Psi}\mathbf{F}'+\mathbf{D}$, we can decompose:
\[
\mathbf{M}^{\mathbf{O},k}_{R,C}
=
\left[
(\mathbf{I}-\tilde\alpha \mathbf{L})\mathbf{F}\mathbf{\Psi}\mathbf{F}'(\mathbf{I}-\tilde\alpha \mathbf{L})'
\right]_{R,C}+\left[
(\mathbf{I}-\tilde\alpha \mathbf{L})\mathbf{D}(\mathbf{I}-\tilde\alpha \mathbf{L})'
\right]_{R,C}.
\]
The transformed factor part is
\[
(\mathbf{I}-\tilde\alpha \mathbf{L})\mathbf{F}\mathbf{\Psi}\mathbf{F}'(\mathbf{I}-\tilde\alpha \mathbf{L})'
=
\mathbf{F}^\star(\tilde\alpha)\mathbf{\Psi}\mathbf{F}^\star(\tilde\alpha)',
\]
where
\[
\mathbf{F}^\star(\tilde\alpha)=(\mathbf{I}-\tilde\alpha \mathbf{L})\mathbf{F}.
\]
The matrix $\mathbf{F}^\star(\tilde\alpha)\mathbf{\Psi}\mathbf{F}^\star(\tilde\alpha)'$ has rank at most $\bar{r}$. The transformed diagonal-error part is
\[
(\mathbf{I}-\tilde\alpha \mathbf{L})\mathbf{D}(\mathbf{I}-\tilde\alpha \mathbf{L})'.
\]
Since $\mathbf{L}$ is the one-lag shift matrix when $\alpha=0$, this matrix is tridiagonal. Therefore
its $R,C$ block is zero whenever
\[
\min_{r\in R,\ c\in C}|r-c|>1.
\]
Thus
\[
\mathbf{M}^{\mathbf{O},k}_{R,C}
=
\mathbf{F}^\star_R(\tilde\alpha)\mathbf{\Psi}\mathbf{F}^\star_C(\tilde\alpha)',
\]
which has rank at most $\bar{r}$. Since $k=\bar{r}+1$, $
\det \left(\mathbf{M}^{\mathbf{O},\bar r+1}_{R,C}\right)=0$ for every $\tilde\alpha$. 	Also, $
\det \left(\mathbf{M}^{\mathbf{\Omega},k}_{R,C}\right)=0$
because $
\mathbf{M}^{\mathbf{\Omega},k}_{R,C}=\mathbf{F}_R\mathbf{\Psi}\mathbf{F}_C'$
has rank at most $\bar{r}$. Therefore, using
\[
\det \left( \mathbf{M}^{\mathbf{O},k}_{R,C} \right)
=
\det \left(\mathbf{M}^{\mathbf{\Omega},k}_{R,C}\right)
+
(\alpha-\tilde\alpha)
\tilde J_{R,C}(\tilde\alpha,\theta^0),
\]
and setting $\alpha=0$, we get
\[
0
=
-\tilde\alpha\,
\tilde J_{R,C}(\tilde\alpha,\theta^0)
\]
for every $\tilde\alpha$. Thus
$\tilde J_{R,C}(\tilde\alpha,\theta^0)=0$ for every
$\tilde\alpha\neq0$. Since a univariate polynomial of finite degree that vanishes on infinitely many points is identically zero,
\[
\tilde J_{R,C}(\tilde\alpha,\theta^0)\equiv0.
\]
This proves the second claim. \end{proof}

\subsection*{Proof of Theorem \ref{thm:1} for $\alpha=0$:}

\begin{proof}
	Since $\alpha=0$, we have $\mathbf{\Gamma}=\mathbf{I}_T$ and
	\[
	\mathbf{\Sigma}(\theta^0)=\mathbf{F}\mathbf{\Psi}\mathbf{F}'+\mathbf{D}.
	\]
	Suppose, toward a contradiction, that there exists an observationally equivalent
	$\tilde{\theta}$ with $\tilde{\alpha}\neq 0$. Then
	\[
	\mathbf{F}\mathbf{\Psi}\mathbf{F}'+\mathbf{D}
	=
	\tilde{\mathbf{\Gamma}}
	\left(
	\tilde{\mathbf F}\tilde{\mathbf\Psi}\tilde{\mathbf F}'
	+\tilde{\mathbf D}
	\right)
	\tilde{\mathbf{\Gamma}}'.
	\]
	Premultiplying and postmultiplying by $\tilde{\mathbf{\Gamma}}^{-1}$ and rearranging gives
	\[
	\tilde{\mathbf D}
	-
	\tilde{\mathbf{\Gamma}}^{-1}\mathbf{D}\tilde{\mathbf{\Gamma}}^{-1'}
	=
	\tilde{\mathbf{\Gamma}}^{-1}\mathbf{F}\mathbf{\Psi}\mathbf{F}'
	\tilde{\mathbf{\Gamma}}^{-1'}
	-
	\tilde{\mathbf F}\tilde{\mathbf\Psi}\tilde{\mathbf F}'.
	\]
	The right-hand side is the difference of two matrices of rank at most $\bar r$.
	Therefore,
	\[
	\operatorname{rank}
	\left(
	\tilde{\mathbf D}
	-
	\tilde{\mathbf{\Gamma}}^{-1}\mathbf{D}\tilde{\mathbf{\Gamma}}^{-1'}
	\right)
	\leq 2\bar r.
	\]
	However, when $\alpha=0$ the matrix $\mathbf{L}$ reduces to the subdiagonal
	shift: $L_{t,t-1}=1$ for $t=2,\ldots,T$ and all other below-diagonal entries
	are zero. Hence $\tilde{\mathbf{\Gamma}}^{-1}=\mathbf{I}_T-\tilde{\alpha}\mathbf{L}$
	is lower bidiagonal, and
	$\tilde{\mathbf{\Gamma}}^{-1}\mathbf{D}\tilde{\mathbf{\Gamma}}^{-1'}=
	(\mathbf{I}_T-\tilde{\alpha}\mathbf{L})\mathbf{D}(\mathbf{I}_T-\tilde{\alpha}\mathbf{L}')$
	is the product of a lower bidiagonal matrix, a diagonal matrix, and an upper
	bidiagonal matrix, and is therefore tridiagonal. Since $\tilde{\mathbf{D}}$ is
	diagonal, the matrix
	\[
	\tilde{\mathbf D}
	-
	\tilde{\mathbf{\Gamma}}^{-1}\mathbf{D}\tilde{\mathbf{\Gamma}}^{-1'}
	\]
	is also tridiagonal. Its first off-diagonal elements are
	\[
	\left(
	\tilde{\mathbf D}
	-
	\tilde{\mathbf{\Gamma}}^{-1}\mathbf{D}\tilde{\mathbf{\Gamma}}^{-1'}
	\right)_{t,t+1}
	=
	\left(
	\tilde{\mathbf D}
	-
	\tilde{\mathbf{\Gamma}}^{-1}\mathbf{D}\tilde{\mathbf{\Gamma}}^{-1'}
	\right)_{t+1,t}
	=
	\tilde{\alpha}D_{t,t},
	\qquad t=1,\dots,T-1.
	\]
	Since $\tilde{\alpha}\neq0$ and both $\mathbf{D}$ and $\tilde{\mathbf{D}}$ are diagonal PD,
	\[
	\left(
	\tilde{\mathbf D}
	-
	\tilde{\mathbf{\Gamma}}^{-1}\mathbf{D}\tilde{\mathbf{\Gamma}}^{-1'}
	\right)_{t,t+1}
	=
	\tilde{\alpha}D_{t,t}\neq0,
	\qquad t=1,\dots,T-1.
	\]
	Consider the submatrix obtained by taking rows $1,\dots,T-1$ and columns $2,\dots,T$.
	Because the matrix is tridiagonal, this submatrix is lower-triangular, with diagonal
	entries given by $
	\tilde{\alpha}D_{1,1},\dots,\tilde{\alpha}D_{T-1,T-1}$. Therefore its determinant is
	\[
	\prod_{t=1}^{T-1}\tilde{\alpha}D_{t,t}\neq0.
	\]
	Thus the full matrix has a nonzero $(T-1)\times(T-1)$ minor, and hence
	\[
	\operatorname{rank}
	\left(
	\tilde{\mathbf D}
	-
	\tilde{\mathbf{\Gamma}}^{-1}\mathbf{D}\tilde{\mathbf{\Gamma}}^{-1'}
	\right)
	\geq T-1.
	\]
	Under the baseline condition $T\geq 2(\bar r+1)$,
	\[
	T-1\geq 2\bar r+1>2\bar r,
	\]
	contradicting the earlier rank bound. Therefore, no observationally equivalent
	representation with $\tilde{\alpha}\neq0$ exists, so $\tilde{\alpha}=0$.
\end{proof}

\newpage
	
	\section{Additional results}\label{Ap:B}
	
	This section collects some intermediate results that are instrumental in deriving the proof of the main identification strategy. The proofs of these results are available in the Online Appendix.
	
	\begin{lemma}
		\label{lem:cofactor-representation-alpha}
		Let $R=\{r_1,\cdots,r_k\}$ and $C=\{c_1,\cdots,c_k\}$, with
		$r_1<\cdots<r_k$, $c_1<\cdots<c_k$, and $R\cap C=\emptyset$. Define
		\[
		\mathbf{J}^\alpha=\mathbf{J}(\alpha,\theta^0)=\mathbf{L\Omega}+\mathbf{\Omega L}'.
		\]
		Then, under Assumptions \ref{as:1}-\ref{as:5},
		\[
		\tilde J_{R,C}(\alpha,\theta^0)
		=
		\sum_{i=1}^{k}\sum_{j=1}^{k}
		(-1)^{i+j}
		J^\alpha_{r_i,c_j}
		\det \left( \mathbf{M}^{\mathbf{\Omega},k-1}_{R-r_i,C-c_j} \right).
		\]
	\end{lemma}
	
	\begin{lemma}
		\label{lem:span-signed-minors}
		Fix any ordered index set
		\[
		S=\{s_1,\cdots,s_{\bar{r}+1}\}.
		\]
		Define
		\[
		u_S(\mathbf{F})
		=
		\left(
		(-1)^i\det(\mathbf{F}_{S-s_i})
		\right)_{i=1}^{\bar{r}+1}.
		\]
		Then, under the normalization in Assumption \ref{as:2},
		\[
		\operatorname{span}
		\left\{
		u_S(\mathbf{F}): \operatorname{vec}(\mathbf{F}_2)\in\mathbb R^{(T-\bar r)\bar r}
		\right\}
		=
		\mathbb R^{\bar r+1}.
		\]
	\end{lemma}
	
	\begin{lemma}
		\label{lem:bilinear-nonzero-polynomial}
		Let $R=\{r_1,\cdots,r_k\}$ and $C=\{c_1,\cdots,c_k\}$, with
		$r_1<\cdots<r_k$, $c_1<\cdots<c_k$, $R\cap C=\emptyset$, and
		$k=\bar r+1$. Suppose $\mathbf{A}\in\mathbb R^{k\times k}$
		is nonzero. Define
		\[
		u_R(\mathbf{F})
		=
		\left(
		(-1)^i\det(\mathbf{F}_{R-r_i})
		\right)_{i=1}^{k},
		\qquad
		u_C(\mathbf{F})
		=
		\left(
		(-1)^j\det(\mathbf{F}_{C-c_j})
		\right)_{j=1}^{k}.
		\]
		Then, under Assumption \ref{as:2},
		\[
		u_R(\mathbf{F})'\mathbf{A}\,u_C(\mathbf{F})
		\]
		is a nonzero polynomial in the unrestricted entries of $\mathbf{F}_2$.
	\end{lemma}
	
	\newpage

\end{appendix}

\clearpage
\setcounter{page}{1}
\onehalfspacing
\begin{center}
	\begin{LARGE}
\textbf{Online Appendix:  Technical Details}
	\end{LARGE}
	\end{center}
\addcontentsline{toc}{section}{Online Appendix}

\section{Proof of remaining results}

\subsection*{Proof of Claim \ref{claim:1}:}
\begin{proof}
	By definition $\mathbf{\Omega} = \mathbf{F}\mathbf{\Psi}\mathbf{F}'+\mathbf{D}$. Using the fact that $\tilde{\mathbf{\Gamma}}^{-1}\mathbf{\Gamma}=\mathbf{I}_T+(\alpha-\tilde{\alpha})\mathbf{L}$ and distributing terms:
	\begin{align*}
		\mathbf{O}&=\left(\mathbf{I}_T+(\alpha-\tilde{\alpha})\mathbf{L}\right)\mathbf{\Omega}\left(\mathbf{I}_T+(\alpha-\tilde{\alpha})\mathbf{L}\right)'-\tilde{\mathbf{D}}= \mathbf{\Omega} - \tilde{\mathbf{D}} + (\alpha - \tilde{\alpha}) \mathbf{J}(\tilde{\alpha}, \theta^0)
	\end{align*}
	\noindent where $\mathbf{J}(\tilde{\alpha}, \theta^0)=
	\mathbf{L}\mathbf{\Omega}+\mathbf{\Omega}\mathbf{L}'+(\alpha-\tilde{\alpha})\mathbf{L}\mathbf{\Omega}\mathbf{L}'$.
\end{proof}

\subsection*{Proof of Proposition \ref{prop:6}}

\begin{proof}
	
	Relabel the polynomials to simplify notation as follows:
	\[
	p(\tilde\alpha)=\tilde J_{R_0,C_0}(\tilde\alpha,\theta^0),
	\qquad
	q(\tilde\alpha)=\tilde J_{R_1,C_1}(\tilde\alpha,\theta^0).
	\]
	
The proof proceeds according to the following strategy. First, we explain why the resultant is a polynomial in $\mathbf F_2$. Second, for every fixed value of $(\alpha,\mathbf{D},\mathbf{\Psi})$ we construct one admissible value of $\mathbf F_2$ for which the resultant is nonzero. The first two steps automatically imply that the resultant is a nonzero polynomial in $\mathbf F_2$. Third, we use Assumption \ref{as:6} to turn this algebraic statement into a probability-zero statement.
	
	\medskip
	\noindent\textit{Step 1: the resultant is a polynomial in the unrestricted factor rows.}
	Throughout this proof, resultants and common roots are understood over $\mathbb C$. Define
	$\mathbf Q=\tilde{\mathbf{\Gamma}}^{-1}\mathbf{\Gamma}$. The entries of this matrix are:
	\[
	Q_{ij}=
	\begin{cases}
		1, & i=j,\\
		(\alpha-\tilde\alpha)\alpha^{i-j-1}, & i>j,\\
		0, & i<j.
	\end{cases}
	\]
	Fix admissible $(\alpha,\mathbf D,\mathbf\Psi)$ with $\alpha\neq0$. Since
	$\mathbf F=(\mathbf I_{\bar r},\mathbf F_2')'$, every entry of
	$\mathbf F\mathbf\Psi\mathbf F'$ is a polynomial in the unrestricted entries of
	$\mathbf F_2$. Also, every entry of $\mathbf Q$ is a polynomial in
	$\tilde\alpha$, with coefficients depending only on $\alpha$. Therefore every entry of $
	\mathbf Q(\mathbf F\mathbf\Psi\mathbf F'+\mathbf D)
	\mathbf Q'$ is a polynomial jointly in $\tilde\alpha$ and the unrestricted entries of $\mathbf F_2$.
	Taking determinants preserves polynomiality, so the minor
	$\det\!\left([\mathbf Q(\mathbf F\mathbf\Psi\mathbf F'+\mathbf D)\mathbf Q']_{R_0,C_0}\right)$
	is a polynomial jointly in $\tilde\alpha$ and the unrestricted entries of
	$\mathbf F_2$. Since $R_0\cap C_0=\emptyset$, the term $\tilde{\mathbf D}_{R_0,C_0}$
	is zero, so this minor equals $\det(\mathbf O_{R_0,C_0})$.
	
	By Propositions~\ref{prop:2}
	and~\ref{prop:3}, $\det(\mathbf O_{R_0,C_0})=(\alpha-\tilde\alpha)\tilde
	J_{R_0,C_0}(\tilde\alpha,\theta^0)$. Hence $p(\tilde\alpha)=\tilde
	J_{R_0,C_0}(\tilde\alpha,\theta^0)$ is obtained by exact cancellation of
	$(\alpha-\tilde\alpha)$ from a polynomial in $\tilde\alpha$, and therefore
	remains a polynomial in $\tilde\alpha$ whose coefficients are polynomial
	functions of the unrestricted entries of $\mathbf F_2$. The identical argument
	applies to $q(\tilde\alpha)=\tilde J_{R_1,C_1}(\tilde\alpha,\theta^0)$. Choose fixed
	formal degree bounds $d_p,d_q\geq 0$ and write
	\[
	p(\tilde\alpha)
	=
	a_0(\mathbf F_2)\tilde\alpha^{d_p}
	+a_1(\mathbf F_2)\tilde\alpha^{d_p-1}
	+\cdots
	+a_{d_p}(\mathbf F_2),\qquad
	q(\tilde\alpha)
	=
	b_0(\mathbf F_2)\tilde\alpha^{d_q}
	+b_1(\mathbf F_2)\tilde\alpha^{d_q-1}
	+\cdots
	+b_{d_q}(\mathbf F_2),
	\]
	where each
	$a_j(\mathbf F_2)$ and each $b_j(\mathbf F_2)$ is a polynomial function of the
	unrestricted entries of $\mathbf F_2$ whose coefficients depend on
	$(\alpha,\mathbf D,\mathbf\Psi)$. For notational simplicity, we suppress the dependence of the coefficients on
	$\mathbf F_2$ and write
	\[
	p(\tilde\alpha)
	=
	a_0\tilde\alpha^{d_p}
	+a_1\tilde\alpha^{d_p-1}
	+\cdots
	+a_{d_p},
	\qquad
	q(\tilde\alpha)
	=
	b_0\tilde\alpha^{d_q}
	+b_1\tilde\alpha^{d_q-1}
	+\cdots
	+b_{d_q}.
	\]
	The Sylvester resultant is the determinant of the Sylvester matrix formed by
	stacking shifted copies of the coefficient rows
	$(a_0,\ldots,a_{d_p})$ and $(b_0,\ldots,b_{d_q})$. Therefore $
	\mathcal R_{\alpha,\mathbf D,\mathbf\Psi}(\mathbf F_2)$ is itself a polynomial in the coefficients $a_j,b_j$, and hence a polynomial in the unrestricted entries of $\mathbf F_2$. It remains to show that this polynomial is not identically zero. We will not compute the
	Sylvester determinant explicitly. Instead, we use the following standard property of the
	resultant. If a specialization\footnote{We use the term specialization to refer to a chosen value of $\mathbf F_2$ within the parametric space.} of $\mathbf F_2$ is chosen so that the specialized polynomial
	$p$ retains its formal degree $d_p$, then
	\begin{equation}\label{eq:resultant-product-didactic}
		\mathcal R_{\alpha,\mathbf D,\mathbf\Psi}(\mathbf F_2)
		=
		\operatorname{lc}(p)^{d_q}
		\prod_{\lambda:p(\lambda)=0} q(\lambda),
	\end{equation}
	where the complex roots of $p$ are counted with multiplicity and $\operatorname{lc}(p)$ is the
	leading coefficient of the specialized $p$ (which will be different from zero by construction). Thus, for such a specialization, the resultant is
	nonzero if $q$ is nonzero at every root of $p$. This is the reason for the construction below:
	we choose $\mathbf F_2$ so that the first minor gives a simple polynomial $p$, and then we
	show that the second minor $q$ does not vanish at any root of that polynomial.
	
	\medskip
	\noindent\textit{Step 2.1: derive the determinant factorization for the first selected minor.}

Recall that $
R_0=\{1,\ldots,\bar r+1\}$, and let $S\subset\{\bar r+2,\ldots,T\}$ have cardinality $\bar r+1$. Order the elements of $S$ increasingly. We first compute the selected block $\mathbf O_{R_0,S}$. Since
$R_0\cap S=\emptyset$, the block $(R_0,S)$ of $\tilde{\mathbf D}$ is zero. Hence
\[
\mathbf O_{R_0,S}
=
\left[
\mathbf Q(\mathbf F\mathbf\Psi\mathbf F'+\mathbf D)\mathbf Q'
\right]_{R_0,S}.
\]
Because $\mathbf Q_{R_0,R_0}$ is a $(\bar{r}+1)\times(\bar{r}+1)$ lower-triangular matrix with unit diagonal, we have $
\det(\mathbf Q_{R_0,R_0})=1$. Therefore premultiplication by $\mathbf Q_{R_0,R_0}^{-1}$ does not change the
determinant: $
\det(\mathbf O_{R_0,S})
=
\det\left(
\mathbf Q_{R_0,R_0}^{-1}\mathbf O_{R_0,S}
\right)$.

We now compute the matrix inside this determinant. First consider the term in $\mathbf O_{R_0,S}$ generated by
$\mathbf F\mathbf\Psi\mathbf F'$:
\[
\left[
\mathbf Q\mathbf F\mathbf\Psi\mathbf F'\mathbf Q'
\right]_{R_0,S}=
(\mathbf Q\mathbf F)_{R_0}
\mathbf\Psi
(\mathbf Q\mathbf F)_S'.
\]
We now rewrite the rows of $\mathbf Q\mathbf F$. Define
\begin{equation}\label{eq:h}
	h_t=\sum_{\ell=1}^t\alpha^{t-\ell}f_\ell,
	\qquad h_0=0.
\end{equation}
Then, the $t$-th row of $\mathbf Q\mathbf F$ can be written as
\[
(\mathbf Q\mathbf F)_{t.}^{'} = f_t
+
(\alpha-\tilde\alpha)
\sum_{\ell=1}^{t-1}\alpha^{t-\ell-1}f_\ell =
\sum_{\ell=1}^t\alpha^{t-\ell}f_\ell
-
\tilde\alpha
\sum_{\ell=1}^{t-1}\alpha^{t-\ell-1}f_\ell
=
h_t-\tilde\alpha h_{t-1}.
\]
where $f_{t} \in \mathbb{R}^{\bar r}$ denotes the $t$-th factor row written as a column vector. Hence, for any set of row indices $R$ we have
\[
(\mathbf Q\mathbf F)_R
=
\begin{bmatrix}
	\left(h_r-\tilde\alpha h_{r-1}\right)^{'}
\end{bmatrix}_{r\in R}.
\]

Apply this identity to the row set $R_0$. Since $R_0$ is the initial block
$\{1,\ldots,\bar r+1\}$ and $\mathbf Q$ is lower-triangular, the rows $R_0$
of $\mathbf Q\mathbf F$ depend only on the rows $R_0$ of $\mathbf F$. Therefore, for $R_0$ this expression simplifies to $
(\mathbf Q\mathbf F)_{R_0}
=
\mathbf Q_{R_0,R_0}\mathbf F_{R_0}$. It follows that
\[
\left[
\mathbf Q\mathbf F\mathbf\Psi\mathbf F'\mathbf Q'
\right]_{R_0,S}
=
\mathbf Q_{R_0,R_0}\mathbf F_{R_0}\mathbf\Psi
\begin{bmatrix}
	\left(h_s-\tilde\alpha h_{s-1}\right)'
\end{bmatrix}_{s\in S}'.
\]
After premultiplication by $\mathbf Q_{R_0,R_0}^{-1}$, this term becomes $
\mathbf F_{R_0}\mathbf\Psi
\begin{bmatrix}
	\left(h_s-\tilde\alpha h_{s-1}\right)'
\end{bmatrix}_{s\in S}'$.

Next consider the term in $\mathbf O_{R_0,S}$ generated by the diagonal matrix
$\mathbf D$:
\[
\left[
\mathbf Q\mathbf D\mathbf Q'
\right]_{R_0,S}.
\]
Since $R_0=\{1,\ldots,\bar r+1\}$ is an initial block and $\mathbf Q$ is lower
triangular, we have that $
\mathbf Q_{R_0,\{1,\ldots,T\}}
=
\begin{bmatrix}
	\mathbf Q_{R_0,R_0} & 0
\end{bmatrix}$. Therefore $
\left[
\mathbf Q\mathbf D\mathbf Q'
\right]_{R_0,S}
=
\mathbf Q_{R_0,R_0}
\operatorname{diag}(d_1,\ldots,d_{\bar r+1})
\mathbf Q_{S,R_0}'$.

Premultiplying by $\mathbf Q_{R_0,R_0}^{-1}$ gives
\[
\mathbf Q_{R_0,R_0}^{-1}
\left[
\mathbf Q\mathbf D\mathbf Q'
\right]_{R_0,S}
=
\operatorname{diag}(d_1,\ldots,d_{\bar r+1})
\mathbf Q_{S,R_0}'.
\]
We now study the structure of this matrix. Its $(r,s)$-entry,
with $r\in R_0$ and $s\in S$, is $
d_r Q_{s,r}
=
d_r(\alpha-\tilde\alpha)\alpha^{s-r-1}$. Since $
s-r-1=(\bar r+1-r)+(s-\bar r-2)$, this entry can be written as $
(\alpha-\tilde\alpha)
\left(\alpha^{\bar r+1-r}d_r\right)
\alpha^{s-\bar r-2}$. Thus the dependence on $r$ and $s$ separates multiplicatively. Hence
\[
\operatorname{diag}(d_1,\ldots,d_{\bar r+1})
\mathbf Q_{S,R_0}'
=
(\alpha-\tilde\alpha)a b_S',
\]
where
\[
a=
\left(
\alpha^{\bar r}d_1,
\alpha^{\bar r-1}d_2,
\ldots,
\alpha d_{\bar r},
d_{\bar r+1}
\right)',\qquad
b_S=
\left(
\alpha^{s-\bar r-2}
\right)_{s\in S}.
\]
Combining this with the factor component gives
\begin{equation}\label{eq:premultiplied-block-R0}
	\mathbf Q_{R_0,R_0}^{-1}\mathbf O_{R_0,S}
	=
	\mathbf F_{R_0}\mathbf\Psi
	\begin{bmatrix}
		\left(h_s-\tilde\alpha h_{s-1}\right)'
	\end{bmatrix}_{s\in S}'
	+
	(\alpha-\tilde\alpha)a b_S'.
\end{equation}
The right-hand side can be written as a product of three square matrices (of dimension $\bar{r}+1$):
\[
\mathbf Q_{R_0,R_0}^{-1}\mathbf O_{R_0,S}
=
\begin{bmatrix}
	\mathbf F_{R_0} & a
\end{bmatrix}
\begin{bmatrix}
	\mathbf\Psi & 0\\
	0 & \alpha-\tilde\alpha
\end{bmatrix}
\begin{bmatrix}
	\left[\left(h_s-\tilde\alpha h_{s-1}\right)'\right]_{s\in S}'\\
	b_S'
\end{bmatrix}.
\]
Indeed, multiplying the three matrices gives exactly the two terms in
\eqref{eq:premultiplied-block-R0}. Taking determinants of the preceding display gives
\[
\det\!\left(\mathbf Q_{R_0,R_0}^{-1}\mathbf O_{R_0,S}\right)
=
\det
\begin{bmatrix}
	\mathbf F_{R_0} & a
\end{bmatrix}
\det
\begin{bmatrix}
	\mathbf\Psi & 0\\
	0 & \alpha-\tilde\alpha
\end{bmatrix}
\det
\begin{bmatrix}
	\left[\left(h_s-\tilde\alpha h_{s-1}\right)'\right]_{s\in S}'\\
	b_S'
\end{bmatrix}.
\]
Since $\mathbf Q_{R_0,R_0}$ is lower-triangular with unit diagonal,
\[
\det(\mathbf O_{R_0,S})
=
(\alpha-\tilde\alpha)
\det(\mathbf\Psi)
\det
\begin{bmatrix}
	\mathbf F_{R_0} & a
\end{bmatrix}
\det
\begin{bmatrix}
	\left[\left(h_s-\tilde\alpha h_{s-1}\right)'\right]_{s\in S}'\\
	b_S'
\end{bmatrix}.
\]
The last determinant is the determinant of the transpose of $
\left[
\begin{array}{cc}
	\left(h_s-\tilde\alpha h_{s-1}\right)' & \alpha^{s-\bar r-2}
\end{array}
\right]_{s\in S}$. Then
\[
\det(\mathbf O_{R_0,S})
=
(\alpha-\tilde\alpha)
\det(\mathbf\Psi)
\det
\begin{bmatrix}
	\mathbf F_{R_0} & a
\end{bmatrix}
\det
\left[
\begin{array}{cc}
	\left(h_s-\tilde\alpha h_{s-1}\right)' & \alpha^{s-\bar r-2}
\end{array}
\right]_{s\in S}.
\]
By Propositions \ref{prop:2} and \ref{prop:3} given that $R_0\cap S=\emptyset$ and $|R_0|=|S|=\bar r+1>\bar r$,
\[
\det(\mathbf O_{R_0,S})
=
(\alpha-\tilde\alpha)
\tilde J_{R_0,S}(\tilde\alpha,\theta^0).
\]
The preceding two displays are polynomial identities in $\tilde\alpha$. Since
$\alpha-\tilde\alpha$ is not the zero polynomial, it can be canceled from both sides.
Thus
\begin{equation}\label{eq:factorization-R0}
	\tilde J_{R_0,S}(\tilde\alpha,\theta^0)
	=
	\det(\mathbf\Psi)
	\det
	\begin{bmatrix}
		\mathbf F_{R_0} & a
	\end{bmatrix}
	\det
	\left[
	\begin{array}{cc}
		\left(h_s-\tilde\alpha h_{s-1}\right)' & \alpha^{s-\bar r-2}
	\end{array}
	\right]_{s\in S}.
\end{equation}
where
\[
a=
\left(
\alpha^{\bar r}d_1,
\alpha^{\bar r-1}d_2,
\ldots,
\alpha d_{\bar r},
d_{\bar r+1}
\right)'
\]

\medskip
\noindent\textit{Step 2.2: choose one admissible factor realization that makes $p$ simple.} We want to construct an arbitrary value of $\mathbf{F}$ that is convenient for the proof. Keep $f_1=e_1,\ldots,f_{\bar r}=e_{\bar r}$ (where $e_j$ is the
$j$-th canonical column vector in $\mathbb R^{\bar r}$), as required by the normalization
$\mathbf F=(\mathbf I_{\bar r},\mathbf F_2')'$ (see Assumption \ref{as:2}). These fixed rows determine
$h_1,\ldots,h_{\bar r}$ (see Equation \eqref{eq:h}). Since all rows $f_t$ with $t>\bar r$ are unrestricted, and since
$\alpha\neq0$, we may choose the later filtered rows and then recover the corresponding
factor rows from $
f_t=h_t-\alpha h_{t-1}$.

Choose a real number $\tau$ and impose
\begin{equation}\label{eq:h-specialization}
	h_{\bar r+1}=0,
	\qquad
	h_{\bar r+1+i}=\tau e_i,
	\quad i=1,\ldots,\bar r,
	\qquad
	h_{2\bar r+2}=0.
\end{equation}
Rows after $2\bar r+2$ do not enter the two selected minors and can be fixed arbitrarily. This
constructs an admissible one-parameter family of values of $\mathbf F_2$, indexed by
$\tau$. Under \eqref{eq:h-specialization}, we have imposed $h_{\bar r+1}=0$. Then, because by \eqref{eq:h} $
h_{\bar r+1}
=
\alpha^{\bar r}f_1+\alpha^{\bar r-1}f_2+\cdots+\alpha f_{\bar r}
+f_{\bar r+1}$,
and the normalization gives $f_j=e_j$ for $j=1,\ldots,\bar r$, we have $
f_{\bar r+1}
=
-(\alpha^{\bar r},\alpha^{\bar r-1},\ldots,\alpha)'$.
Therefore
\[
\begin{bmatrix}
	\mathbf F_{R_0} & a
\end{bmatrix}
=
\begin{bmatrix}
	I_{\bar r} & u\\
	f_{\bar r+1}' & d_{\bar r+1}
\end{bmatrix},
\]
where $
u=
(\alpha^{\bar r}d_1,\alpha^{\bar r-1}d_2,\ldots,\alpha d_{\bar r})'$, and
$
f_{\bar r+1}=(-\alpha^{\bar r},-\alpha^{\bar r-1},\ldots,-\alpha)'$.
By the Schur-complement formula,
\[
\det
\begin{bmatrix}
	\mathbf F_{R_0} & a
\end{bmatrix}
=
d_{\bar r+1}-f_{\bar r+1}'u.
\]
Since $
f_{\bar r+1}'u
=
-\sum_{j=1}^{\bar r}
\alpha^{2(\bar r+1-j)}d_j$, it follows that
\begin{equation}\label{eq:detF}
	\det
	\begin{bmatrix}
		\mathbf F_{R_0} & a
	\end{bmatrix}
	=
	d_{\bar r+1}
	+
	\sum_{j=1}^{\bar r}
	\alpha^{2(\bar r+1-j)}d_j
	=
	\sum_{j=1}^{\bar r+1}
	\alpha^{2(\bar r+1-j)}d_j.
\end{equation}
For $S=C_0=\{\bar r+2,\ldots,2\bar r+2\}$, the specialization
\eqref{eq:h-specialization} gives
\[
h_{\bar r+2}-\tilde\alpha h_{\bar r+1}
=
\tau e_1, \quad h_{2\bar r+2}-\tilde\alpha h_{2\bar r+1}
=
-\tau\tilde\alpha e_{\bar r}
\]
\[
h_{\bar r+1+i}-\tilde\alpha h_{\bar r+i}
=
\tau(e_i-\tilde\alpha e_{i-1}),
\qquad i=2,\ldots,\bar r.
\]
The last column in \eqref{eq:factorization-R0} is $
\left(\alpha^{s-\bar r-2}\right)_{s\in C_0}
=
(1,\alpha,\ldots,\alpha^{\bar r})'$. Then, the determinant in \eqref{eq:factorization-R0} is
\[
\det
\begin{bmatrix}
	\tau e_1' & 1\\
	\tau(e_2-\tilde\alpha e_1)' & \alpha\\
	\vdots & \vdots\\
	\tau(e_{\bar r}-\tilde\alpha e_{\bar r-1})' & \alpha^{\bar r-1}\\
	-\tau\tilde\alpha e_{\bar r}' & \alpha^{\bar r}
\end{bmatrix}=
\tau^{\bar r}
\det
\begin{bmatrix}
	e_1' & 1\\
	(e_2-\tilde\alpha e_1)' & \alpha\\
	\vdots & \vdots\\
	(e_{\bar r}-\tilde\alpha e_{\bar r-1})' & \alpha^{\bar r-1}\\
	-\tilde\alpha e_{\bar r}' & \alpha^{\bar r}
\end{bmatrix}\equiv
\tau^{\bar r} P_{\bar r}(\tilde\alpha)
\]
where $P_{\bar r}(\tilde\alpha)$ denotes the determinant of the matrix in the second expression. This matrix is such that the first $\bar r$ columns have ones on the diagonal and $-\tilde\alpha$ on the subdiagonal. Expanding along the last row gives

\[
P_{\bar r}(\tilde\alpha)
=
\alpha^{\bar r}
+
\tilde\alpha P_{\bar r-1}(\tilde\alpha),
\]
where $P_{\bar r-1}(\tilde\alpha)$ is the analogous determinant of one lower
dimension. Iterating this recursion yields
\[
P_{\bar r}(\tilde\alpha)
=
\alpha^{\bar r}
+\alpha^{\bar r-1}\tilde\alpha
+\cdots
+\alpha\tilde\alpha^{\bar r-1}
+\tilde\alpha^{\bar r}.
\]
Hence
\begin{equation}\label{eq:H-C0}
	\det\left[
	\begin{array}{cc}
		\left(h_s-\tilde\alpha h_{s-1}\right)' & \alpha^{s-\bar r-2}
	\end{array}
	\right]_{s\in C_0}
	=
	\tau^{\bar r}P_{\bar r}(\tilde\alpha),
	\qquad
	P_{\bar r}(\tilde\alpha)
	=
	\sum_{\ell=0}^{\bar r}\alpha^{\bar r-\ell}\tilde\alpha^\ell.
\end{equation}
The sign depends only on the ordering convention and is irrelevant. Combining
\eqref{eq:factorization-R0}, \eqref{eq:detF}, and \eqref{eq:H-C0}, the first selected
minor becomes
\begin{equation}\label{eq:p-tau-formula}
	p_\tau(\tilde\alpha)
	=
	\pm\tau^{\bar r}\det(\mathbf\Psi)
	\left(\sum_{j=1}^{\bar r+1}\alpha^{2(\bar r+1-j)}d_j\right)P_{\bar r}(\tilde\alpha).
\end{equation}
where $p_\tau(\tilde\alpha)$ denotes the specialization of
$p(\tilde\alpha)=\tilde J_{R_0,C_0}(\tilde\alpha,\theta^0)$
under the above choice of the rows of $\mathbf F_2$ (see \eqref{eq:h-specialization}). The scalar multiplying $P_{\bar r}$ is nonzero whenever $\tau\neq0$\footnote{$\tau$ is an arbitrary constant so it can always be chosen to be different from zero.}, because
$\det(\mathbf\Psi)>0$, $d_j>0$ for all $j$ and $\alpha^{2(\bar r+1-j)} > 0$ as well. Since $\alpha\neq0$, $P_{\bar r}$ has degree $\bar r$
and leading coefficient one. Moreover, \eqref{eq:factorization-R0} shows that the formal
degree of $p$ is at most $\bar r$. Since the specialization \eqref{eq:p-tau-formula} has degree
$\bar r$ for every $\tau\neq0$, the formal degree of $p$ is exactly $\bar r$, and $p_\tau$ retains
that formal degree.

The roots of $p_\tau$ are therefore exactly the roots of $P_{\bar r}$, since
the scalar multiplying $P_{\bar r}$ in \eqref{eq:p-tau-formula} is nonzero.
We now characterize those roots. Since
\[
P_{\bar r}(\tilde\alpha)
=
\alpha^{\bar r}
+\alpha^{\bar r-1}\tilde\alpha
+\cdots
+\alpha\tilde\alpha^{\bar r-1}
+\tilde\alpha^{\bar r},
\]
multiplying by $\tilde\alpha-\alpha$ gives
\[
\begin{aligned}
	(\tilde\alpha-\alpha)P_{\bar r}(\tilde\alpha)
	&=
	\tilde\alpha
	\left(
	\alpha^{\bar r}
	+\alpha^{\bar r-1}\tilde\alpha
	+\cdots
	+\alpha\tilde\alpha^{\bar r-1}
	+\tilde\alpha^{\bar r}
	\right)
	-\alpha
	\left(
	\alpha^{\bar r}
	+\alpha^{\bar r-1}\tilde\alpha
	+\cdots
	+\alpha\tilde\alpha^{\bar r-1}
	+\tilde\alpha^{\bar r}
	\right) =
	\tilde\alpha^{\bar r+1}-\alpha^{\bar r+1},
\end{aligned}
\]
because all intermediate terms cancel. Equivalently, for
$\tilde\alpha\neq\alpha$,
\[
P_{\bar r}(\tilde\alpha)
=
\frac{\tilde\alpha^{\bar r+1}-\alpha^{\bar r+1}}
{\tilde\alpha-\alpha}.
\]
Moreover,
\[
P_{\bar r}(\alpha)
=
(\bar r+1)\alpha^{\bar r}\neq0,
\]
because $\alpha\neq0$. Hence $\alpha$ itself is not a root of $P_{\bar r}$. Let $\lambda$ be any root of $P_{\bar r}$. Since $\lambda\neq\alpha$, the identity
above implies $
\lambda^{\bar r+1}-\alpha^{\bar r+1}=0$,
or equivalently, $
\lambda^{\bar r+1}=\alpha^{\bar r+1}$.

Also, $\lambda\neq0$, because otherwise the last display would imply
$\alpha^{\bar r+1}=0$, contradicting $\alpha\neq0$. Thus every root $\lambda$ of
$P_{\bar r}$ satisfies
\begin{equation}\label{eq:Pm-root}
	\lambda^{\bar r+1}=\alpha^{\bar r+1},
	\qquad \lambda\neq\alpha,
	\qquad \lambda\neq0.
\end{equation}
To complete the construction, it remains to choose $\tau$ so that
$q_\tau(\lambda)\neq0$ for every root $\lambda$ of $P_{\bar r}$.
Here $q_\tau(\tilde\alpha)$ denotes the specialization of
$q(\tilde\alpha)=\tilde J_{R_1,C_1}(\tilde\alpha,\theta^0)$
under the same one-parameter choice of the rows of $\mathbf F_2$.

\medskip
\noindent\textit{Step 2.3: show that $q_\tau$ is nonzero at the roots of $p_\tau$.}
Given the derivations in step 2.1, we can write the submatrix of $\mathbf{O}$ associated with rows and columns $R_1$ and $C_1$ as:
\[
\mathbf O_{R_1,C_1}
=
\begin{bmatrix}\left(h_r-\tilde\alpha h_{r-1}\right)'\end{bmatrix}_{r\in R_1}
\mathbf\Psi
\begin{bmatrix}\left(h_s-\tilde\alpha h_{s-1}\right)'\end{bmatrix}_{s\in C_1}'
+
\mathbf E_{R_1,C_1}(\tilde\alpha),
\qquad
\mathbf E(\tilde\alpha)
=
\mathbf Q\mathbf D\mathbf Q'.
\]
Given the specialization of $\mathbf{F}$ described in \eqref{eq:h-specialization}, $q_\tau(\lambda)$ is a polynomial in $\tau$. We compute only the coefficient of the highest possible power of $\tau$ in
$q_\tau(\tilde\alpha)$. This is enough for the argument: if this coefficient is
nonzero after evaluating at a root $\lambda$ of $p_\tau(\lambda)$, then
$q_\tau(\lambda)$ is not the zero polynomial in $\tau$ (and therefore in $\mathbf{F}_{2}$). This would mean that there are only finitely many values of $\tau$ under which the roots of the polynomials $p_\tau(\lambda)$ and $q_\tau(\lambda)$ coincide, so as $\tau$ is a free parameter, we can choose any other value to get the desired conclusion. This shows that there exists a parameterization of $\mathbf{F}$ (see
\eqref{eq:h-specialization}) such that the resultant $\mathcal
R_{\alpha,\mathbf D,\mathbf\Psi}(\mathbf F_2)
=
\mathcal R_{\alpha,\mathbf D,\mathbf\Psi}(\mathbf F_2)$ is nonzero. This implies that
$\mathcal R_{\alpha,\mathbf D,\mathbf\Psi}(\mathbf F_2)
=
\mathcal R_{\alpha,\mathbf D,\mathbf\Psi}(\mathbf F_2)$ is a nonzero polynomial in
$\mathbf{F}_2$, so the set of values of $\mathbf{F}_{2}$ under which the roots
of $\tilde J_{R_0,C_0}(\tilde\alpha,\theta^0)$ and $\tilde
J_{R_1,C_1}(\tilde\alpha,\theta^0)$ coincide is contained in the zero set of
a nonzero polynomial in $\mathbf{F}_{2}$.

Under \eqref{eq:h-specialization}, the rows of the transformed factor
matrix $\mathbf Q\mathbf F$ indexed by $R_1=\{2,\ldots,\bar r+2\}$ are as
follows. For $2\le t\le \bar r$,
\[
h_t-\tilde\alpha h_{t-1}
=
\sum_{\ell=1}^{t-1}
\alpha^{t-\ell-1}(\alpha-\tilde\alpha)e_\ell
+
e_t.
\]
The two boundary rows are
\[
h_{\bar r+1}-\tilde\alpha h_{\bar r}
=
-\tilde\alpha
\sum_{\ell=1}^{\bar r}
\alpha^{\bar r-\ell}e_\ell,\qquad
h_{\bar r+2}-\tilde\alpha h_{\bar r+1}
=
\tau e_1.
\]
Thus, among the rows of $\mathbf Q\mathbf F$ indexed by $R_1$, exactly one
row depends on $\tau$. For the factor rows associated with the column indices
$C_1=\{1,\bar r+3,\ldots,2\bar r+2\}$, the specialization
\eqref{eq:h-specialization} gives
\[
\begin{aligned}
	h_1-\tilde\alpha h_0
	&=
	e_1,\\
	h_{\bar r+2+i}-\tilde\alpha h_{\bar r+1+i}
	&=
	\tau(e_{i+1}-\tilde\alpha e_i),
	\qquad 1\le i\le \bar r-1,\\
	h_{2\bar r+2}-\tilde\alpha h_{2\bar r+1}
	&=
	-\tilde\alpha \tau e_{\bar r}.
\end{aligned}
\]
Thus, among the rows of $\mathbf Q\mathbf F$ associated with $C_1$,
the first row is $e_1$, while the remaining $\bar r$ rows are of order
$\tau$. Equivalently, these rows are
\[
e_1,
\quad
\tau(e_2-\tilde\alpha e_1),
\quad
\tau(e_3-\tilde\alpha e_2),
\quad
\ldots,
\quad
\tau(e_{\bar r}-\tilde\alpha e_{\bar r-1}),
\quad
-\tau\tilde\alpha e_{\bar r},
\]
with the middle terms omitted when $\bar r=1$.
Write
\[
A_R(\tilde\alpha)
=
\begin{bmatrix}
	\left(h_r-\tilde\alpha h_{r-1}\right)'
\end{bmatrix}_{r\in R_1},
\qquad
A_C(\tilde\alpha)
=
\begin{bmatrix}
	\left(h_s-\tilde\alpha h_{s-1}\right)'
\end{bmatrix}_{s\in C_1}.
\]
Then
\[
\mathbf O_{R_1,C_1}
=
A_R(\tilde\alpha)\mathbf\Psi A_C(\tilde\alpha)'
+
\mathbf E_{R_1,C_1}(\tilde\alpha),
\]
The matrix $A_R(\tilde\alpha)\mathbf\Psi A_C(\tilde\alpha)'$ has rank at most
$\bar r$. Since it is an $(\bar r+1)\times(\bar r+1)$ matrix, its
determinant is zero. We use the minor expansion for the determinant of a sum. If
$M$ and $E$ are $n\times n$ matrices, then
\[
\det(M+E)
=
\sum_{k=0}^n
\sum_{\substack{I,J\subseteq\{1,\ldots,n\}\\ |I|=|J|=k}}
(-1)^{\sum_{i\in I}i+\sum_{j\in J}j}
\det(E_{I,J})\det(M_{I^c,J^c}),
\]
where $E_{I,J}$ denotes the submatrix of $E$ with rows $I$ and columns
$J$, and $M_{I^c,J^c}$ denotes the complementary submatrix of $M$. Apply this identity with
\[
M
=
A_R(\tilde\alpha)\mathbf\Psi A_C(\tilde\alpha)',
\qquad
E
=
\mathbf E_{R_1,C_1}(\tilde\alpha),
\qquad
n=\bar r+1.
\]
The $k=0$ term is $
\det\left(A_R(\tilde\alpha)\mathbf\Psi A_C(\tilde\alpha)'\right)$, which is zero because
$A_R(\tilde\alpha)\mathbf\Psi A_C(\tilde\alpha)'$ has rank at most
$\bar r$, while it is an $(\bar r+1)\times(\bar r+1)$ matrix. Hence the
contribution using only entries from the factor component vanishes.

The terms using exactly one entry from
$\mathbf E_{R_1,C_1}(\tilde\alpha)$ are the $k=1$ terms in the same
expansion. They are obtained by choosing one position $(i,j)$, using
$E_{r_i,c_j}(\tilde\alpha)$ at that position, and using the factor
component for all remaining rows and columns. Therefore these terms are
\begin{equation}\label{eq:CB-top}
	\sum_{i,j=1}^{\bar r+1}
	(-1)^{i+j}
	E_{r_i,c_j}(\tilde\alpha)
	\det\left(
	A_{R_1\setminus\{r_i\}}(\tilde\alpha)
	\mathbf\Psi
	A_{C_1\setminus\{c_j\}}(\tilde\alpha)'
	\right),
\end{equation}
where $r_i$ is the $i$-th element of $R_1$ and $c_j$ is the
$j$-th element of $C_1$, and
\[
A_{R_1\setminus\{r_i\}}(\tilde\alpha)
=
\begin{bmatrix}
	\left(h_r-\tilde\alpha h_{r-1}\right)'
\end{bmatrix}_{r\in R_1,\ r\neq r_i},
\qquad
A_{C_1\setminus\{c_j\}}(\tilde\alpha)
=
\begin{bmatrix}
	\left(h_s-\tilde\alpha h_{s-1}\right)'
\end{bmatrix}_{s\in C_1,\ s\neq c_j}.
\]
Since these two matrices are $\bar r\times \bar r$, each determinant in
\eqref{eq:CB-top} factors as
\[
\det\left(
A_{R_1\setminus\{r_i\}}(\tilde\alpha)
\mathbf\Psi
A_{C_1\setminus\{c_j\}}(\tilde\alpha)'
\right)
=
\det\left(A_{R_1\setminus\{r_i\}}(\tilde\alpha)\right)
\det(\mathbf\Psi)
\det\left(A_{C_1\setminus\{c_j\}}(\tilde\alpha)\right).
\]

For any $k\geq2$, the complementary minor $\det(M_{I^c,J^c})$ involves at most
$\bar r+1-k\leq\bar r-1$ rows from $A_R$ and at most $\bar r+1-k\leq\bar r-1$
rows from $A_C$. Since $A_R$ contains exactly one row that has degree one in $\tau$ and
$A_C$ contains $\bar r$ rows that have degree one in $\tau$, the $\tau$-degree of
$\det(M_{I^c,J^c})$ is at most $1+(\bar r+1-k)=\bar r+2-k\leq\bar r$.
Because the entries of $\mathbf E(\tilde\alpha)=\mathbf Q\mathbf D\mathbf Q'$ do
not depend on $\tau$, every $k\geq2$ term in the expansion has total
$\tau$-degree at most $\bar r$. Hence the coefficient of $\tau^{\bar r+1}$ can
only come from the one-$\mathbf E$ terms in \eqref{eq:CB-top}.

We now identify which summands in \eqref{eq:CB-top} actually contribute to $\tau^{\bar r+1}$. First, the determinant involving the $C_1$ rows must retain
all $\bar r$ rows that have degree one in $\tau$. This happens only when the deleted
$C_1$ row is the first one, corresponding to the original column index $1$.
Thus we must have $j=1$. After deleting that row, the remaining $C_1$ rows are
\[
\tau(e_2-\tilde\alpha e_1)',
\quad
\tau(e_3-\tilde\alpha e_2)',
\quad
\ldots,
\quad
\tau(e_{\bar r}-\tilde\alpha e_{\bar r-1})',
\quad
-\tau\tilde\alpha e_{\bar r}'.
\]
Thus, $A_{C_1\setminus\{c_1\}}(\tilde\alpha)$ is upper-triangular with diagonal terms equal to $-\tau\tilde\alpha$. Then
\[
\det\left(A_{C_1\setminus\{c_1\}}(\tilde\alpha)\right) = \tau^{\bar r}(-\tilde\alpha)^{\bar r}.
\]
Second, the determinant involving the $R_1$ rows must retain the unique row of
degree one in $\tau$, namely $\tau e_1'$. Therefore the deleted row cannot be the last
row in $R_1$. Hence $i=1,\ldots,\bar r$.

For such $i$, we have row indices $r_i=i+1$ and column index $c_1=1$. Since $
\mathbf E(\tilde\alpha)
=
\mathbf Q\mathbf D\mathbf Q'$ its $(a,b)$-entry is
\[
E_{a,b}(\tilde\alpha)
=
\sum_{m=1}^T
Q_{a,m}(\tilde\alpha)d_mQ_{b,m}(\tilde\alpha).
\]
Taking $b=1$, we obtain
\[
E_{a,1}(\tilde\alpha)
=
\sum_{m=1}^T
Q_{a,m}(\tilde\alpha)d_mQ_{1,m}(\tilde\alpha).
\]
The first row of $\mathbf Q$ has only one nonzero entry,
namely $Q_{1,1}(\tilde\alpha)=1$. Hence all terms in the preceding sum
vanish except the term with $m=1$, and therefore $
E_{a,1}(\tilde\alpha)
=
Q_{a,1}(\tilde\alpha)d_1$.

Since $r_i=i+1$, this gives $
E_{r_i,1}(\tilde\alpha)
=
E_{i+1,1}(\tilde\alpha)
=
Q_{i+1,1}(\tilde\alpha)d_1$. Recall that $
Q_{i+1,1}(\tilde\alpha)
=
(\alpha-\tilde\alpha)\alpha^{i-1}$. Therefore
\[
E_{r_i,1}(\tilde\alpha)
=
(\alpha-\tilde\alpha)d_1\alpha^{i-1}.
\]
It remains to compute the $R_1$-side determinant in the summands with
$j=1$ and $i=1,\ldots,\bar r$. Let $H$ be the $\bar r\times\bar r$
matrix whose rows are $h_1',\ldots,h_{\bar r}'$. Since $H$ is lower
triangular with unit diagonal, $\det(H)=1$. Hence right multiplication by
$H^{-1}$ does not change the determinant. Moreover, because $HH^{-1}=I$,
we have $h_t'H^{-1}=e_t'$ for $t=1,\ldots,\bar r$. Therefore
\[
\left(h_{t+1}-\tilde\alpha h_t\right)'H^{-1}
=
e_{t+1}'-\tilde\alpha e_t',\quad t=1,\ldots,\bar r-1,
\]
and
\[
(-\tilde\alpha h_{\bar r})'H^{-1}
=
-\tilde\alpha e_{\bar r}',
\qquad
(\tau h_1)'H^{-1}
=
\tau e_1'.
\]

Thus, after deleting the $i$-th row among the first $\bar r$ rows, the
relevant determinant is $\det\!\begin{bmatrix} e_2-\tilde\alpha e_1 & \cdots & e_i-\tilde\alpha e_{i-1} & e_{i+2}-\tilde\alpha e_{i+1} & \cdots & -\tilde\alpha e_{\bar r} & \tau e_1 \end{bmatrix}'$. Expanding along the last row, whose only nonzero entry is $\tau$ in the
first column, gives
\[
\det\left(A_{R_1\setminus\{r_i\}}(\tilde\alpha)\right)
=
\tau(-1)^{\bar r+1}\det(B_i),
\]
where $B_i$ is the $(\bar r-1)\times(\bar r-1)$ matrix obtained by
deleting the last row and the first column. The sign
$(-1)^{\bar r+1}$ is the cofactor sign of the entry in position $(\bar r,1)$. The matrix $B_i$ has block triangular form
\[
B_i=
\begin{bmatrix}
	L_{i-1} & 0\\
	0 & U_{\bar r-i}
\end{bmatrix}.
\]
Here $L_{i-1}$ comes from the rows above the deleted $i$-th row, namely $
(e_2-\tilde\alpha e_1)',\,
(e_3-\tilde\alpha e_2)',\,
\ldots,\,
(e_i-\tilde\alpha e_{i-1})'$ after the first column has been deleted. Thus $L_{i-1}$ is lower
triangular with diagonal entries all equal to $1$, so $
\det(L_{i-1})=1$.

Similarly, $U_{\bar r-i}$ comes from the rows below the deleted $i$-th
row, namely $
(e_{i+2}-\tilde\alpha e_{i+1})',\,
(e_{i+3}-\tilde\alpha e_{i+2})',\,
\ldots,\,
(e_{\bar r}-\tilde\alpha e_{\bar r-1})',\,
-\tilde\alpha e_{\bar r}'$, again after the first column has been deleted. Hence $U_{\bar r-i}$ is
upper triangular with diagonal entries all equal to $-\tilde\alpha$, so $
\det(U_{\bar r-i})=(-\tilde\alpha)^{\bar r-i}$.
Therefore
\[
\det(B_i)
=
\det(L_{i-1})\det(U_{\bar r-i})
=
(-\tilde\alpha)^{\bar r-i}.
\]
Substituting this into the cofactor expansion yields $
\det\left(A_{R_1\setminus\{r_i\}}(\tilde\alpha)\right)
=
(-1)^{\bar r+1}\tau(-\tilde\alpha)^{\bar r-i}$. Finally, $
(-1)^{\bar r+1}(-\tilde\alpha)^{\bar r-i}
=
(-1)^{\bar r+1}(-1)^{\bar r-i}\tilde\alpha^{\bar r-i}
=
(-1)^{2\bar r+1-i}\tilde\alpha^{\bar r-i}
=
(-1)^{i-1}\tilde\alpha^{\bar r-i}$.
Hence
\[
\det\left(A_{R_1\setminus\{r_i\}}(\tilde\alpha)\right)
=
\tau(-1)^{i-1}\tilde\alpha^{\bar r-i},
\qquad i=1,\ldots,\bar r.
\]
We can now combine the pieces. For $i=1,\ldots,\bar r$ and $j=1$, the
corresponding summand in \eqref{eq:CB-top} contributes $
(-1)^{i+1}
E_{r_i,1}(\tilde\alpha)
\det\left(A_{R_1\setminus\{r_i\}}(\tilde\alpha)\right)
\det(\mathbf\Psi)
\det\left(A_{C_1\setminus\{c_1\}}(\tilde\alpha)\right)$.

Using $
E_{r_i,1}(\tilde\alpha)
=
(\alpha-\tilde\alpha)d_1\alpha^{i-1},\
\det\left(A_{R_1\setminus\{r_i\}}(\tilde\alpha)\right)
=
\tau(-1)^{i-1}\tilde\alpha^{\bar r-i}$, and
$\det\left(A_{C_1\setminus\{c_1\}}(\tilde\alpha)\right)
=
\tau^{\bar r}(-\tilde\alpha)^{\bar r}$,
the contribution of the $i$-th summand to the coefficient of
$\tau^{\bar r+1}$ in $\det(\mathbf O_{R_1,C_1})$ is
\[
(\alpha-\tilde\alpha)
d_1\det(\mathbf\Psi)
(-\tilde\alpha)^{\bar r}
\alpha^{i-1}\tilde\alpha^{\bar r-i},
\]
as $
(-1)^{i+1}(-1)^{i-1}=1$. Summing over $i=1,\ldots,\bar r$, the coefficient of
$\tau^{\bar r+1}$ in $\det(\mathbf O_{R_1,C_1})$ is
\[
(\alpha-\tilde\alpha)
d_1\det(\mathbf\Psi)
(-\tilde\alpha)^{\bar r}
\sum_{i=1}^{\bar r}
\alpha^{i-1}\tilde\alpha^{\bar r-i}.
\]
Since $R_1\cap C_1=\emptyset$ and $|R_1|=|C_1|=\bar r+1>\bar r$,
Propositions~\ref{prop:2} and~\ref{prop:3} give
$\det(\mathbf O_{R_1,C_1})=(\alpha-\tilde\alpha)\tilde
J_{R_1,C_1}(\tilde\alpha,\theta^0)=(\alpha-\tilde\alpha)q_\tau(\tilde\alpha)$.
Therefore the coefficient of $\tau^{\bar r+1}$ in $q_\tau(\tilde\alpha)$ is
\begin{equation}\label{eq:q-leading}
	d_1\det(\mathbf\Psi)(-\tilde\alpha)^{\bar r}
	\sum_{i=1}^{\bar r}
	\alpha^{i-1}\tilde\alpha^{\bar r-i}.
\end{equation}
Let
\[
S_{\bar r}(\tilde\alpha)
=
\sum_{i=1}^{\bar r}
\alpha^{i-1}\tilde\alpha^{\bar r-i}
=
\tilde\alpha^{\bar r-1}
+\alpha\tilde\alpha^{\bar r-2}
+\cdots
+\alpha^{\bar r-1}.
\]
Then, for every $\tilde\alpha$,
\[
(\tilde\alpha-\alpha)S_{\bar r}(\tilde\alpha)
=
\tilde\alpha^{\bar r}-\alpha^{\bar r},
\]
because all intermediate terms cancel in the difference
$\tilde\alpha S_{\bar r}(\tilde\alpha)-\alpha S_{\bar r}(\tilde\alpha)$. Now let $\lambda$ be a root of $P_{\bar r}$. By \eqref{eq:Pm-root},
\[
\lambda^{\bar r+1}=\alpha^{\bar r+1},
\qquad
\lambda\neq\alpha,
\qquad
\lambda\neq0.
\]
Evaluate $S_{\bar r}(\tilde\alpha)$ at $\tilde{\alpha}=\lambda$. Since $\lambda\neq\alpha$, we may divide the preceding identity by
$\lambda-\alpha$, obtaining\footnote{Another way of seeing this is that the system $\Big\{\begin{matrix}
		\lambda^{\bar r+1} = \alpha^{\bar r+1}\\
		\lambda^{\bar r} = \alpha^{\bar r}
	\end{matrix}$ does not have any solution in $\mathbb{C}$ when $\lambda\neq\alpha$.}
\[
S_{\bar r}(\lambda)
=
\frac{\lambda^{\bar r}-\alpha^{\bar r}}{\lambda-\alpha}.
\]
Using $\lambda^{\bar r+1}=\alpha^{\bar r+1}$ and $\lambda\neq0$, we have $
\lambda^{\bar r}
=
\frac{\alpha^{\bar r+1}}{\lambda}$. Therefore
\[
S_{\bar r}(\lambda)
=
\frac{\alpha^{\bar r+1}/\lambda-\alpha^{\bar r}}{\lambda-\alpha}
=
\frac{\alpha^{\bar r}(\alpha-\lambda)}{\lambda(\lambda-\alpha)}
=
-\frac{\alpha^{\bar r}}{\lambda} \neq 0.
\]

Now evaluate $q_\tau$ at this fixed root $\lambda$. Since the coefficient of
$\tau^{\bar r+1}$ in $q_\tau(\tilde\alpha)$ is given by
\eqref{eq:q-leading}, we can write $q_\tau(\lambda)$ as a polynomial in
$\tau$ of the form
\[
q_\tau(\lambda)
=
\left[
\pm d_1\det(\mathbf\Psi)(-\lambda)^{\bar r}
S_{\bar r}(\lambda)
\right]\tau^{\bar r+1}
+
\sum_{k=0}^{\bar r}c_k(\lambda)\tau^k,
\]
where the coefficients $c_k(\lambda)$ collect all determinant terms of degree
at most $\bar r$ in $\tau$. Substituting $
S_{\bar r}(\lambda)
=
-\frac{\alpha^{\bar r}}{\lambda}$ gives
\[
q_\tau(\lambda)
=
\left[
\pm \alpha^{\bar r}d_1\det(\mathbf\Psi)\lambda^{\bar r-1}
\right]\tau^{\bar r+1}
+
\sum_{k=0}^{\bar r}c_k(\lambda)\tau^k.
\]
The coefficient multiplying $\tau^{\bar r+1}$ is nonzero because
$\alpha\neq0$, $d_1>0$, $\det(\mathbf\Psi)>0$, and $\lambda\neq0$. Hence
$q_\tau(\lambda)$ is not the zero polynomial in $\tau$. There are only finitely many roots $\lambda$ of $P_{\bar r}$. For each such
root, $q_\tau(\lambda)$ is a nonzero polynomial in $\tau$, and therefore has
only finitely many zeros.

Hence the condition $q_\tau(\lambda)=0$ excludes only finitely many values of $\tau$ for each root $\lambda$ of
$P_{\bar r}$. Choose a real $\tau\neq0$ outside the finite union of these
exceptional sets. Then $
q_\tau(\lambda)\neq0$ for every root $\lambda$ of $P_{\bar r}$. At this admissible choice of $\mathbf F_2$, $p_\tau$ retains its formal degree,
and every root of $p_\tau$ is a root of $P_{\bar r}$. Since $q_\tau$ is nonzero
at every root of $P_{\bar r}$, the polynomials $p_\tau$ and $q_\tau$ have no
common complex root. By \eqref{eq:resultant-product-didactic}, $
\operatorname{Res}_{\tilde\alpha}(p_\tau,q_\tau)\neq0$.

Thus $\mathcal R_{\alpha,\mathbf D,\mathbf\Psi}(\mathbf F_2)$ is not identically zero as a
polynomial in the unrestricted entries of $\mathbf F_2$.

	\medskip
	\noindent\textit{Step 3: convert nonzeroness of the resultant into an almost-sure statement.}
	If $\tilde J_{R_0,C_0}(\tilde\alpha,\theta^0)$ and $\tilde J_{R_1,C_1}(\tilde\alpha,\theta^0)$ share a root, then their resultant must be zero. Hence the common-root event
	is contained in $
	\left\{\mathcal R_{\alpha,\mathbf D,\mathbf\Psi}(\mathbf F_2)=0\right\}$.
	
	By Lemma \ref{lemma:3}, the zero set of this nonzero polynomial has Lebesgue
	measure zero. With $(\alpha,\mathbf D,\mathbf\Psi)$ held fixed, Assumption
	\ref{as:6} and Lemma \ref{lemma:2} imply
	\[
	\Pr\left(\mathcal R_{\alpha,\mathbf D,\mathbf\Psi}(\mathbf F_2)=0\right)=0.
	\]
	Therefore the two selected polynomials cannot share a root almost surely, with
	the fixed admissible $(\alpha,\mathbf D,\mathbf\Psi)$ satisfying $\alpha\neq0$.
\end{proof}

	\subsection*{Proof of Theorem \ref{thm:3}:}

\begin{proof} The definition of $\mathbf{O}$ is the same as in Section 3 (for the case of $\bar{r}=2$) and Proposition \ref{prop:1} still holds, so the determinant of any $3\times3$ diagonal exclusion submatrix of $\mathbf{O}$ must be equal to zero. Moreover, Propositions \ref{prop:2} and \ref{prop:3} hold, so we can still write diagonal exclusion minors as $\det\left(\mathbf{M}^{\mathbf{O},3}_{R,C}\right)=(\alpha-\tilde{\alpha})\tilde{J}_{R,C}\left(\tilde{\alpha},\theta^0\right) =0$.
	
We show that $\alpha$ is almost surely identified when $T=6$. A larger $T$ can only help identification. Let $ T = 6 $ and $ R = (1, 2, 3) $, $ C = (4, 5, 6) $. The diagonal
exclusion minor associated with these rows and columns is given by:
\begin{equation}\label{eq333}
	\begin{aligned}
		\det\left(\mathbf{M}^{\mathbf{O},3}_{(1,2,3),(4,5,6)}\right) &= (\alpha - \tilde{\alpha})\left( \Psi_{1,1}\Psi_{2,2} - \Psi_{1,2}^2 \right)(\tilde{\alpha} - 1)(\tilde{\alpha} f_4 + 1) \\
		&\quad \times \left( d_3 f_1 - \alpha d_2 f_1 - d_3 f_2 f_{\gamma} + \alpha^2 d_1 f_2 - \alpha^2 d_1 f_3 + \alpha d_2 f_3 f_{\gamma} \right)= 0
	\end{aligned}
\end{equation}
where, in this case, $ \tilde{J}_{(1,2,3),(4,5,6)}\left(\tilde{\alpha},\theta^0\right) $ is a polynomial in $ \tilde{\alpha} $ of degree 2.
The proof is similar to the case of $ \bar{r} = 1 $ in Section \ref{sbs:33}.
Note that for \eqref{eq333} to hold, one of the following five alternatives must be true:

\begin{caseprimeprime}\label{casep:1}
	$\Psi_{1,2}^2 = \Psi_{1,1}\Psi_{2,2}$
\end{caseprimeprime}

\begin{caseprimeprime}\label{casep:2}
	$(d_3 f_1 - \alpha d_2 f_1 - d_3 f_2 f_{\gamma} + \alpha^2 d_1 f_2 - \alpha^2 d_1 f_3 + \alpha d_2 f_3 f_{\gamma})=0$
\end{caseprimeprime}

\begin{caseprimeprime}\label{casep:3}
	$\tilde{\alpha}=1$
\end{caseprimeprime}

\begin{caseprimeprime}\label{casep:4}
	$\tilde{\alpha}=-\frac{1}{f_4}$
\end{caseprimeprime}

\begin{caseprimeprime}\label{casep:5}
	$\tilde{\alpha}=\alpha$
\end{caseprimeprime}

Cases \ref{casep:1} and \ref{casep:2} correspond to the case in which $\tilde{J}_{(1,2,3),(4,5,6)}$ is the zero polynomial, so that all the coefficients are trivially equal to zero. Cases \ref{casep:3} and \ref{casep:4} correspond to the case in which we can find a root $\tilde{\alpha}^{\star}$ that solves $\tilde{J}_{(1,2,3),(4,5,6)}\left(\tilde{\alpha}^{\star},\theta^0\right)=0$. However, once $\tilde{\alpha}$ is pinned down in this way, we have no other free parameter, so the same root must satisfy all the zero-determinant conditions (Proposition \ref{prop:1}). Case \ref{casep:5} corresponds to the case where identification of $\alpha$ is achieved.

We show that Cases \ref{casep:1}-\ref{casep:4} can occur only on probability-zero sets of factor realizations, apart from a Lebesgue-null subset of the parameter space. Hence, outside this exceptional parameter set, any admissible observationally equivalent parameterization must satisfy Case~\ref{casep:5}, namely $\tilde{\alpha}=\alpha$, with probability one over the unrestricted factor realizations.

\textbf{Case \ref{casep:1}}: $\Psi_{1,2}^2 = \Psi_{1,1}\Psi_{2,2}$

This implies that $ \det(\mathbf{\Psi}) = 0 $, which is analogous to the case where $ \Psi = 0 $ in subsection \ref{sbs:33}.
This possibility is excluded because we assume that $ \mathbf{\Psi} $ is PD.

\textbf{Case \ref{casep:2}}: $d_3 f_1 - \alpha d_2 f_1 - d_3 f_2 f_{\gamma} + \alpha^2 d_1 f_2 - \alpha^2 d_1 f_3 + \alpha d_2 f_3 f_{\gamma}=0$

We can rewrite this condition as $
(d_{3}-\alpha d_{2}) f_{1} + (\alpha^{2}d_{1}-d_{3}f_{\gamma})f_{2} + (\alpha d_2 f_{\gamma}-\alpha^2 d_{1})f_{3} = 0$.

This expression characterizes a region of the parameter space where our proof cannot guarantee identification, which corresponds to a restriction of Lebesgue-zero measure. In particular, this region corresponds to the joint restrictions $d_{2}f_{\gamma} = \alpha d_{1}$ and $d_{3}=\alpha d_{2}$. Thus, unless these restrictions hold jointly, the expression defines a plane in which the three factors $f_1,f_2,f_3$ must lie. Therefore, by Lemmas \ref{lemma:2} and \ref{lemma:3} this term can only be equal to zero in a set of factor realizations that occur with probability zero.

\textbf{Case \ref{casep:3}}: $\tilde{\alpha}=1$

We next show that the candidate value $ \tilde{\alpha}=1 $ cannot satisfy all diagonal-exclusion restrictions with positive probability over the factor realizations, except possibly on a Lebesgue-null subset of the parameter space. If we set $ \tilde{\alpha} = 1 $ in order to satisfy \eqref{eq333},
this same $ \tilde{\alpha} $
must satisfy all the zero-determinant conditions (Proposition \ref{prop:1}).
However, this cannot occur with positive probability except in a low-dimensional subset of the parameter space. To see this, consider the diagonal exclusion minor
associated with rows $ R = (1, 5, 6) $ and columns $ C = (2, 3, 4) $.
By Propositions \ref{prop:1},\ref{prop:2}, and \ref{prop:4}
we know it can be written as:
\[\det\left(\mathbf{M}^{\mathbf{O},3}_{(1,5,6),(2,3,4)}\right) =(\alpha - \tilde{\alpha})\tilde{J}_{(1,5,6),(2,3,4)}\left(\tilde{\alpha},\theta^0\right)\]
where $ \tilde{J}_{(1,5,6),(2,3,4)}\left(\tilde{\alpha},\theta^0\right) $ is a polynomial in $ \tilde{\alpha} $ of degree 4.
Case \ref{casep:3} is plausible if both $ \tilde{J}_{(1,2,3),(4,5,6)}\left(\tilde{\alpha},\theta^0\right) $ and $ \tilde{J}_{(1,5,6),(2,3,4)}\left(\tilde{\alpha},\theta^0\right) $ share the root $ \tilde{\alpha} = 1 $.
This is equivalent to stating that $ \tilde{J}_{(1,5,6),(2,3,4)}\left(1,\theta^0\right) = 0 $.
Evaluating $ \tilde{J}_{(1,5,6),(2,3,4)}\left(\tilde{\alpha},\theta^0\right) $ at $ \tilde{\alpha} = 1 $ yields:
\begin{align*}
	\tilde{J}_{(1,5,6),(2,3,4)}\left(1,\theta^0\right)&= \left( d_4 f_2 - d_4 f_3 - \alpha d_3 f_2 + \alpha d_3 f_4 + \alpha^2 d_2 f_3 - \alpha^2 d_2 f_4 \right) \\
	&\quad \times \left( f_{\gamma}^2 \Psi_{1,2}^2 - f_{\gamma} \Psi_{1,2}^2 - d_1 \Psi_{2,2} - f_{\gamma}^2 \Psi_{1,1} \Psi_{2,2} \right. \\
	&\quad \left. - \alpha f_{\gamma}^2 \Psi_{1,2}^2 + \alpha d_1 \Psi_{2,2} + f_{\gamma} \Psi_{1,1} \Psi_{2,2} + \alpha f_{\gamma}^2 \Psi_{1,1} \Psi_{2,2} \right)
\end{align*}

The expression above is a product of two terms. Case~\ref{casep:3} would be plausible if either term were equal to zero. For each term, this can occur with positive probability only under low-dimensional restrictions on the parameter space. Consider the first factor. Setting it equal to zero gives $
(d_4 - \alpha d_3)f_2
-
(d_4 - \alpha^2 d_2)f_3
-
(\alpha^2 d_2 - \alpha d_3)f_4
=0$.

If this expression is not identically zero as a polynomial in
$(f_2,f_3,f_4)$, its zero set has Lebesgue measure zero. Hence, by
Assumption \ref{asp:6} and Lemma \ref{lemma:3}, this equality can hold
only on a probability-zero set of factor realizations. It remains to characterize when the expression is identically zero. This
requires all three coefficients to vanish:
\[
d_4-\alpha d_3=0,\qquad
d_4-\alpha^2d_2=0,\qquad
\alpha^2d_2-\alpha d_3=0.
\]
If $\alpha=0$, the second condition reduces to $d_4=0$, which is impossible
because $d_4>0$. Thus the expression cannot be identically zero when
$\alpha=0$. If $\alpha\neq0$, the three coefficient restrictions are
equivalent to $
d_{3}=\alpha d_{2} \ \text{and}\
d_{4}=\alpha d_{3}$. Therefore, when $\alpha\neq1$ and $\alpha\neq0$, the simultaneous restrictions
\[
d_{3}=\alpha d_{2}
\qquad\text{and}\qquad
d_{4}=\alpha d_{3}
\]
define an additional lower-dimensional subset of the parameter space on which
this part of the proof does not rule out observational equivalence. When
$\alpha=1$ and $\tilde\alpha=1$, we have $\tilde\alpha=\alpha$, so this is
not a non-identification case.

Next, consider the second term. It can be expressed as a second-degree polynomial in $ f_{\gamma} $:
\[
\mathcal{P}(f_{\gamma}) = a f_{\gamma}^{2} + b f_{\gamma} + c
\]
where $a= (\alpha-1)\left(\Psi_{1,1}\Psi_{2,2} - \Psi_{1,2}^2\right)$, $b= \Psi_{1,1}\Psi_{2,2} - \Psi_{1,2}^2$, $c = (\alpha-1) d_1 \Psi_{2,2}$.

Setting $\mathcal{P}(f_{\gamma})$ equal to zero is equivalent to finding the roots of this polynomial. First, note that $ \mathcal{P} $
can never be the zero polynomial since $ b > 0 $, which holds because $ \mathbf{\Psi} $ is PD by assumption. Then, by the fundamental theorem of algebra, there are at most two real values of $ f_{\gamma} $ that satisfy $ \mathcal{P}(f_{\gamma}) = 0 $. Thus, the two exact values corresponding to the roots of $\mathcal{P}(f_{\gamma})$ correspond to exceptional regions in the parameter space where our proof does not show identification, which correspond to a Lebesgue measure zero region. Thus, Case \ref{casep:3} can occur only on a Lebesgue-null subset of the parameter space.

\textbf{Case \ref{casep:4}}: $\tilde{\alpha}=-\frac{1}{f_4}$

Consider the case where $\tilde{\alpha}=-1/f_4$ and $f_4\neq0$ (if $f_4=0$, then the factor $\tilde\alpha f_4+1$ equals $1$, not zero so the problem disappears). As in the case $\tilde{\alpha}=1$, this value of $\tilde{\alpha}$ must also satisfy all other admissible diagonal-exclusion restrictions. We therefore evaluate the minor associated with rows $R=(1,2,4)$ and columns $C=(3,5,6)$ at $\tilde{\alpha}=-1/f_4$, which yields
\[
\tilde{J}_{(1,2,4),(3,5,6)}\left(-\frac{1}{f_4},\theta^0\right)=\frac{\mathcal{R}(\theta^0)}{f_4^3},
\]
where $\mathcal{R}(\theta^0)$ is a finite-degree polynomial in $f_1,\dots,f_4$. The coefficient associated with the monomial $f_1f_3f_4^3$ is $-d_4\det(\mathbf{\Psi})<0$, since $d_4>0$ and $\mathbf{\Psi}$ is positive definite. Hence $\mathcal{R}(\theta^0)$ is a nonzero polynomial in $f_1,\dots,f_4$. After multiplying by $f_4^3$, the restriction $\tilde{J}_{(1,2,4),(3,5,6)}(-1/f_4,\theta^0)=0$ therefore defines the zero set of a nonzero polynomial in $f_1,\dots,f_4$, apart from the event $f_4=0$, which also has probability zero under Assumption~\ref{asp:6}. By Lemmas~\ref{lemma:2} and~\ref{lemma:3}, this zero set has probability zero. Thus, Case~\ref{casep:4} can occur only on a probability-zero set of factor realizations.

It follows that, outside an exceptional set of lower dimension, Case \ref{casep:5} is the only remaining case under which $
\det\left(\mathbf{M}^{\mathbf{O},3}_{(1,2,3),(4,5,6)}\right)=0$ while all other minors are zero as well.

Therefore, for almost every parameterization, $\tilde{\alpha}=\alpha$ is a necessary condition for \eqref{eq5:objective} to hold in the presence of individual fixed effects. Hence, $\alpha$ is almost surely globally identified outside a low-dimensional exceptional subset of the parameter space. To characterize this subset, let $
Q=\Psi_{1,1}\Psi_{2,2}-\Psi_{1,2}^{2}$. The Lebesgue-null exceptional set used in the preceding argument is given by\footnote{Our proof strategy does not establish identification in this region of the parameter space, but it does not imply that this region is necessarily underidentified. It remains an open question.}
\begin{small}
\begin{align*}
	\Theta^{*}_{(1)}
	&=
	\left\{
	\theta\in\Theta: \alpha\neq 0,\;
	d_{2}f_{\gamma}=\alpha d_{1},\;
	d_{3}=\alpha d_{2}
	\right\}\cup
	\left\{
	\theta\in\Theta:
	\alpha\neq 1,\; \alpha\neq 0,\;
	d_{4}=\alpha d_{3},\;
	d_{3}=\alpha d_{2}
	\right\}
	\\
	&\quad\cup
	\left\{
	\begin{aligned}
		\theta\in\Theta:\;&
		\alpha\neq 1,\;
		Q>0,\;
		4(\alpha-1)^2 d_1\Psi_{2,2}\le Q,\\
		& f_{\gamma}
		=
		\frac{
			-1+
			\sqrt{
				1-
				\dfrac{4(\alpha-1)^2d_1\Psi_{2,2}}{Q}
			}
		}
		{2(\alpha-1)}
	\end{aligned}
	\right\}\cup
	\left\{
	\begin{aligned}
		\theta\in\Theta:\;&
		\alpha\neq 1,\;
		Q>0,\;
		4(\alpha-1)^2 d_1\Psi_{2,2}\le Q,\\
		& f_{\gamma}
		=
		\frac{
			-1-
			\sqrt{
				1-
				\dfrac{4(\alpha-1)^2d_1\Psi_{2,2}}{Q}
			}
		}
		{2(\alpha-1)}
	\end{aligned}
	\right\}.
\end{align*}
\end{small}	
Each component of $\Theta^{*}_{(1)}$ is contained in the zero set of a nonzero polynomial in the finite-dimensional parameter vector. Therefore, $\Theta^{*}_{(1)}$ has Lebesgue measure zero. For every $\theta^0\in\Theta\setminus\Theta^{*}_{(1)}$, Cases~\ref{casep:1}--\ref{casep:4} can occur only on sets of factor realizations with probability zero. Hence, with probability one over the factor realizations, the only remaining possibility is Case~\ref{casep:5}, namely $\tilde\alpha=\alpha$.\end{proof}
\begin{remark}
	We have proven almost-sure identification for the case when $ T = 6 $. Larger $ T $ only strengthens the identification. One way to see this is by applying the same normalization used for the case of $T=6$ even when $ T > 6 $. This corresponds to a non-tail normalization. That is, the transpose of $ \mathbf{F} $ is written as:
	\[
	\mathbf{F}' = \begin{bmatrix}
		f_\gamma & 1 & 1 & 1\ & 1\ & 1\ & 1 & \cdots & 1 \\
		f_1 & f_2 & f_3 & f_4\ & 0\ & 1\ & f_7 & \cdots & f_T
	\end{bmatrix}
	\]
	Our entire proof carries over without any modification.
\end{remark}

	\subsection*{Proof of Theorem \ref{thm:6}:}

\begin{proof} Consider the case where $\alpha\neq 1$ and let $T=4$. The minor associated with rows $R=(1,2)$ and columns $C=(3,4)$ can be written as:
\begin{equation}\label{eq28}
	\det\left(\mathbf{M}_{(1,2),(3,4)}^{\mathbf{O},2}\right) =\Psi(\alpha-\tilde{\alpha})(1-\tilde{\alpha})(\alpha d_1-d_2 f_\gamma)=0
\end{equation}
As $\Psi>0$, we need to rule out the cases where there exists an alternative parameterization with $\tilde{\alpha}=1$ or the particular case in which the true parameters satisfy $f_{\gamma} = \frac{\alpha d_{1}}{d_{2}}$.

\textbf{Regime 1: $\alpha\neq 1$ and $f_{\gamma} = \frac{\alpha d_{1}}{d_{2}}$.}

We first start by the case where the parameters satisfy $f_{\gamma} = \frac{\alpha d_{1}}{d_{2}}$, a restriction on the parameter space with Lebesgue measure zero. For this case, we require $T\geq 5$ to show identification. We follow a similar strategy as in the case when $f_{\gamma} = \frac{d_1}{d_2}$ and $\alpha=1$ in Theorem \ref{thm:5}. Evaluate the minor associated with rows $R=(1,3)$ and columns $C=(4,5)$ in the case when $f_{\gamma} = \frac{\alpha d_{1}}{d_{2}}$:
\begin{equation}\label{eq29}
	\det\left(\mathbf{M}_{(1,3),(4,5)}^{\mathbf{O},2}\right) = \alpha \frac{d_{1}}{d_{2}} \Psi \left(\alpha-\tilde{\alpha}\right)\left(\tilde{\alpha}-1\right)\left(d_{3}-\alpha d_{2}\right)=0
\end{equation}
This minor can be equal to zero in four cases: (i) $\tilde{\alpha}=\alpha$ (which means that $\alpha$ is globally identified), (ii) $\alpha=0$, (iii) $d_{3}=\alpha d_2$, or (iv) $\tilde{\alpha}=1$. The case where $\tilde{\alpha}=1$ will be addressed later in the proof, so we postpone the discussion of this case. Start with the case in which $\alpha=0$:

\textbf{Regime 1.1: $\alpha\neq 1$, $f_{\gamma} = \frac{\alpha d_{1}}{d_{2}}$, and $\alpha=0$.}

Consider case (ii): $\alpha=0$ and $f_{\gamma} = \frac{\alpha d_{1}}{d_{2}}$. Because $d_{2}>0$ this means that when $\alpha=0$ it must be the case that $f_{\gamma}=0$. Then, simply consider the minor associated with rows $R=(2,3)$ and columns $C=(1,5)$ evaluated at these restrictions:
\begin{equation}\label{eq30}
	\det\left(\mathbf{M}_{(2,3),(1,5)}^{\mathbf{O},2}\right)=	-\tilde{\alpha} d_{1}\Psi\left(\tilde{\alpha}-1\right)^{2}
\end{equation}
Since $d_1>0$, $\Psi>0$, and $\alpha=0$, any remaining admissible solution with $\tilde{\alpha}\neq\alpha$ must have $\tilde{\alpha}=1$.\footnote{Setting $\tilde{\alpha}=0$ means that $\alpha$ is indeed identified as we are working in the case where $\alpha=0$} As before, this alternative parameterization will be discussed later in the proof. We can conclude that in the case where $f_{\gamma} = \frac{\alpha d_{1}}{d_{2}}$, and $\alpha=0$ an alternative solution can only (potentially) exist if $\tilde{\alpha}=1$.

\textbf{Regime 1.2: $\alpha\neq 1$ and $f_{\gamma} = \frac{\alpha d_{1}}{d_{2}}$, $d_{3}=\alpha d_2$.}

Consider case (iii): $d_{3}=\alpha d_2$ and $f_{\gamma} = \frac{\alpha d_{1}}{d_{2}}$. Now, consider the minor associated with rows $R=(1,3)$ and columns $C=(2,4)$ under this parameterization:
	\begin{align*}
		\det\left(\mathbf{M}_{(1,3),(2,4)}^{\mathbf{O},2}\right)&=	\frac{d_1}{d_2}\left(1-\tilde{\alpha}\right)\left(\alpha-\tilde{\alpha}\right)\times\\
		&\times \left(d_{2}\Psi + \alpha^{2} d_{2}^{2}+\alpha d_{2} \Psi - \tilde{\alpha} d_{2} \Psi - \alpha\tilde{\alpha}d_{2}^{2}+\alpha^{2}d_{2}\Psi+\alpha^{4}d_{1}\Psi-\alpha^{3}\tilde{\alpha}d_{1}\Psi-\alpha\tilde{\alpha}d_{2}\Psi\right)
	\end{align*}
which means that either $\tilde{\alpha}=\alpha$ (achieving global identification), $\tilde{\alpha}=1$ (to be discussed later) or we need to construct an alternative parameterization where $\tilde{\alpha}^{\star} = \frac{d_{1}\Psi \alpha^4+\alpha^2d_{2}^{2}+\Psi \alpha^2 d_{2}+\Psi \alpha d_{2} + \Psi d_{2}}{d_{1}\Psi \alpha^3+\alpha d_{2}^{2}+\Psi \alpha d_{2} + \Psi d_{2}}$. If the denominator in the expression for $\tilde\alpha^\star$ is zero, then the
corresponding numerator is nonzero, since the numerator minus $\alpha$ times
the denominator equals $\Psi d_2>0$. Hence, when this denominator is zero, the
last factor cannot vanish for any value of $\tilde\alpha$. In that case, the
only remaining possibilities are the previous two cases,
$\tilde\alpha=\alpha$ and $\tilde\alpha=1$.

Now, we have no other degrees of freedom and this same value of $\tilde{\alpha}^{\star}$ should make all minors equal to zero. Consider the minor associated with rows $R=(2,3)$ and columns $C=(1,5)$ evaluated under this alternative parameterization and the restrictions imposed:
\begin{equation}\label{eq32}
	\det\left(\mathbf{M}_{(2,3),(1,5)}^{\mathbf{O},2}\right)=	-\frac{\alpha^{2} d_{1}d_{2} \Psi^{2}\left(d_{1}\Psi \alpha^{3}-d_{1}\Psi \alpha^{2} + \alpha d_{2}^{2}+ \Psi \alpha d_{2}-d_{2}^{2}\right)^{2}}{\left(d_{1}\Psi \alpha^{3}+\alpha d_{2}^{2}+\Psi \alpha d_2 + \Psi d_2\right)^{3}}=0
\end{equation}

Then, as $d_{1}>0, \ d_{2}>0, \ \Psi>0$, and $\alpha \neq 0$ (because we are working under the restriction where $d_{3}=\alpha d_{2}$, having $\alpha=0$ implies $d_{3}=0$ which contradicts the assumptions of the model) the only way this minor can be equal to zero is if the last term is equal to zero. This can only correspond (under some restrictions on the possible values of $d_1$, $d_2$, and $\alpha$) to the variance of the individual fixed effects being exactly equal to $\Psi = \frac{d_{2}^{2}(1-\alpha)}{\alpha\left(d_{2}-\alpha\left(1-\alpha\right)d_{1}\right)}$. Thus, we need to rule out the case where $d_{3}=\alpha d_2$, $f_{\gamma} = \frac{\alpha d_{1}}{d_{2}}$, $\Psi = \frac{d_{2}^{2}(1-\alpha)}{\alpha\left(d_{2}-\alpha\left(1-\alpha\right)d_{1}\right)}>0$, and $\tilde{\alpha}^{\star} = \frac{d_{1}\Psi \alpha^4+\alpha^2d_{2}^{2}+\Psi \alpha^2 d_{2}+\Psi \alpha d_{2} + \Psi d_{2}}{d_{1}\Psi \alpha^3+\alpha d_{2}^{2}+\Psi \alpha d_{2} + \Psi d_{2}}$. However, note that when $\Psi = \frac{d_{2}^{2}(1-\alpha)}{\alpha\left(d_{2}-\alpha\left(1-\alpha\right)d_{1}\right)}$ then $\tilde{\alpha}^{\star}$ simplifies to $\tilde{\alpha}^{\star}=1$, which is the separate case we need to study.

Therefore, when $f_{\gamma}=\alpha d_1/d_2$ and $T\geq5$, any remaining admissible alternative with $\tilde{\alpha}\neq\alpha$ must have $\tilde{\alpha}=1$. Thus, to show global identification of $\alpha$ it remains to show that when $\alpha\neq 1$ it cannot be the case that there is an alternative representation where $\tilde{\alpha}=1$. We discuss this case below:

\textbf{Regime 2: $\alpha\neq 1$ and $\tilde{\alpha}=1$.}

In this case showing global identification requires demonstrating that there is not an alternative parameterization of the model satisfying all the assumptions in which $\tilde{\alpha}=1$. We should pay special attention to the problematic regimes evaluated before (1.1 and 1.2) in which we did not rule out an alternative parameterization with $\tilde{\alpha}=1$. All subsequent parts of the proof use only row and column indices smaller or equal to 4, meaning that the proof below is valid for every $T\geq 4$, so this will enable us to rule out the case $\tilde{\alpha}=1$ both in the case where $T=4$ and $T\geq 5$.

For this problem, diagonal exclusion minors built using other rows and columns do not provide useful information, so we need to resort to other methods. Our goal is to show that $\nexists \ \tilde{\theta}$ with $\tilde{\alpha}\neq\alpha$ such that $\mathbf{\Sigma}(\theta^0)=\mathbf{\Sigma}(\tilde{\theta})$. Condition \eqref{eq28} combined with the subsequent analysis of regimes 1.1 and 1.2 permits us to conclude that if such $\tilde{\theta}$ exists, then it must be the case that $\tilde{\alpha}=1$. Moreover, such $\tilde{\theta}$ must satisfy that for each position of the matrix $\mathbf{\Sigma}$ it must be the case that $\Sigma(\theta^0)_{rc}=\Sigma(\tilde{\theta})_{rc}$, or identically: $\Sigma(\theta^0)_{rc}-\Sigma(\tilde{\theta})_{rc}=0$. However, we show that an alternative parameterization satisfying all assumptions of the model does not exist when $\tilde{\alpha}=1$. To see this, first consider the entries $(1,1), (2,1), (3,1)$ of $\mathbf{\Sigma}(\theta^0)-\mathbf{\Sigma}(\tilde{\theta})$ evaluated at $\tilde{\alpha}=1$:
\begin{equation}\label{eq33}
	\begin{cases}
		d_1+\Psi f_{\gamma}^{2}-(\tilde{d}_1+\tilde{\Psi}\tilde{f}_{\gamma}^{2})=0\\
		\alpha(\Psi f_{\gamma}^{2}+d_{1})+\Psi f_\gamma-(\tilde{\Psi}\tilde{f}_{\gamma}^{2}+\tilde{d}_{1}+\tilde{\Psi}\tilde{f}_{\gamma})=0\\
		\alpha^2(\Psi f_{\gamma}^{2}+d_{1})+\alpha\Psi f_\gamma+\Psi f_\gamma-(\tilde{\Psi}\tilde{f}_{\gamma}^{2}+\tilde{d}_{1}+2\tilde{\Psi}\tilde{f}_\gamma)=0
	\end{cases}
\end{equation}
This is a nonlinear system of three equations in three free parameters $(\tilde{d}_{1},\tilde{\Psi},\tilde{f}_{\gamma})$. However, this system only admits a solution if $\theta^0$ lies in a subset of the parameter space that has Lebesgue measure zero. To see this, note that we can rearrange the first equation into $\tilde{d}_{1} = d_1+\Psi f_{\gamma}^{2}-\tilde{\Psi}\tilde{f}_{\gamma}^{2}$. Plugging into the second equation gives:
\[\alpha(\Psi f_{\gamma}^{2}+d_{1})+\Psi f_\gamma-\tilde{\Psi}\tilde{f}_{\gamma}^{2}-(d_1+\Psi f_{\gamma}^{2}-\tilde{\Psi}\tilde{f}_{\gamma}^{2})-\tilde{\Psi}\tilde{f}_{\gamma}=0\]
Rearranging in terms of $\tilde{\Psi}\tilde{f}_{\gamma}$ yields $\tilde{\Psi}\tilde{f}_{\gamma}=\Psi f_\gamma-(1-\alpha)(d_1+\Psi f_{\gamma}^{2})$.
Finally, substituting $\tilde{d}_{1}$ and $\tilde{\Psi}\tilde{f}_{\gamma}$ into the third equation and rearranging yields:
\begin{align*}
	(\alpha-1)\left((\alpha-1)(d_1+\Psi f_{\gamma}^{2}) + \Psi f_{\gamma}\right)=0
\end{align*}
Note that the final condition depends only on the true structural parameters $(\alpha,d_1,\Psi,f_{\gamma})$. It defines a manifold in which the true value of these parameters must lie in order for an alternative solution with $\tilde{\alpha}=1$ to exist when $\alpha\neq 1$. In particular, solving for $\Psi$ we can conclude that such an alternative parameterization can only exist if $\Psi=\frac{(1-\alpha)d_{1}}{f_{\gamma}(1-(1-\alpha)f_{\gamma})}$\footnote{This imposes some restrictions on the feasible values of $\alpha$, $d_1$ and $f_{\gamma}$.}. Therefore, unless the variance of the individual fixed effects is exactly equal to this expression, the candidate case $\tilde{\alpha}=1$ is ruled out by these covariance restrictions. Also, note that this restriction does not have a very clear economic interpretation. This restriction on $\Psi$ automatically rules out the regime 1.1 because when $\alpha=0$ we have $f_{\gamma}=0$, but in this case $\Psi$ cannot be equal to $\frac{(1-\alpha)d_{1}}{f_{\gamma}(1-(1-\alpha)f_{\gamma})}$ so an alternative solution with $\tilde{\alpha}=1$ cannot exist when $\alpha=0$.

However, we show that even when $\Psi=\frac{(1-\alpha)d_{1}}{f_{\gamma}(1-(1-\alpha)f_{\gamma})}$ the autoregressive coefficient $\alpha$ is globally identified. To see this, consider the solution to the system \eqref{eq33} evaluated when $\Psi=\frac{(1-\alpha)d_{1}}{f_{\gamma}(1-(1-\alpha)f_{\gamma})}$. In particular, the expression derived before reduces to:
\[\tilde{\Psi}\tilde{f}_{\gamma}=\frac{(1-\alpha)d_{1}}{f_{\gamma}(1-(1-\alpha)f_{\gamma})} f_\gamma-(1-\alpha)\left(d_1+\frac{(1-\alpha)d_{1}}{f_{\gamma}(1-(1-\alpha)f_{\gamma})} f_{\gamma}^{2}\right) = 0 \]
This means that the product $\tilde{\Psi}\tilde{f}_{\gamma}=0$. Under the assumptions of the model, any alternative parameterization must satisfy $\tilde{\Psi}>0$, which implies that $\tilde{f}_{\gamma}=0$. Then, substituting into position (1,1) under the same restriction yields $
\tilde{d}_{1} = \frac{d_{1}}{1-(1-\alpha) f_{\gamma}}$.

Then, any alternative solution with $\tilde{\alpha}=1$, must also satisfy $\tilde{d}_{1} = \frac{d_{1}}{1-(1-\alpha) f_{\gamma}}$ and $\tilde{f}_{\gamma}=0$, which is valid only when the true variance of the individual fixed effects is equal to $\Psi=\frac{(1-\alpha)d_{1}}{f_{\gamma}(1-(1-\alpha)f_{\gamma})}$. Now, evaluate the difference between the entries $(2,2), (2,3)$ of $\mathbf{\Sigma}(\theta^0)-\mathbf{\Sigma}(\tilde{\theta})$ in this case and set them equal to zero:
\begin{equation}\label{eq34}
	\begin{cases}
		f_{\gamma}^{-1}\left(d_{1}(1-\alpha)+d_{2}f_{\gamma}-f_{\gamma}\tilde{\Psi}-f_{\gamma}\tilde{d}_{2}\right) = 0\\
		f_{\gamma}^{-1}\left(d_{1}(1-\alpha^2)+d_{2}f_{\gamma}\alpha-2f_{\gamma}\tilde{\Psi}-f_{\gamma}\tilde{d}_{2}\right) = 0
	\end{cases}
\end{equation}
where $f_{\gamma}\neq 0$ because we are working in the case where $\Psi=\frac{(1-\alpha)d_{1}}{f_{\gamma}(1-(1-\alpha)f_{\gamma})}$. This is a system of two linear equations in the free parameters $\tilde{\Psi},\tilde{d}_{2}$ with a unique solution given by: $\tilde{\Psi} = \frac{\alpha d_{1}-d_{2}f_{\gamma}-\alpha^{2}d_{1}+\alpha d_{2} f_{\gamma}}{f_{\gamma}}$ and $\tilde{d}_{2} = \frac{d_{1}-2\alpha d_{1}+2 d_{2} f_{\gamma} + \alpha^{2} d_{1}-\alpha d_{2} f_{\gamma}}{f_{\gamma}}$. However, this alternative parameterization with all the restrictions imposed so far must make all positions of $\mathbf{\Sigma}(\theta^0)-\mathbf{\Sigma}(\tilde{\theta})$ equal to zero. Analyzing the position (2,4) of this difference under these restrictions yields:
\begin{equation}\label{eq35}
	-f_{\gamma}^{-1}\left(\alpha-1\right)^{2}\left(\alpha d_1 -d_2 f_{\gamma}\right)=0
\end{equation}
which imposes an additional restriction on the true value of the parameters beyond the one already imposed in $\Psi$. In particular, in order for an alternative parameterization to exist when $\alpha\neq 1$ it must be the case that $ f_{\gamma}=\frac{\alpha d_{1}}{d_{2}}$. Thus, under the restrictions imposed so far on the alternative parameters, the existence of an admissible alternative with $\tilde{\alpha}=1$ requires $\alpha\neq1$, $\Psi=\frac{(1-\alpha)d_1}{f_\gamma(1-(1-\alpha)f_\gamma)}$, and $f_\gamma=\alpha d_1/d_2$.

Nevertheless, even when simultaneously $\alpha\neq 1$, $\Psi=\frac{(1-\alpha)d_{1}}{f_{\gamma}(1-(1-\alpha)f_{\gamma})}$, and $ f_{\gamma}=\frac{\alpha d_{1}}{d_{2}}$ the restrictions imposed on $\tilde{\theta}$ prevent an alternative solution with $\tilde{\alpha}=1$ to exist. To see this, recall that the alternative parameterization must satisfy
$\tilde{\Psi} = \frac{\alpha d_{1}-d_{2}f_{\gamma}-\alpha^{2}d_{1}+\alpha d_{2} f_{\gamma}}{f_{\gamma}}>0$, but when $ f_{\gamma}=\frac{\alpha d_{1}}{d_{2}}$ it is easy to verify that $\tilde{\Psi} = 0$, which violates the restrictions imposed on the parameters. Then, an alternative solution with $\tilde{\alpha}=1$ cannot exist when $\alpha\neq 1$.


All in all, when $T=4$ and $\alpha\neq1$, the autoregressive coefficient is globally identified for all parameter values outside the Lebesgue-null set $\Theta^{*}_{(3)}=\{\theta\in\Theta:f_\gamma=\alpha d_1/d_2\}$.

Similarly, the preceding restrictions show that, when $T\geq5$, any admissible parameterization satisfying the observational equivalence condition must have $\tilde{\alpha}=\alpha$. Hence $\alpha$ is globally identified. Thus, provided $T\geq 5$, Theorems \ref{thm:5} and \ref{thm:6} imply that global identification of $\alpha$ is guaranteed for every value of $\alpha$. \end{proof}

\section{Proof of intermediate results:}

\subsection*{Proof of Lemma \ref{lem:cofactor-representation-alpha}:}

\begin{proof}
	We prove the result by induction on $k$, starting from $k=2$. The
	one-by-one case used in the base step is $
	\tilde J_{\{r\},\{c\}}(\alpha,\theta^0)=J^\alpha_{r,c}$, which follows directly from the defining recursion evaluated at
	$\tilde\alpha=\alpha$. First consider $k=2$. Let $
	R=\{r_1,r_2\}, \
	C=\{c_1,c_2\}$
	with $r_1<r_2$ and $c_1<c_2$. Setting $\tilde\alpha=\alpha$ in the recursion defining
	$\tilde J_{R,C}(\tilde\alpha,\theta^0)$, the term multiplied by
	$(\alpha-\tilde\alpha)$ vanishes, and we get
	\[
	\begin{aligned}
		\tilde J_{R,C}(\alpha,\theta^0)
		&=
		\sum_{j=1}^{2}(-1)^{2+j}
		\Big[
		J^\alpha_{r_2,c_j}
		\det\left(\mathbf{M}^{\mathbf{\Omega},1}_{R-r_2,C-c_j}\right)
		+
		\Omega_{r_2,c_j}
		\tilde J_{R-r_2,C-c_j}(\alpha,\theta^0)
		\Big] \\
		&=
		\sum_{j=1}^{2}(-1)^{2+j}
		J^\alpha_{r_2,c_j}
		\det\left(\mathbf{M}^{\mathbf{\Omega},1}_{R-r_2,C-c_j}\right) 	-\Omega_{r_2,c_1}J^\alpha_{r_1,c_2}
		+
		\Omega_{r_2,c_2}J^\alpha_{r_1,c_1}.
	\end{aligned}
	\]
	The first sum is exactly the contribution with $i=2$. The last two terms
	are the contribution with $i=1$, because $
	\det\left(\mathbf{M}^{\mathbf{\Omega},1}_{R-r_1,C-c_1}\right)
	=
	\Omega_{r_2,c_2}$, and
	$\det\left(\mathbf{M}^{\mathbf{\Omega},1}_{R-r_1,C-c_2}\right)
	=
	\Omega_{r_2,c_1}$. Hence
	\[
	\tilde J_{R,C}(\alpha,\theta^0)
	=
	\sum_{i=1}^{2}\sum_{j=1}^{2}
	(-1)^{i+j}
	J^\alpha_{r_i,c_j}
	\det\left(\mathbf{M}^{\mathbf{\Omega},1}_{R-r_i,C-c_j}\right).
	\]
	Thus the claim holds for $k=2$.
	
	Now let $k\ge 3$, and suppose the claim holds for all pairs of ordered row
	and column sets of size $k-1$. We prove it for $
	R=\{r_1,\cdots,r_k\},
	\
	C=\{c_1,\cdots,c_k\}$
	where the indices satisfy $r_1<\cdots<r_{k}$, and $c_1<\cdots<c_{k}$ without loss of generality (see Remark 2 in the paper). Setting $\tilde\alpha=\alpha$ in the recursion defining
	$\tilde J_{R,C}(\tilde\alpha,\theta^0)$, the final term vanishes, and we
	obtain
	\[
	\begin{aligned}
		\tilde J_{R,C}(\alpha,\theta^0)
		&=
		\sum_{j=1}^{k}(-1)^{k+j}
		\Big[
		J^\alpha_{r_k,c_j}
		\det\left(\mathbf{M}^{\mathbf{\Omega},k-1}_{R-r_k,C-c_j}\right)
		+
		\Omega_{r_k,c_j}
		\tilde J_{R-r_k,C-c_j}(\alpha,\theta^0)
		\Big] \\
		&=
		\sum_{j=1}^{k}(-1)^{k+j}
		J^\alpha_{r_k,c_j}
		\det\left(\mathbf{M}^{\mathbf{\Omega},k-1}_{R-r_k,C-c_j}\right) +
		\sum_{j=1}^{k}(-1)^{k+j}
		\Omega_{r_k,c_j}
		\tilde J_{R-r_k,C-c_j}(\alpha,\theta^0).
	\end{aligned}
	\]
	The first sum is already the contribution with $i=k$. It remains to rewrite
	the second sum. Fix $j$. The sets $R-r_k$ and $C-c_j$ both have cardinality $k-1$.
	By the induction hypothesis applied to these smaller sets,
	\[
	\tilde J_{R-r_k,C-c_j}(\alpha,\theta^0)
	=
	\sum_{i=1}^{k-1}\sum_{\ell\neq j}
	(-1)^{i+p_j(\ell)}
	J^\alpha_{r_i,c_\ell}
	\det\left(
	\mathbf{M}^{\mathbf{\Omega},k-2}_{R-r_k-r_i,\ C-c_j-c_\ell}
	\right),
	\]
	where $p_j(\ell)$ is the position of $c_\ell$ in the ordered set
	$C-c_j$:
	\[
	p_j(\ell)
	=
	\begin{cases}
		\ell, & \ell<j,\\
		\ell-1, & \ell>j.
	\end{cases}
	\]
	Substituting this expression into the second sum gives
	\[
	\begin{aligned}
		&\sum_{j=1}^{k}(-1)^{k+j}
		\Omega_{r_k,c_j}
		\tilde J_{R-r_k,C-c_j}(\alpha,\theta^0) =
		\sum_{i=1}^{k-1}\sum_{\ell=1}^{k}
		J^\alpha_{r_i,c_\ell}
		\sum_{j\neq \ell}
		(-1)^{k+j+i+p_j(\ell)}
		\Omega_{r_k,c_j}
		\det\left(
		\mathbf{M}^{\mathbf{\Omega},k-2}_{R-r_k-r_i,\ C-c_j-c_\ell}
		\right).
	\end{aligned}
	\]
	
	Now fix $i<k$ and $\ell$. In the ordered row set $R-r_i$, the row
	$r_k$ has position $k-1$. Let $p_\ell(j)$ denote the position of
	$c_j$ in the ordered set $C-c_\ell$. Thus
	\[
	p_\ell(j)
	=
	\begin{cases}
		j, & j<\ell,\\
		j-1, & j>\ell.
	\end{cases}
	\]
	Expanding $\det\left(\mathbf{M}^{\mathbf{\Omega},k-1}_{R-r_i,C-c_\ell}\right)$ along the row $r_k$, we get
	\[
	\det\left(\mathbf{M}^{\mathbf{\Omega},k-1}_{R-r_i,C-c_\ell}\right)
	=
	\sum_{j\neq \ell}
	(-1)^{k-1+p_\ell(j)}
	\Omega_{r_k,c_j}
	\det\left(
	\mathbf{M}^{\mathbf{\Omega},k-2}_{R-r_k-r_i,\ C-c_j-c_\ell}
	\right).
	\]
	The signs agree after factoring out $(-1)^{i+\ell}$. Indeed, $
	(-1)^{k+j+i+p_j(\ell)}
	=
	(-1)^{i+\ell+k-1+p_\ell(j)}.$ If $j<\ell$, then $p_j(\ell)=\ell-1$ and $p_\ell(j)=j$. If
	$j>\ell$, then $p_j(\ell)=\ell$ and $p_\ell(j)=j-1$. In both cases the
	two exponents have the same parity. Therefore
	\[
	\begin{aligned}
		&\sum_{j\neq \ell}
		(-1)^{k+j+i+p_j(\ell)}
		\Omega_{r_k,c_j}
		\det\left(
		\mathbf{M}^{\mathbf{\Omega},k-2}_{R-r_k-r_i,\ C-c_j-c_\ell}
		\right) =
		(-1)^{i+\ell}
		\det\left(\mathbf{M}^{\mathbf{\Omega},k-1}_{R-r_i,C-c_\ell}\right).
	\end{aligned}
	\]
	It follows that
	\[
	\sum_{j=1}^{k}(-1)^{k+j}
	\Omega_{r_k,c_j}
	\tilde J_{R-r_k,C-c_j}(\alpha,\theta^0)
	=
	\sum_{i=1}^{k-1}\sum_{\ell=1}^{k}
	(-1)^{i+\ell}
	J^\alpha_{r_i,c_\ell}
	\det\left(\mathbf{M}^{\mathbf{\Omega},k-1}_{R-r_i,C-c_\ell}\right).
	\]
	Combining this with the first sum, which is the contribution with $i=k$,
	yields
	\[
	\tilde J_{R,C}(\alpha,\theta^0)
	=
	\sum_{i=1}^{k}\sum_{j=1}^{k}
	(-1)^{i+j}
	J^\alpha_{r_i,c_j}
	\det\left(\mathbf{M}^{\mathbf{\Omega},k-1}_{R-r_i,C-c_j}\right).
	\]
	This completes the induction.
\end{proof}

\subsection*{Proof of Lemma \ref{lem:span-signed-minors}:}

\begin{proof}
	The argument in this lemma is deterministic. The entries of the non-normalized
	rows of $\mathbf{F}$, collected in $\mathbf{F}_2$, are treated as unrestricted coordinates
	ranging over $\mathbb R^{(T-\bar r)\bar r}$. They are not treated as realized
	random draws throughout this proof.
	
	Now, suppose the claim of the Lemma is not true. Then there exists a nonzero vector
	$b=(b_1,\ldots,b_{\bar{r}+1})'$ such that $
	b'u_S(\mathbf{F})=0$
	for every value of $\operatorname{vec}(\mathbf{F}_2)\in\mathbb R^{(T-\bar r)\bar r}$. Indeed, let
	\[
	\mathbf{G}=
	\begin{pmatrix}
		b_1 & \mathbf{F}_{s_1}\\
		\vdots & \vdots\\
		b_{\bar{r}+1} & \mathbf{F}_{s_{\bar{r}+1}}
	\end{pmatrix}.
	\]
	Expanding $\det(\mathbf{G})$ along the first column gives
	\[
	\det(\mathbf{G})
	=
	\sum_{i=1}^{\bar r+1}
	(-1)^{i+1}b_i\det(\mathbf{F}_{S-s_i}).
	\]
	On the other hand, by the definition of $u_S(\mathbf{F})$,
	\[
	b'u_S(\mathbf{F})
	=
	\sum_{i=1}^{\bar r+1}
	b_i(-1)^i\det(\mathbf{F}_{S-s_i}).
	\]
	Therefore $
	\det(\mathbf{G})=-b'u_S(\mathbf{F})$. Thus $b'u_S(\mathbf{F})=0$ is equivalent to $
	\det(\mathbf{G})=0$ for every value of $\operatorname{vec}(\mathbf{F}_2)\in\mathbb R^{(T-\bar r)\bar r}$. Now, define the set of indices:
	\[
	I=\{i:s_i\le \bar{r}\},
	\qquad
	J=\{1,\ldots,\bar{r}+1\}\setminus I.
	\]
	The rows indexed by $I$ are fixed identity rows, while the rows indexed by $J$ are non-normalized rows (see Assumption \ref{as:2}). Since $S$ has $\bar{r}+1$ elements but there are only $\bar{r}$
	normalized rows, $J\neq\emptyset$ (meaning that in $u_S(\mathbf{F})$ there is at least one non-fixed factor).
	
	Here $i$ and $j$ index positions in the ordered set $S$, whereas
	$s_i$ and $s_j$ are the corresponding row indices in the original matrix
	$\mathbf{F}$. Thus $I$ and $J$ are subsets of $\{1,\ldots,\bar r+1\}$, while
	$\{s_i:i\in I\}$ and $\{s_j:j\in J\}$ are subsets of the original row-index
	set. In particular, for $i\in I$, the row $\mathbf{F}_{s_i}$ is normalized, while
	for $j\in J$, the row $\mathbf{F}_{s_j}$ is non-normalized.
	
	For each $j\in J$, write the non-normalized row as
	\[
	\mathbf{F}_{s_j}=(x_{j,1},\ldots,x_{j,\bar{r}}),
	\]
	where the coordinates $x_{j,1},\ldots,x_{j,\bar r}$ are unrestricted coordinates under the normalization. For each $i\in I$, the normalized row satisfies $
	\mathbf{F}_{s_i}=e_{s_i}'$.
	
	Thus row $i$ in $\mathbf{G}$ has the form $(b_i,e_{s_i}')$,
	with a $1$ in factor column $s_i$ and zeros in the other factor columns. Now define $\mathbf{B}$ to be the $(\bar r+1)\times(\bar r+1)$ matrix obtained after
	eliminating, from every non-normalized row $j\in J$, the entries in the normalized factor columns.
	Formally, for $i\in I$, define
	\[
	\mathbf{B}_{i,\cdot}
	=
	(b_i,e_{s_i}'),
	\]
	and, for $j\in J$, define $
	\mathbf{B}_{j,\cdot}
	=
	(b_j,\mathbf{F}_{s_j})
	-
	\sum_{i\in I}x_{j,s_i}(b_i,e_{s_i}')$. Equivalently, for $j\in J$,
	\[
	\mathbf{B}_{j,\cdot}
	=
	\left(
	b_j-\sum_{i\in I}x_{j,s_i}b_i,\,
	\mathbf{F}_{s_j}-\sum_{i\in I}x_{j,s_i}e_{s_i}'
	\right).
	\]
	Since $\mathbf{B}$ is obtained from $\mathbf{G}$ by adding multiples of
	rows to other rows, the determinant is unchanged. Therefore $
	\det(\mathbf{B})
	=
	\det
	(\mathbf{G})$. For $j\in J$, the first entry of row $j$ of $\mathbf{B}$ is
	\[
	B_{j,1}
	\equiv
	\tilde b_j
	=
	b_j-\sum_{i\in I}x_{j,s_i}b_i.
	\]
	For the remaining entries, recall that column $1+q$ of $\mathbf{B}$ corresponds
	to factor column $q$, $q=1,\ldots,\bar r$. Hence, for $j\in J$,
	\[
	B_{j,1+q}
	=
	x_{j,q}
	-
	\sum_{i\in I}x_{j,s_i}\mathbf 1\{q=s_i\}
	=
	\begin{cases}
		0, & q\in\{s_i:i\in I\},\\[0.3em]
		x_{j,q}, & q\notin\{s_i:i\in I\}.
	\end{cases}
	\]
	Thus the entries of row $j$ in the normalized factor columns are zero, while
	the entries in the remaining factor columns are unchanged.
	
	The rows indexed by $I$ still contain identity pivots in the factor columns
	$s_i$, $i\in I$, while the rows indexed by $J$ have zeros in those
	columns. We now expand explicitly along these identity pivots. Let $m=|I|$, and write $
	I=\{i_1,\cdots,i_m\}, \
	J=\{j_1,\cdots,j_{|J|}\}$,
	where we assume the sets are ordered such that $i_1<\cdots<i_m$ and $j_1<\cdots<j_{|J|}$ without any loss of generality. The first column of $\mathbf{B}$ is the $b$-column, and factor column $q$ is column $1+q$ of $\mathbf{B}$.
	
	Define the set of pivot columns in the augmented
	matrix by $
	P
	=
	\{1+s_i:i\in I\}
	=
	\{1+s_{i_1},\cdots,1+s_{i_m}\}$, with the elements ordered increasingly. After the row operations, the submatrix of $\mathbf{B}$ with rows $I$ and columns
	$P$ is the identity matrix, while the submatrix with rows $J$ and columns
	$P$ is zero:
	\[
	\mathbf{B}_{I,P}=\mathbf{I}_m,\qquad \mathbf{B}_{J,P}=0,
	\]
	where $m=|I|$, $\mathbf{B}_{I,P}$ is $m\times m$, and $\mathbf{B}_{J,P}$ is $|J|\times m$. By the Laplace expansion by complementary minors along the columns $P$,
	\[
	\det(\mathbf{B})
	=
	\sum_{\substack{\mathcal{T}\subseteq \{1,\ldots,\bar r+1\}\\ |\mathcal{T}|=m}}
	(-1)^{\sum_{t\in \mathcal{T}}t+\sum_{p\in P}p}
	\det(\mathbf{B}_{\mathcal{T},P})
	\det(\mathbf{B}_{\mathcal{T}^c,P^c}),
	\]
	where, for any $\mathcal{T}\subseteq\{1,\ldots,\bar r+1\}$, we have
	$\mathcal{T}^c=\{1,\ldots,\bar r+1\}\setminus \mathcal{T},
	\
	P^c=\{1,\ldots,\bar r+1\}\setminus P$.
	Observe that for any $\mathcal{T}\subseteq\{1,\ldots,\bar r+1\}$ with $|\mathcal{T}|=m$, the minor $\det(\mathbf{B}_{\mathcal{T},P})$ vanishes whenever $\mathcal{T}$ contains at least one element of $J$, because the corresponding rows of $\mathbf{B}_{\mathcal{T},P}$ are zero on the columns $P$ (since $\mathbf{B}_{J,P}=0$). The only subset of size $m$ that does not intersect $J$ is $\mathcal{T}=I$ itself; hence all terms in the Laplace sum are zero except the one with $\mathcal{T}=I$. Hence
	\[
	\det(\mathbf{B})
	=
	(-1)^{\sum_{i\in I}i+\sum_{i\in I}(1+s_i)}
	\det(\mathbf{B}_{I,P})
	\det(\mathbf{B}_{J,P^c})
	=
	(-1)^{\sum_{i\in I}i+\sum_{i\in I}(1+s_i)}
	\det(\mathbf{B}_{J,P^c}).
	\]
	Since $\det(\mathbf{B}_{I,P})=1$, the original determinant equals, up to a sign, $\det(\mathbf{B}_{J,P^c})$.
	Now define the remaining factor-column index set $
	H
	=
	\{1,\ldots,\bar r\}\setminus\{s_i:i\in I\}$.
	
	The columns $P^c$ consist of the first augmented column, namely the
	$\tilde b$-column, together with the factor columns in $H$. Therefore
	\[
	\mathbf{B}_{J,P^c}
	=
	\begin{pmatrix}
		\tilde b & \mathbf{Z}
	\end{pmatrix},
	\]
	where $
	\tilde b=(\tilde b_j)_{j\in J}\in\mathbb R^{|J|}$, and
	$\mathbf{Z}
	=
	\left(
	x_{j,\ell}
	\right)_{
		j\in J,\
		\ell\in H
	}$. The rows of $\mathbf{Z}$ are ordered according to the increasing order of $J$, and
	the columns of $\mathbf{Z}$ are ordered according to the increasing order of $H$.
	Since there are $|J|$ remaining rows and $
	|H|=\bar r-|I|=|J|-1$, we have $
	\mathbf{Z}\in\mathbb R^{|J|\times(|J|-1)}$.
	
	The set $H$ may be empty. In that case $|J|=1$, $\mathbf{Z}$ is a
	$1\times 0$ matrix, and $(\tilde b \; \mathbf{Z})$ is the $1\times 1$ matrix
	with entry $\tilde b_j$, where $J=\{j\}$.
	
	The entries of $\mathbf{Z}$ depend only on the coordinates $x_{j,\ell}$ with
	$j\in J$ and $\ell\in H$. These coordinates are unrestricted under the
	normalization imposed in Assumption \ref{as:2}. Hence, as the non-normalized rows vary over their admissible
	parameter space, $\mathbf{Z}$ ranges over $
	\mathbb R^{|J|\times(|J|-1)}$.
	
	Now, suppose first that $b_i\neq0$ for some $i\in I$. Fix such an index
	$i_0\in I$. Let $w=(w_j)_{j\in J}\in\mathbb R^{|J|}$ be an arbitrary vector. Holding
	all coordinates other than $\{x_{j,s_{i_0}}:j\in J\}$ fixed, set
	\[
	x_{j,s_{i_0}}
	=
	\frac{
		b_j-w_j-\sum_{i\in I,\ i\neq i_0}x_{j,s_i}b_i
	}{
		b_{i_0}
	},
	\qquad j\in J.
	\]
	Then
	\[
	\tilde b_j
	=
	b_j-\sum_{i\in I}x_{j,s_i}b_i,
	\qquad j\in J.
	\]
	Thus, as $\operatorname{vec}(\mathbf{F}_2)$ ranges over
	$\mathbb R^{(T-\bar r)\bar r}$, the vector $\tilde b$ can attain any value
	in $\mathbb R^{|J|}$.
	
	The coordinates used above to set $\tilde b$ do not enter $\mathbf{Z}$. Indeed,
	$\mathbf{Z}$ uses only factor columns in $
	H=\{1,\ldots,\bar r\}\setminus\{s_i:i\in I\}$, whereas $s_{i_0}\notin H$. Hence the choice of
	$\{x_{j,s_{i_0}}:j\in J\}$ imposes no restriction on the coordinates entering
	$\mathbf{Z}$. Since the coordinates entering $\mathbf{Z}$ range over
	$\mathbb R^{|J|\times(|J|-1)}$, there exists a value of
	$\operatorname{vec}(\mathbf{F}_2)\in\mathbb R^{(T-\bar r)\bar r}$ such that
	$\tilde b$ and the columns of $\mathbf{Z}$ form a basis of $\mathbb R^{|J|}$.
	
	For this value of $\operatorname{vec}(\mathbf{F}_2)$, $
	\det\begin{pmatrix}
		\tilde b & \mathbf{Z}
	\end{pmatrix}
	\neq0$, contradicting the assumption that $\det(\mathbf{G})=0$ for every value of
	$\operatorname{vec}(\mathbf{F}_2)\in\mathbb R^{(T-\bar r)\bar r}$. Hence this case is
	impossible, so $b_i=0$ for all $i\in I$.
	
	With $b_i=0$ for all $i\in I$, we have
	\[
	\tilde b_j
	=
	b_j-\sum_{i\in I}x_{j,s_i}b_i
	=
	b_j,
	\qquad j\in J.
	\]
	Thus $
	\tilde b=(b_j)_{j\in J}$. If this vector were nonzero, then, because $\mathbf{Z}$ ranges over
	$\mathbb R^{|J|\times(|J|-1)}$, there exists a value of the coordinates entering $\mathbf{Z}$ such that its $|J|-1$ columns complete the nonzero vector $\tilde b$ to a basis of $\mathbb R^{|J|}$. This would again imply $
	\det\begin{pmatrix}
		\tilde b & \mathbf{Z}
	\end{pmatrix}
	\neq0$, a contradiction. Therefore $b_j=0$ for all $j\in J$.
	
	We have shown that $b_i=0$ for all $i\in I$ and $b_j=0$ for all
	$j\in J$. Since $I\cup J=\{1,\ldots,\bar{r}+1\}$, this implies $b=0$,
	contradicting the choice of $b\neq0$.
	
	Then, we have shown that $
	\operatorname{span}
	\left\{
	u_S(\mathbf{F}): \operatorname{vec}(\mathbf{F}_2)\in\mathbb R^{(T-\bar r)\bar r}
	\right\}
	=
	\mathbb R^{\bar r+1}$.\end{proof}

\subsection*{Proof of Lemma \ref{lem:bilinear-nonzero-polynomial}}

\begin{proof}
	The argument is deterministic. The unrestricted entries of $\mathbf{F}_2$ are treated
	as coordinates ranging over their admissible parameter space, not as realized
	random draws. First, $u_R(\mathbf{F})'\mathbf{A}\,u_C(\mathbf{F})$ is a polynomial in the unrestricted entries of
	$\mathbf{F}_2$, because each component of $u_R(\mathbf{F})$ and $u_C(\mathbf{F})$ is a determinant
	of a submatrix of $\mathbf{F}$, and therefore is polynomial in the entries of $\mathbf{F}$.
	The normalized rows are fixed by Assumption \ref{as:2}, so the only variable
	entries are the unrestricted entries $\operatorname{vec}\left(\mathbf{F}_2\right)$.
	
	We now show that this polynomial is not identically zero. Suppose, toward a
	contradiction, that $
	u_R(\mathbf{F})'\mathbf{A}\,u_C(\mathbf{F})$ were the zero polynomial in $\mathbf{F}_2$. Then $
	u_R(\mathbf{F})'\mathbf{A}\,u_C(\mathbf{F})=0$ for every admissible deterministic assignment of those unrestricted entries. Because $R\cap C=\emptyset$, no row of $\mathbf F$ that enters any
	component of $u_R(\mathbf F)$ enters any component of $u_C(\mathbf F)$.
	Thus the unrestricted coordinates of $\mathbf F_2$ that affect
	$u_R(\mathbf F)$ are disjoint from those that affect $u_C(\mathbf F)$.
	Consequently, any attainable value of $u_R(\mathbf F)$ can be combined with
	any attainable value of $u_C(\mathbf F)$ in a single admissible assignment of
	$\mathbf F_2$.
	
	Hence, for every attainable value $u$ of $u_R(\mathbf F)$
	and every attainable value $v$ of $u_C(\mathbf F)$, there exists an
	admissible deterministic assignment of the unrestricted entries of
	$\mathbf F_2$ such that we can set
	$u_R(\mathbf F)=u$, and
	$u_C(\mathbf F)=v$.
	
	By the contradiction assumption, this implies $
	u'\mathbf A v=0$ for every attainable value $u$ of $u_R(\mathbf F)$ and every attainable
	value $v$ of $u_C(\mathbf F)$. By Lemma~\ref{lem:span-signed-minors}, the attainable values of $u_R(\mathbf{F})$
	span $\mathbb R^k$, and the attainable values of $u_C(\mathbf{F})$ also span
	$\mathbb R^k$. Therefore, for arbitrary $x,y\in\mathbb R^k$, we can write
	\[
	x=\sum_{a=1}^{m}\lambda_a u_a,
	\qquad
	y=\sum_{b=1}^{n}\mu_b v_b,
	\]
	where each $u_a$ is an attainable value of $u_R(\mathbf{F})$ and each $v_b$ is an
	attainable value of $u_C(\mathbf{F})$. By bilinearity,
	\[
	x'\mathbf{A}y
	=
	\sum_{a=1}^{m}\sum_{b=1}^{n}
	\lambda_a\mu_b\,u_a'\mathbf{A}v_b.
	\]
	Each term $u_a'\mathbf{A}v_b$ is zero by the preceding paragraph. Hence
	\[
	x'\mathbf{A}y=0
	\qquad
	\text{for all }x,y\in\mathbb R^k.
	\]
	Therefore $\mathbf{A}=0$, contradicting the assumption that $\mathbf{A}\neq0$. Hence $
	u_R(\mathbf{F})'\mathbf{A}\,u_C(\mathbf{F})$ is a nonzero polynomial in the unrestricted entries of $\mathbf{F}_2$.
\end{proof}

\end{document}